\definecolor{GmailBlue}{RGB}{42, 93, 176}
\newtheorem{theorem}{Theorem}[section]
\newtheorem{lemma}[theorem]{Lemma}
\newtheorem{cor}[theorem]{Corollary}
\newtheorem{prop}[theorem]{Proposition}
\theoremstyle{definition}
\newtheorem{definition}{Definition}
\newtheorem{assumption}{Assumption}
\newtheorem{remark}{Remark}
\begin{document}

\title{Inference in Regression Discontinuity Designs under Monotonicity\thanks{We thank our advisors Donald Andrews and Timothy Armstrong for continuous guidance. We are also grateful to participants at the Yale Econometrics Prospectus Lunch for insightful discussions.}}

\author{Koohyun Kwon\thanks{Department of Economics, Yale University, \texttt{koohyun.kwon@yale.edu}} \and Soonwoo Kwon\thanks{Department of Economics, Yale University, \texttt{soonwoo.kwon@yale.edu}} }
\date{November 23, 2020}

\maketitle

\begin{abstract}
  We provide an inference procedure for the sharp regression discontinuity
  design (RDD) under monotonicity, with possibly multiple running
  variables. Specifically, we consider the case where the true regression
  function is monotone with respect to (all or some of) the running variables
  and assumed to lie in a Lipschitz smoothness class. Such a monotonicity
  condition is natural in many empirical contexts, and the Lipschitz constant
  has an intuitive interpretation.  We propose a minimax two-sided
  confidence interval (CI) and an adaptive one-sided CI. For the two-sided CI,
  the researcher is required to choose a Lipschitz constant where she believes
  the true regression function to lie in. This is the only tuning parameter, and the resulting CI has uniform coverage and obtains
  the minimax optimal length. The one-sided CI can be constructed to maintain
  coverage over all monotone functions, providing maximum credibility in terms
  of the choice of the Lipschitz constant.  Moreover, the monotonicity makes it
  possible for the (excess) length of the CI to adapt to the true Lipschitz
  constant of the unknown regression function. Overall, the
  proposed procedures make it easy to see under what conditions on the
  underlying regression function the given estimates are significant, which can
  add more transparency to research using RDD methods.
\end{abstract}

\newpage

\section{Introduction}

Recently, there has been growing interest in honest and minimax optimal inference
methods in regression discontinuity designs
(\citealp{armstrong2018optimal}; \citealp{armstrong2020simple};
\citealp{imbens2019optimized}; \citealp{kolesar2018inference};
\citealp{noack2019bias}). This approach requires a researcher to specify a
function space where she believes the regression function to lie in, and the
inference procedures follow once this function space
is chosen. The methods proposed in the literature essentially use bounds on the
second derivatives to specify the function space. This is motivated by the
popularity of local linear regression methods in practice, which is often
justified by imposing local bounds on the second derivative of the regression
function. However, choosing a reasonable bound on the second derivative can be
difficult in practice.

We address this concern by considering the problem of conducting inference for
the sharp regression discontinuity (RD) parameter under monotonicity and a Lipschitz condition. Specifically, the regression function is assumed to be
monotone in all or some of the running variables, with a  bounded first derivative. Monotonicity naturally
arises in many regression discontinuity design (RDD) contexts, which is well documented by
\cite{babii2019isotonic}. The Lipschitz constant, or the bound on the first derivative, has an intuitive
interpretation, since this is a bound on how much the outcome can change if the
running variable is changed by a single unit. Hence, if the researcher reports
the inference results along with the Lipschitz constant used to run the proposed
procedure, it is easy to see under what (interpretable) conditions on the
regression function the researcher has obtained such results. We exploit the
combination of the monotonicity and the Lipschitz continuity restrictions to
construct a confidence interval (CI) which is efficient and maintains correct
coverage uniformly over a potentially large and more interpretable function
space.

We provide a minimax two-sided CI and an adaptive one-sided CI. For the
two-sided CI, the researcher is required to choose a bound on the first
derivative of the true regression function. The bound is the only tuning parameter, and the resulting
CI has uniform coverage and obtains the minimax optimal length over the class of
regression functions under consideration. Moreover, by exploiting monotonicity, the CI has a significantly shorter length than minimax CIs constructed under no such
shape restriction. To
our knowledge, this paper is the first to consider a minimax optimal procedure
when the regression function is assumed to be monotone.

The one-sided CI can be constructed to maintain coverage over all monotone
functions, providing maximum credibility in terms of the choice of the Lipschitz
constant. Due to monotonicity, the resulting CI still has finite excess length
as long as the true regression function has a bounded first derivative, where this bound
is allowed to be arbitrarily large and unknown.\footnote{This requires the
  regression function to be monotone in all the running variables.} This is in
contrast with minimax CIs constructed without the monotonicity condition, in
which case the length must be infinity to cover all functions. Moreover, our proposed one-sided CI adapts to the underlying smoothness class, resulting in a shorter CI when the true regression function has a smaller first derivative bound. This enables the
researcher to conduct non-conservative inference for the RD parameter, at the
same time maintaining honest coverage over a significantly larger space of
regression functions. The cost of such adaptation is that we can only construct either a one-sided lower or upper CI depending on the treatment allocation rule, but not both. We characterize this relationship between the treatment allocation rules and the direction of the adaptive one-sided CI we can construct.  

Our approach, especially the two-sided CI, is closely related to the literature
on honest inference in RDDs. By working with second derivative bounds, the
inference procedures in the literature are based on local linear regression estimators, which is in
line with the more conventional methods used in the RDD setting. However, it is
rather difficult to evaluate the validity of a second derivative bound specified
by a researcher. While it seems rather innocuous to ignore regression functions
with kinks, thus with infinite second derivatives, it is not clear how large or
how small the second derivative should be to be considered as ``too large'' or
``too small''. For this reason, the literature often recommends a sensitivity
analysis to strengthen the credibility of the inference result. However, the
credibility gain from the sensitivity analysis is limited when the smoothness
parameter is not easy to interpret. For example, it is hard to judge whether the
maximum value considered in the sensitivity analysis is large enough or not.

In contrast, the bound on the first derivative can be chosen based on more
straightforward empirical reasoning, since the first derivative has an intuitive
interpretation of a partial effect. For example, if an outcome variable $y$ and
a running variable $x$ are current and previous test scores, the class of
regression functions whose values increase no more than one standard deviation
of $y$ in response to 1/10 standard deviation increase in $x$ can be regarded as
reasonable. \cite{armstrong2020finite} take a similar approach of specifying
Lipschitz constants in their empirical application, in an inference problem for average treatment effects under a different setting. By imposing a bound on the first
derivative, our procedure is based on a Nadaraya--Watson type
estimator, with the boundary bias correctly accounted for.

The possibility of forming an adaptive one-sided CI in RDD settings under
monotonicity was first considered in \cite{armstrong2015adaptive} and
\cite{armstrong2018simple}. The difference is that these papers are concerned
with adapting to H\"older exponents $\beta \in (0, 1]$ while fixing the
Lipschitz constant. Here, we fix $\beta = 1$ and adapt to the Lipschitz
constant. 
When $\beta = 1$, what governs the performance of an adaptive CI is the size of
the constant multiplied to the rate of convergence, not the rate itself. This is
in contrast to the setting considered in \cite{armstrong2015adaptive} and
\cite{armstrong2018simple}, which primarily discuss rate-adaptation. In this
paper, we provide a procedure which makes the magnitude of the multiplying
constant reasonably small.

\cite{babii2019isotonic} also consider an
RDD setting with a monotone regression function. They introduce an inference
procedure based on an isotonic regression estimator, conveying a similar message
to ours that the monotonicity restriction can lead to a more efficient inference
procedure. Our approach differs from theirs in several key aspects: 1) we explicitly focus on maintaining uniform
coverage and optimizing the length of the CI, 2) we consider the Lipschitz class while they consider the H\"older class with exponent $\beta > 1/2$, and 3) our procedure can be used in settings with multiple running variables.\footnote{In fact, our procedure can be easily adapted to the case where $\beta \in (0,1]$, using the results provided in \cite{kwonkwon2019regpoint}. We focus on $\beta = 1$ mainly due to the interpretability of the Lipschitz constant, and because assuming bounded first derivatives seems rather innocuous in empirical contexts.}  

The general treatment of the dimension of the running variables in this paper
allows a researcher to use our procedure in the setting with more than one
running variables, referred to as the multi-score RDD by
\cite{cattaneo2020practical}. This setting has also been considered by
\cite{imbens2019optimized}. Our paper is the first to consider the space of
monotone regression functions in this setting, and the gain from monotonicity is
especially significant for the case with multiple running variables, as we show
later in this paper. We allow the regression function to be monotone with
respect to only some of the variables as well, which broadens the scope of application
of our procedure.

The rest of the paper is organized as follows. Section \ref{sec: setting} describes the setting and the form of optimal kernels and bandwidths under this setting. Section \ref{sec: minimax} introduces the minimax two-sided CI, and Section \ref{sec: adpt CI} the adaptive one-sided CI. Section \ref{sec: Monte Carlo} provides results from simulation studies to demonstrate the efficacy of the proposed procedures. Section \ref{sec: emp} revisits the empirical analysis by \cite{lee2008randomized}. Section \ref{sec: conc} concludes by discussing possible extensions.

\section{Setting}
\label{sec: setting}
We observe i.i.d. observations
$\left\{ \left(y_{i},x_{i} \right)\right\} _{i=1}^{n}$, where
$y_{i}\in \mathbb{R}$ is an outcome variable, and
$x_{i}\ \mathcal{\in X}\subset\mathbb{R}^{d}$ is either a scalar or a vector of
running variables. We take $\mathcal{X}$ to be a hyperrectangle in
$\mathbb{R}^{d}$, and let $ \mathcal{X}_{t} $ and $\mathcal{X}_{c}$ be connected
sets with nonempty interiors that form a partition of $\mathcal{X}$. The subscripts $t$ and $c$
correspond to ``treatment'' and ``control'' groups, respectively, throughout this paper. We write
$x_{i} \in \mathcal{X}_{t}$ and $x_{i} \in \mathcal{X}_{c}$ to indicate that
individual $i$ belongs to the treatment and the control groups,
respectively. When $d=1$, our setting corresponds to the standard sharp RD
design with a single cutoff point.

Let $\mathbbm{1}(\cdot)$ denote the indicator function. Then, our setting can be
written as a nonparametric regression model
\begin{align}
  \label{eq: RegModel}
  y_{i}=f(x_{i})+u_{i}, \hspace{15pt}
  f(x) =f_{t}(x) \mathbbm{1} \left\{ x \in \mathcal{X}_{t} \right\} +f_{c}(x) \mathbbm{1} \left\{ x \in \mathcal{X}_{c} \right\},
\end{align}
where the random variable $u_{i}$ is independent across $i$. Here, $f_{t}$ and
$f_{c}$ denote mean outcome functions for the treated and the control groups,
respectively. So $f$ corresponds to the mean outcome function for the observed
outcome, which we refer to as the ``regression function'' throughout the paper.

Our parameter of interest is a treatment effect parameter at a boundary point, $ \in \mathcal{B} := \partial \mathcal{X}_t \cap \partial \mathcal{X}_c$,
defined as
\begin{eqnarray*}
  L_{RD} f & := & \lim_{x \to x_0, x \in \mathcal{X}_{t}}f(x)-
                  \lim_{x \to x_0, x \in \mathcal{X}_{c}}f(x).
\end{eqnarray*}
When $d=1$, $L_{RD} f$ corresponds to the conventional sharp RD parameter. On
the other hand, when $d>1$, $L_{RD} f$ is the sharp RD parameter at a particular
cutoff point in $\mathcal{B}$. This type of parameter is also considered in
\citet{imbens2019optimized} and \citet{cattaneo2020practical} when they analyze
the RD design with multiple running variables. Without loss of generality, we
set $x_0 = 0$ since we can always relabel $\tilde{x}_i = x_i - x_0$.

\begin{remark}
[More than two treatment status]
Our framework can also handle the setting where individuals are assigned to more
than two treatments based on the values of the multiple running variables, which
was analyzed by \cite{papay2011extending}. For example, when $d = 2$ with two
cutoff points $c_1, c_2$, four different treatment status are possible depending
on whether $x_i \geq c_i$ for $i=1,2$. Let $\{\mathcal{X}_j\}_{j=1}^4$ be the
partition of $\mathcal{X}$, where $j = 1,...,4$ index the different treatment
status. Then, if we are interested in the treatment effect regarding the first
and the second treatments, we can let
$\widetilde{\mathcal{X}} = \mathcal{X}_1 \bigcup \mathcal{X}_2$, and apply our
method with $\widetilde{\mathcal{X}}$ instead of $\mathcal{X}$, with
$\mathcal{X}_t = \mathcal{X}_1$ and $\mathcal{X}_c = \mathcal{X}_2$.
\end{remark}
\subsection{Function space}

We consider the framework where a researcher is willing to specify some $C > 0$
such that
\begin{align}
  \label{eq: Lipschitz}
  \left|f_q(x)-f_q(z)\right|\leq C \cdot ||x - z|| \text{ for all }x,z\in \mathcal{X}_q,\text{ for each } q \in \{t,c\},
\end{align}
for some norm $|| \cdot ||$ on $\mathbb{R}^d$. In other words, the mean
potential outcome functions are Lipschitz continuous with a \emph{Lipschitz constant}
$C$, with respect to the norm $|| \cdot ||$. While the norm can be understood as
an absolute value when $d = 1$, the choice of the norm can give different
interpretations when $d > 1$. We allow a general class of norms, only requiring that $||x||$ is increasing in the absolute value of each element of $x$. We say the regression function
$f$ defined in (\ref{eq: RegModel}) has a Lipschitz constant $C$ if $f_t$ and
$f_c$ satisfy (\ref{eq: Lipschitz}).

In addition to the Lipschitz continuity, the researcher assumes that the mean
potential outcome functions are monotone with respect to the running
variables. Formally, letting $z_{(s)}$ denote the $s$th component of the running
variable $z \in \mathbb{R}^d$, and letting $\mathcal{V} \subseteq \{1,...,d\}$
be an index set for monotone variables, the researcher assumes that
\begin{align}
  \label{eq: monotone}
  f_q(x) \geq f_q(z) \text{ if } x_{(s)} \geq z_{(s)} \text{ for all } s \in \mathcal{V} \text{ and } x_{(t)} = z_{(t)} \text{ for all } t \notin \mathcal{V},
\end{align}
for all $q \in \{t, c\}$.  We use $\mathcal{F}(C)$ to denote the space of
functions on $\mathbb{R}^d$ which satisfy (\ref{eq: Lipschitz}) and (\ref{eq:
  monotone}) separately on $\mathcal{X}_t$ and $\mathcal{X}_c$.

We discuss the practical implications of our framework. First, there are
abundant settings where such a monotonicity condition is reasonable. See
Appendix \ref{sec: multi RD monotone} of our paper as well as Appendix A.3 of \cite{babii2019isotonic}
for examples of RDDs with monotone running variables.
One reason for the prevalence of monotonicity is due to the nature of policy
design. For example, students with lower test scores are assigned to summer
schools since policymakers are worried that students with lower test scores will
show lower academic achievement in the future---they believe the average future
academic achievement (an outcome variable) is monotone in current test scores
(running variables).

Next, our framework with Lipschitz continuity differs from the previous
approaches specifying a bound on the second derivative. For example,
\cite{armstrong2018optimal} consider the following locally smooth function
space:
\begin{align*}
  f_t, f_c \in \mathcal{F}_{T,p}
  :=
  \Big\{
  g:\ \Big\lvert
  g(x) - \sum\nolimits_{j = 0}^p g^{(j)}(0)x^j / j! 
  \Big\rvert \leq C |x|^p \ \forall x \in \mathcal{X}
 \Big\},
\end{align*}
and set $p = 2$ in their empirical application. Similarly,
\cite{imbens2019optimized} impose a global smoothness assumption that
$\left \lVert \nabla^2 f_q \right \rVert \leq C$ for $q \in \{t, c\}$, where
$\left \lVert \nabla^2 f_q \right \rVert$ denotes the operator norm of the
Hessian matrix. In contrary, we consider situations where a researcher has a
belief on the size of the first, rather than the second, derivative.

The main advantage of working with the first derivative is the interpretability
of the function space. The function space over which the coverage is uniform
should be easy to interpret, in the sense that the researcher herself or a
policymaker analyzing the inference procedure can evaluate whether the functions
not belonging to the function space can be safely disregarded as ``extreme''.

For this purpose, the size of the first derivative provides a reasonable
criterion. To be concrete, let us consider \cite{lee2008randomized} analyzing
the effect of the incumbency on the election outcome, where the outcome variable
is the difference in the percentage of votes between two parties in the current
election, and the running variable is the same quantity in the previous
election. Then, a mean potential outcome function whose maximum slope is as
large as $C = 50$ seems unreasonable---this roughly implies that a very small
increase in the vote percentage difference in an election, say 0.1\%, predicts a
large increase in the vote percentage difference in a consequent election,
5\%. Similarly, if a researcher presents a CI for the incumbency effect
parameter which has valid coverage over a space of functions with their first
derivatives bounded by $C = 0.5$, the policymaker evaluating the analysis might
find the function space too restrictive.

In comparison, it seems relatively more difficult to evaluate the validity of a
given bound on the second derivative. Previous papers have proposed heuristic
arguments to set the bound on the second derivative. For example,
\cite{armstrong2018optimal} choose the smoothness parameter so that the
reduction in the prediction MSE from using the true conditional mean rather than
its Taylor approximation is not too large. While this gives an alternative
interpretation of their smoothness constant, the prediction MSE does not have an
interpretation which can be connected to empirical examples being considered. \cite{imbens2019optimized} suggests
estimating the curvatures of $f_t$ and $f_c$ using quadratic functions and
multiplying a constant such as 2 or 3 to the estimates, but we can expect that
this procedure would not yield uniform coverage without further restrictions on
the function space, as pointed out in their paper. \cite{armstrong2020simple}
formally derive an additional condition on the function space which enables a
data-driven estimation of the smoothness parameter, but they warn that this
additional assumption may be difficult to justify. Instead of setting a single
bound, one may choose to conduct a sensitivity analysis, which is recommended by
\cite{armstrong2018optimal} and \cite{imbens2019optimized}. However, a
sensitivity analysis is more meaningful when the smoothness parameter the
researcher is varying has an intuitive meaning.

A possible drawback of having to specify a Lipschitz constant is that our
procedure does not ensure coverage when the mean potential functions are linear
or close to linear with slopes larger than the smoothness constant set a
priori. For the case of the minimax two-sided CI, we can view this as a
price we pay by maintaining validity for functions with arbitrarily large second
derivatives. On the other hand, our adaptive procedure
can be used to construct a one-sided CI which adapts to the degree of
Lipschitz smoothness. The adaptive procedure enables the researcher to set a very large value of the smoothness
parameter or even set it to infinity (so that the coverage is over all monotone
functions) and to obtain a shorter CI if the true regression function has a smaller first derivative bound.  This is possible due to the monotonicity assumption, which is
plausible in many RDD applications.

%
\begin{remark} [Lipschitz continuity under general dimension]
  \label{rem: multi-dim Lip}
  When $d > 1$, it can be more reasonable to assume that a researcher has a
  belief on the size of the partial derivative of the mean potential outcome
  functions. That is, there exist $C_1,...,C_d > 0$
  such that
  \begin{align}
    \label{eq: Lip partial}
    \left|f_q(x) - f_q(z)\right|\leq C_s |x_{(s)} - z_{(s)}| \text{ for all }x,z\in \mathcal{X} \text{ s.t. } x_{-s} = z_{-s},
  \end{align}
  for each $q \in \{c, t \}$ and $s \in \{1, \dots, d\}$. Here, $x_{-s}$ denotes the elements in $x \in \mathbb{R}^d$ excluding its
  $s$th component. It is easy to show that under (\ref{eq: Lip partial}), the
  original Lipschitz continuity assumption (\ref{eq: Lipschitz}) holds with
  $C = 1$ and $||\cdot||$ being a weighted $\ell_1$ norm on $\mathbb{R}^d$,
  $||z|| = \sum_{s = 1}^d C_s |z_{(s)}|$. Moreover, (\ref{eq: Lipschitz})
  holding with $C = 1$ and the weighted $\ell_1$ norm also implies (\ref{eq: Lip
    partial}). Therefore, a researcher assuming (\ref{eq: Lip partial}) can
  equivalently assume (\ref{eq: Lipschitz}) with this weighted $\ell_1$
  norm. This approach is also used in \cite{armstrong2020finite} in the context
  of inference for average treatment effects under unconfoundedness.
\end{remark}
\begin{remark}[RDDs without monotonicity]
  By taking $\mathcal{V} = \emptyset$, our procedure can be used to construct a
  minimax CI for the RD parameter without the monotonicity
  assumption. While other alternatives such as \cite{armstrong2018optimal} and
  \cite{imbens2019optimized} can be used to deal with this setting, our
  procedure is still useful to researchers who prefer imposing bounds on the
  first derivative rather than the second derivative, perhaps due to better
  interpretability.
\end{remark}
\subsection{Optimal kernel and bandwidths}

Our procedures depend on certain kernel functions and bandwidths that depend on
the Lipschitz parameter. We first introduce some notations. Given some
$z \in \mathbb{R}^{d}$ and the index set $\mathcal{V}$, we define
$\left(z\right)_{\mathcal{V}+}$ to be an element in $\mathbb{R}^d$ where its
$s$th element is given by
\begin{align*}
  \left(z\right)_{\mathcal{V}+(s)} :=
  \max\left\{ z_{(s)},0\right\}\mathbbm{1}(s\in \mathcal{V}) + 
  z_{(s)}\mathbbm{1}(s\notin \mathcal{V}), \hspace{5pt}
  s = 1,...,d.
\end{align*}
Similarly, we define $(z)_{\mathcal{V}-} := -\left(-z
\right)_{\mathcal{V}+}$. When $a$ is a scalar, we use square brackets $[a]_+$ to
denote $\max\{a, 0\}$. In addition, we define
$\sigma(x_i) := \text{Var}^{1/2}[u_i | x_i]$, and given an estimator
$\widehat{L}$ of $L_{RD}f$, we write $\text{bias}_f(\widehat{L})$ to denote
$E_f[\widehat{L} - L_{RD}f]$ and $\text{sd}(\widehat{L})$ to denote
$\text{Var}^{1/2}(\widehat{L})$.

The minimax procedure is based on the following kernel function
\begin{align}
  \label{eq: kernel (implementation)}
  K(z) := \left[1 - ( \left \lVert (z)_{\mathcal{V}+} \right \rVert + \left \lVert (z)_{\mathcal{V}-} \right \rVert ) \right]_{+}.
\end{align}
In the adaptive procedure, different bandwidths are used for each coordinates,
depending on the signs of the coordinates. To make the use of different
bandwidths clear, for $h = (h_1, h_2) \in \mathbb{R}^2$, we define
\begin{align*}
  K(z, h) := \left[1 - ( \left \lVert (z / h_1)_{\mathcal{V}+} \right \rVert + \left \lVert (z / h_2)_{\mathcal{V}-} \right \rVert ) \right]_{+}.
\end{align*}

The optimal bandwidths used in estimation is based on the following two
functions $\omega_t(\delta_t; C_1, C_2)$ and $\omega_c(\delta_c; C_1, C_2)$,
defined to be the solutions to the following equation:
\begin{align*}
  \sum_{x_{i} \in \mathcal{X}_{q}} \left[ \omega_q(\delta_q; C_1, C_2) - C_1 \left\lVert (x_{i})_{\mathcal{V}+} \right\rVert - C_2 \left\lVert (x_{i})_{\mathcal{V}-} \right\rVert \right]_{+}^{2} / \sigma^{2}(x_{i}) 
  = \delta_q^{2}, \text{ for each } q \in \{t,q\},
\end{align*}
given a pair of non-negative numbers $(\delta_t, \delta_c)$, and
$\sigma^2(x) := \text{Var}[u_i | x_i = x]$. Moreover, given some scalar
$\delta \geq 0$, we define $\delta_t^*(\delta; C_1, C_2)$ and
$\delta_c^*(\delta; C_2, C_1)$ to be solutions to
\begin{align*}
  \sup_{\delta_{t}^* \geq 0, \delta_{c}^* \geq 0,
  (\delta_{t}^*)^{2} + (\delta_{c}^*)^{2} = \delta^{2}}
  \omega_{t} \left( \delta_{t}^*; C_1, C_2 \right) + 
  \omega_{c} \left( \delta_{c}^*; C_2, C_1 \right).
\end{align*}
Based on these definitions, we introduce the following shorthand notations to be
used throughout this paper; for $q \in \{t, c\}$, $\delta \geq 0$,
$(C_1, C_2) \in \mathbb{R}^2_+$, $h \in \mathbb{R}^2_+$, we define
\begin{align*}
  \omega_q^*(\delta; C_1, C_2) &:= \omega_q(\delta_q^*(\delta; C_1, C_2); C_1, C_2), \\
  a_q(h; C_1, C_2)  &:=
                      \frac{1}{2} \cdot \frac{\sum_{x_i \in \mathcal{X}_q} K(x_i, h) \left[ C_1 \left \lVert (x_i)_{\mathcal{V}+} \right \rVert
                      - C_2 \left \lVert 
                      (x_i)_{\mathcal{V}-} \right \rVert 
                      \right] / \sigma^2(x_i)}{\sum_{x_i \in \mathcal{X}_q} K(x_i, h) / \sigma^2(x_i)}, \\
  s(\delta, h; C_1, C_2, \sigma^2(\cdot)) &:= 
                                            \frac{\delta/\omega_{t}^* \left(\delta;
                                            C_1, C_2 \right)}
                                            {\sum_{x_{i} \in \mathcal{X}_{t}}
                                            K(x_i, h)/ \sigma^2(x_i)}.
\end{align*}
Moreover, for $q \in \{t, c\}$, $\delta \geq 0$, and $C_1, h_1 \in \mathbb{R}$,
we define
\begin{align*}
  \omega_q^*(\delta; C_1) &:= \omega_q^*(\delta; C_1, C_1), \\
  a_q(h_1; C_1) &:= a_q(h_1, h_1; C_1, C_1), \\
  s(\delta, h_1; C_1, \sigma^2(\cdot)) &:=
                                         s(\delta, h_1, h_1; C_1, C_1, \sigma^2(\cdot)).
\end{align*}

The forms of the optimal kernel and the bandwidth presented in this section
result from solving \emph{modulus of continuity} problems considered in
\cite{donoho1991geometrizing} and \cite{Donoho1994} in the context of minimax
optimal inference, and in \cite{cai2004adaptation} and
\cite{armstrong2018optimal} in the context of the adaptive inference. While we
only make the connection specifically to our setting when we prove the validity
of our procedure in Appendix \ref{app: lem and proof}, interested readers may
refer to the aforementioned papers for a more general discussion.

\section{Minimax Two-sided CI}
\label{sec: minimax}

We first consider the case where the researcher is comfortable specifying a
Lipschitz constant and/or the empirical context requires a two-sided CI. In this
case, we recommend a minimax affine CI, which is the CI whose worst-case
expected length is the shortest among all CIs based on affine estimators
(\citealp{Donoho1994}). We refer to such a CI as the minimax CI for brevity.\footnote{\cite{Donoho1994} shows focusing on affine
  estimators is reasonable in the sense that the gain from considering non-affine
  estimators is limited.}

The minimax CI is constructed based on an affine estimator
$\widehat{L}^\text{mm} := a + \sum_{i = 1}^n w_i y_i$ with non-negative weights
$(w_i)_{i = 1}^n$ and half-length $\chi^\text{mm}$, i.e.,
\begin{align*}
  CI^\text{mm} = \left[
  \widehat{L}^\text{mm} - \chi^\text{mm},
  \widehat{L}^\text{mm} + \chi^\text{mm}
  \right].
\end{align*}
The half-length $\chi^\text{mm}$ is non-random and calibrated to maintain
correct coverage uniformly over the function space $\mathcal{F}(C)$:
\begin{align}
  \label{eq: honesty (implementation)}
  \inf_{f \in \mathcal{F}(C)} P_f \left( L_{RD}f \in CI^\text{mm} \right) \geq 1 - \alpha.
\end{align}
Note that
\begin{align*}
  P_f \left( L_{RD}f \in CI^\text{mm} \right)
  =
  P_f \left( \left| \frac{
  \widehat{L}^\text{mm} - L_{RD}f
  }{\text{sd}( \widehat{L}^\text{mm} )}\right| \leq  \frac{\chi^\text{mm}}{\text{sd}( \widehat{L}^\text{mm} )} \right).
\end{align*}
Under the assumption that the error term $u_i$ has a Gaussian distribution, the random variable
$({ \widehat{L}^\text{mm} - L_{RD}f })/{\text{sd}( \widehat{L}^\text{mm} )}$ is
normally distributed with mean equal to
$\text{bias}_f(\widehat{L}^\text{mm}) / {\text{sd}( \widehat{L}^\text{mm})}$ and
with unit variance. Hence, the quantiles of the random variable
$| ({ \widehat{L}^\text{mm} - L_{RD}f }) / {\text{sd}( \widehat{L}^\text{mm} )}
|$ is maximized over $f \in \mathcal{F}(C)$ when
$| \text{bias}_f(\widehat{L}^\text{mm}) |$ is the largest. Therefore, if we
define $\text{cv}_\alpha (t)$ to be $1 - \alpha$ quantile of $|Z|$, with
$Z \sim N(t, 1)$, the smallest possible value of $\chi^\text{mm}$ that
guarantees the coverage requirement (\ref{eq: honesty (implementation)}) is
given by
\begin{align}
  \label{eq: chi^mm general}
  \chi^\text{mm} = 
  \text{cv}_\alpha \left(
  \frac{\sup_{f \in \mathcal{F}(C)}
  \lvert \text{bias}_f(\widehat{L}^\text{mm}) \rvert}{\text{sd}( \widehat{L}^\text{mm})}
  \right) \text{sd}( \widehat{L}^\text{mm} ).
\end{align}

It remains to derive the form of the estimator $\widehat{L}^\text{mm}$ such that
$\chi^\text{mm}$ is minimized. We take $\widehat{L}^\text{mm}$ to be the
difference between two re-centered kernel regression estimators, say
$\widehat{L}^\text{mm}_t - \widehat{L}^\text{mm}_c$, where
\begin{align}
  \label{eq: Lhat_q mm def}
  \widehat{L}^\text{mm}_q :=   \frac{\sum_{x_i \in \mathcal{X}_q} K(x_i/h_q) y_i / \sigma^2(x_i)}{\sum_{x_i \in \mathcal{X}_q} K(x_i/h_q) / \sigma^2(x_i)} - a_q, \text{ for each } q \in \{t, c\},
\end{align}
for the kernel function $K(\cdot)$ defined in (\ref{eq: kernel
  (implementation)}). Note that $\widehat{L}^\text{mm}_t$ and
$\widehat{L}^\text{mm}_c$ correspond to estimators of $f_t(0)$ and $f_c(0)$,
respectively. Regarding the form of the optimal kernel function $K$, $K(z)$ is
the usual triangular kernel when $d = 1$, whose optimality is discussed in
\cite{Donoho1994} and \cite{armstrong2020simple}. Here, we derive the optimal
kernel for multi-dimensional cases as well, for any given norm and under partial
or full monotonicity.

A notable difference from the previous inference methods in RDDs is that the
estimator $\widehat{L}_q^{\text{mm}}$ is a Nadaraya-Watson estimator instead of
a local linear estimator. This difference naturally arises because we work under
the assumption of bounded first derivatives. In general, the local linear
estimator is preferred due to the well-known issue of bias at the boundary for
Nadaraya-Watson type estimators. In the context of honest inference, however,
the worst-case bias is explicitly corrected for.

For the optimal choices of bandwidths $(h_t, h_c)$ and the centering terms
$(a_t, a_c)$, we can show that the minimax two-sided CI can be obtained
by taking
\begin{align}
  \label{eq: h_q (implementation)}
  h_q = h_q(\delta) = {\omega_{q}^* \left( \delta; C \right)} / {C} \text{ for each } q \in \{t,c\},
\end{align}
for a suitable choice of $\delta \geq 0$, by applying the result of
\cite{Donoho1994} to our setting. Next, from the form of (\ref{eq: chi^mm
  general}), the centering terms should be chosen such that
\begin{align*}
  \sup_{f \in \mathcal{F}(C)}
  \text{bias}_f(\widehat{L}^\text{mm}) 
  = 
  - \inf_{f \in \mathcal{F}(C)}
  \text{bias}_f(\widehat{L}^\text{mm}),
\end{align*}
to minimize $\chi^\text{mm}$, i.e., the worst-case negative and positive biases
should be balanced. We can show that this can be achieved if we choose
\begin{align}
  \label{eq: a_q (implementation)}
  a_q  :=
  a_q(h_q; C), \text{ for each } q \in \{t,c\}.
\end{align}
Note that this quantity depends on $\delta$ when $(h_t, h_c)$ depends on
$\delta$.

Now, let $\widehat{L}^\text{mm}(\delta)$ be the above kernel regression
estimator with the bandwidth $(h_t(\delta), h_c(\delta))$ and centering terms
$(a_t, a_c)$ as defined above. Under these choices of bandwidths and centering
terms, we have
\begin{align}
  \begin{aligned}
    \text{sd}( \widehat{L}^\text{mm}(\delta)
    )& = s(\delta, h_{t}(\delta); C, \sigma^2(\cdot)), \label{eq: sd and sup bias (implementation)} \\
    \sup_{f \in \mathcal{F}(C)} \text{bias}_f(\widehat{L}^\text{mm}(\delta) ) &= -
    \inf_{f \in \mathcal{F}(C)} \text{bias}_f(\widehat{L}^\text{mm}(\delta))  \\
    &=\frac{1}{2} \left( C(h_{t}(\delta) + h_{c}(\delta)) - \delta \, \text{sd}(
      \widehat{L}^\text{mm}(\delta))\right).
  \end{aligned}
\end{align}
Then, we choose the optimal value of $\delta$ by plugging in \eqref{eq: sd and
  sup bias (implementation)} into \eqref{eq: chi^mm general}, and calculate the
value of $\delta$ that minimizes the half-length $\chi^\text{mm}$, say
$\delta^*$, yielding the value of the shortest half-length. Plugging
$\delta = \delta^*$ into the bandwidth and centering term formulas also yields
the form of the estimator corresponding to this half-length. Procedure 1
summarizes our discussion on the construction of the minimax CI.
\begin{table}
  \begin{tabularx}{\textwidth}{X}
    \hline
    \textbf{Procedure 1} \emph{Minimax Affine CI}. \\
    \hline
    \begin{minipage}[t]{\linewidth}
      \begin{enumerate}
      \item Choose a value $C$ such that the Lipschitz continuity in (\ref{eq:
          Lipschitz}) is satisfied, with a suitable choice of a norm $||\cdot||$
        on $\mathbb{R}^d$ when $d > 1$.
      \item Calculate the form of the estimator using (\ref{eq: Lhat_q mm def})
        with the bandwidth and the centering term given by (\ref{eq: h_q
          (implementation)}) and (\ref{eq: a_q (implementation)}), as functions
        of $\delta$.
      \item Using \eqref{eq: sd and sup bias (implementation)}, find the value
        of $\delta$ which minimizes the half-length (\ref{eq: chi^mm general}),
        say $\delta^*$.
      \item Calculate the value of the estimator and the half-length by plugging
        in $\delta^*$, which gives the final form of the CI.
      \end{enumerate}
    \end{minipage} \\
    \\
    \hline
  \end{tabularx}
\end{table}

The following is the main theoretical result regarding the minimax
procedure. While we consider an idealized setting with Gaussian errors and
known conditional variances, such exact finite sample results can be translated
into asymptotic results under non-Gaussian errors with unknown variances
following similar arguments given by \cite{armstrong2018optimal_supp}. In
Section \ref{sec: Monte Carlo}, we discuss in more detail how to plug in
consistent estimators of the conditional variances.
\begin{assumption}
  \label{assn: Gaussian temp}
  $\{x_i\}_{i = 1}^n$ is nonrandom and
  $u_i \sim N(0, \sigma^2(x_i)), \text{ where } \sigma^2(\cdot)$ is known.
\end{assumption}
\begin{theorem}
  \label{thm: minimax}
  Under Assumption \ref{assn: Gaussian temp}, we have
  \begin{align*}
    \inf_{f \in \mathcal{F}(C)} P_{f}\left(
    L_{RD}f \in CI^{\text{\emph{mm}}}
    \right) = 1 - \alpha.
  \end{align*}
  Moreover, $CI^{\text{\emph{mm}}}$ is the shortest among all (fixed-length)
  affine CIs with uniform coverage.
\end{theorem}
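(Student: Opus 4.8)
The plan is to treat the two claims separately: first the exact coverage, then length-optimality among fixed-length affine CIs. The structural starting point for both is that $\mathcal{F}(C)$ is convex---both the Lipschitz bound \eqref{eq: Lipschitz} and the monotonicity restriction \eqref{eq: monotone} are preserved under convex combinations---and that $L_{RD}$ is a linear functional of $f$. Under Assumption \ref{assn: Gaussian temp} the observations are jointly Gaussian with known covariance, so the problem falls squarely within the framework of \cite{Donoho1994} for minimax affine estimation of a linear functional over a convex parameter space. This is the fact I would exploit throughout.

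For coverage, I would make the heuristic preceding \eqref{eq: chi^mm general} rigorous, turning the ``$\ge$'' into an ``$=$''. For fixed $f$, the studentized statistic $(\widehat{L}^{\text{mm}}-L_{RD}f)/\text{sd}(\widehat{L}^{\text{mm}})$ is $N(t_f,1)$ with $t_f := \text{bias}_f(\widehat{L}^{\text{mm}})/\text{sd}(\widehat{L}^{\text{mm}})$, so the coverage is $P(|Z|\le \chi^{\text{mm}}/\text{sd}(\widehat{L}^{\text{mm}}))$ for $Z\sim N(t_f,1)$. Since $t\mapsto P(|N(t,1)|\le c)$ is symmetric and strictly decreasing in $|t|$, minimizing coverage over $\mathcal{F}(C)$ is the same as maximizing $|t_f|$. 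The bias of an affine estimator is affine in $f$, so this maximum is attained at an extreme point of the convex class; by \eqref{eq: sd and sup bias (implementation)} the centering \eqref{eq: a_q (implementation)} balances the worst-case positive and negative biases, so $\sup_f|t_f| = \bar t$, where $\bar t := \sup_{f\in\mathcal{F}(C)}|\text{bias}_f(\widehat{L}^{\text{mm}})|/\text{sd}(\widehat{L}^{\text{mm}})$ is the argument of $\text{cv}_\alpha$ in \eqref{eq: chi^mm general}. Because $\chi^{\text{mm}} = \text{cv}_\alpha(\bar t)\,\text{sd}(\widehat{L}^{\text{mm}})$, the defining property of $\text{cv}_\alpha$ yields coverage exactly $1-\alpha$ at the extreme function and at least $1-\alpha$ at every other $f$, so the infimum equals $1-\alpha$.

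For optimality I would invoke the modulus-of-continuity characterization of \cite{Donoho1994}. Define the between-class modulus $\omega(\delta) = \sup\{L_{RD}f_1 - L_{RD}f_0 : f_0,f_1 \in \mathcal{F}(C),\ \sum_i (f_1(x_i)-f_0(x_i))^2/\sigma^2(x_i) \le \delta^2\}$. Because the constraints defining $\mathcal{F}(C)$ act separately on $\mathcal{X}_t$ and $\mathcal{X}_c$ and $L_{RD}f = f_t(0)-f_c(0)$, this modulus separates into a treatment and a control piece optimized over a split $\delta_t^2+\delta_c^2=\delta^2$ of the noise budget---precisely the decomposition encoded by $\delta_q^*(\delta;\cdot)$ and $\omega_q^*$ in Section \ref{sec: setting}. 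Donoho's theory then states that the shortest fixed-length affine CI has half-length determined by this modulus and that its optimal affine estimator is built from the least favorable pair $(f_0,f_1)$ achieving $\omega(\delta^*)$. The remaining work is to verify that this least favorable pair has the ``tent'' form whose difference generates the kernel \eqref{eq: kernel (implementation)}, that the induced bandwidths are \eqref{eq: h_q (implementation)}, and that the bias-balancing centering is \eqref{eq: a_q (implementation)}; optimizing the half-length over $\delta$ then selects $\delta^*$ and hence $CI^{\text{mm}}$.

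I expect the single-class modulus computation to be the main obstacle: for each $q$ one must solve the constrained problem of maximizing $f_q(0)$ (equivalently the bias of $\widehat{L}_q^{\text{mm}}$) subject to the Lipschitz bound, the partial-monotonicity constraint \eqref{eq: monotone}, and the quadratic noise budget, and then show the maximizer is the truncated tent $[\,\omega_q - C\lVert(\cdot)_{\mathcal{V}+}\rVert - C\lVert(\cdot)_{\mathcal{V}-}\rVert\,]_+$. The monotone coordinates are what make this delicate: along directions in $\mathcal{V}$ the binding constraint is one-sided, which is exactly why the solution uses $(\cdot)_{\mathcal{V}+}$ and $(\cdot)_{\mathcal{V}-}$ rather than a symmetric $\lVert\cdot\rVert$, and why the positive and negative parts of each coordinate receive separate treatment. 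Establishing that this tent is indeed the maximizer---via a pointwise variational argument using convexity and the hyperrectangle geometry of $\mathcal{X}$---is the crux; once it is in hand, the identities \eqref{eq: sd and sup bias (implementation)} and the matching of the estimator weights to the kernel follow by direct computation, with the deeper validity argument deferred to Appendix \ref{app: lem and proof} as the text indicates.
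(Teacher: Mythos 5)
Your proposal is correct and follows essentially the same route as the paper: both rest on the minimax affine theory of \cite{Donoho1994} over the convex class $\mathcal{F}(C)$, with the modulus split into treatment and control pieces over the budget $\delta_t^2+\delta_c^2=\delta^2$ (the paper's Lemma \ref{lem:Mod_RD_form}) and the truncated-tent least favorable pair generating the kernel \eqref{eq: kernel (implementation)}, the bandwidths \eqref{eq: h_q (implementation)}, and the centering \eqref{eq: a_q (implementation)}. The only differences are bookkeeping: the paper delegates the tent-form solution of the single-class modulus to \cite{kwonkwon2019regpoint} (Lemmas \ref{lem: Inv_mod} and \ref{lem: f2 - f1 _ f2 + f1}) rather than proving it by the variational argument you sketch, and it lets exact coverage follow from the same Donoho machinery rather than the direct studentization argument you give.
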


From (\ref{eq: h_q (implementation)}), we can see that there is a one-to-one
relationship between the size of the bandwidth and the Lipschitz constant $C$
chosen by a researcher. So choosing $C$ is not necessarily an additional burden
to the researcher if a bandwidth has to be chosen anyway. While there also exist
various data-driven bandwidth choice methods, our way of choosing the bandwidth
makes it clear the relationship between the bandwidth and the function space
over which the resulting CI has uniform coverage, at the same time achieving the
minimax optimal length.

It is useful to discuss the case with $d = 1$ to illustrate the role of the
monotonicity restriction in the minimax optimal inference. Intuitively, under
monotonicity, we do not have to worry about the bias caused by functions with
negative slopes, so it is optimal to use a larger bandwidth than in the
case without monotonicity in order to reduce the standard error. Using our kernel
function and bandwidth formulas above, we can calculate how much larger the
bandwidth should be under monotonicity. When $\mathcal{V} = \{1\}$, the kernel
function in (\ref{eq: kernel (implementation)}) is given by
$K_1(z) = [1 - |z|]_+$, while when $\mathcal{V} = \emptyset$, the kernel
function is given by $K_0(z) = [1 - 2|z|]_+ = K_1(z/(1/2))$. Therefore, if we
fix the kernel function to be $K(z) = K_1(z)$, the bandwidth ratio between the
one under $\mathcal{V} = \{1\}$ and the one under $\mathcal{V} = \emptyset$ is
given by
\begin{align*}
  \frac{2 \omega_q^*(\delta; C, \{1\})}{\omega_q^*(\delta; C, \emptyset)}, \text{ for each } q \in \{t,c\},\ \delta \geq 0,
\end{align*}
where $\omega_q^*(\delta; C, \mathcal{V})$ denotes the value of
$\omega_q^*(\delta; C)$ when the monotonicity restriction holds for the index
set $\mathcal{V}$. Following \cite{kwonkwon2019regpoint}, we can show that the
above quantity is approximately $2^{2/3} \approx 1.6$ for large $n$. So if we
believe the mean potential outcome functions are monotone, it is optimal to use
a bandwidth about 60\% larger than what should be used without monotonicity.

\begin{figure}[t!]
  \centering \includegraphics[width = 0.9\textwidth]{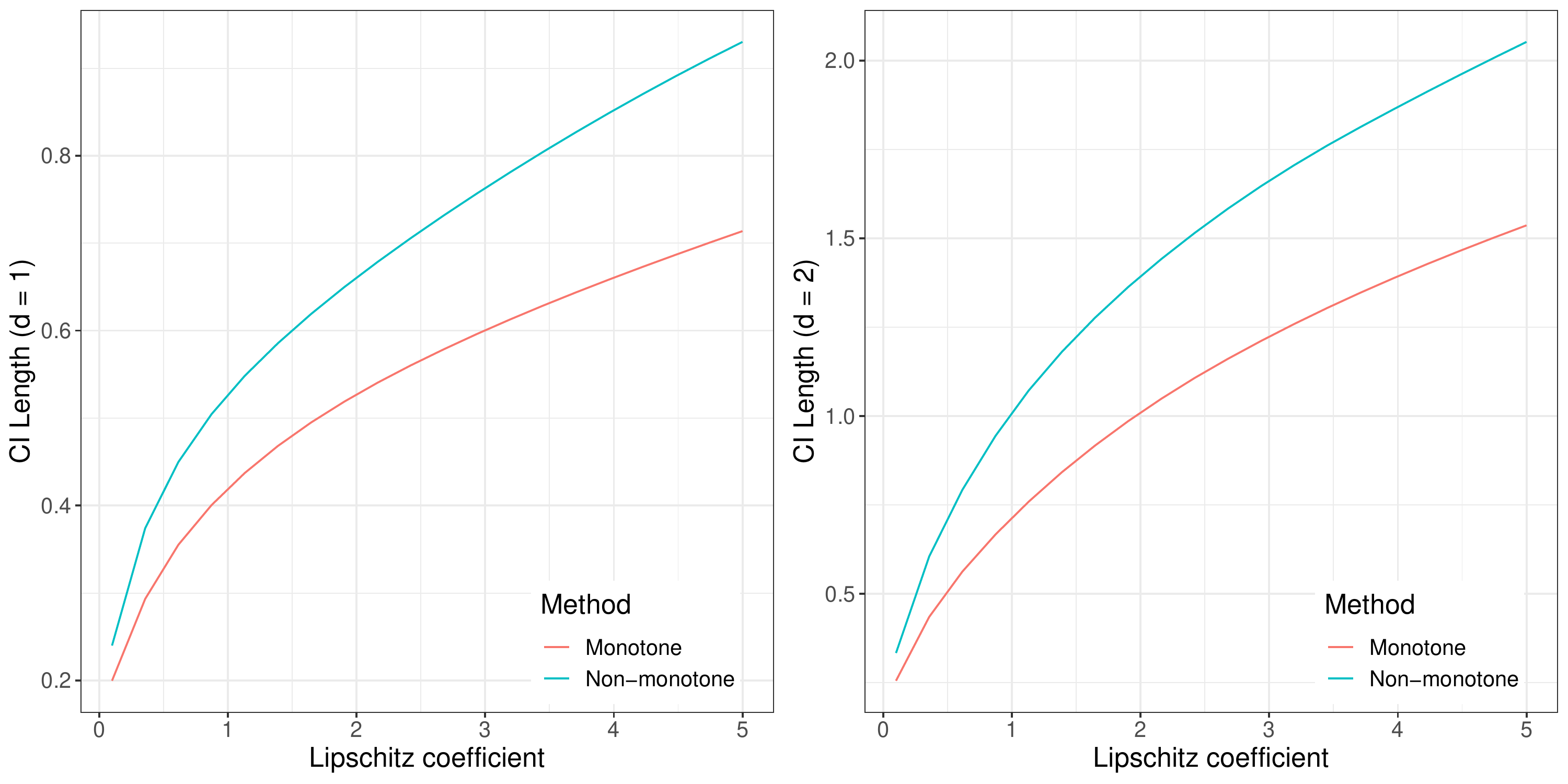}
  \caption{Comparison of the minimax lengths with and without monotonicity. For
    the design of the running variable(s), 500 observations were generated from
    the uniform distribution over $[-1, 1]^d$, for $d = 1, 2$. When $d = 2$,
    we set $\mathcal{V} = \{1, 2\}$ and used the $l_1$ norm. The lengths were
    calculated for $\sigma(x) = 1$.}
  \label{fig: gain mono}
\end{figure}

By a similar argument, the length of the minimax CI becomes shorter when
we only consider the space of monotone functions. Since the length of the CI is
a fixed quantity, we can easily compare its length under the shape restriction
to the one without it. Figure \ref{fig: gain mono} considers the cases with
$d = 1$ and $2$, for the treatment design where individuals are treated when the
values of all the running variables are negative. The CI which does not take
into account monotonicity can be as much as 30\% longer for $d = 1$ and 40\%
longer for $d = 2$. Considering the prevalence of RDDs with monotone regression
functions, this efficiency gain demonstrates the importance of incorporating the
shape restriction when constructing minimax CIs.
\section{Adaptive One-sided CI}
\label{sec: adpt CI}

A prominent feature of the minimax two-sided is that it
is a fixed-length CI, in the sense that its length is determined before observing the
realization of the outcome $\left\{ y_i \right\}_{i = 1}^n$. The length of the
CI depends on the Lipschitz constant $C$ chosen by the researcher, and the
larger the value of $C$, the longer the CI is. Hence, the minimax CI may become too
wide to use in the case where a researcher desires to strengthen the credibility
of the inference result by setting a conservative value of the Lipschitz
constant. In an extreme case where a researcher wishes to set $C = \infty$, the
CI is necessarily the entire real line, providing no information.

To deal with this issue, we provide a method to construct an adaptive one-sided CI. The CI can be made to maintain coverage over a function space with a large Lipschitz constant $C$, even allowing for $C = \infty$. Moreover, as long as the true regression function lies in a smoother class, the length of the CI does not depend on $C$ when $d = 1$ and nearly so when $d > 1$. In Section \ref{subsec: direction}, we discuss this property in more detail, and provide a simple condition on the relationship between the treatment allocation rule and the direction of monotonicity under which such a property holds. This property ensures that a researcher can strengthen the credibility of the inference result by considering a large function space without ending up with an uninformative CI. 

Furthermore, the one-sided CI can be made to utilize the information about the smoothness of the unknown regression function $f$ contained in the observed outcome $\{y_i\}_{i = 1}^n$. By using this
information, the length of the CI is adjusted accordingly, unlike the minimax optimal CI whose length is fixed regardless of the values of the observed outcome. The CIs possessing this type of property are called \emph{adaptive} CIs,
following \cite{cai2004adaptation}. This property further improves the usefulness of the proposed one-sided CI; the length of the CI is not only (nearly) independent of $C$ but shrinks when the true regression function has a smaller first derivative bound.

We focus our discussion on the construction of adaptive lower CIs and related treatment allocation rules. In many RDD applications, researchers are often interested in how significantly larger than 0 the true treatment effect is. In this context, a lower one-sided CI provides useful information. The upper CI can be dealt with in an analogous manner, but requires different (or ``opposite'') treatment allocation rules. 
 Lastly, the ``length'' of a one-sided CI refers to the distance between the true parameter $L_{RD}f$ and its endpoint, referred to as the ``excess length'' in \cite{armstrong2018optimal}.
\subsection{Conditions on treatment allocation}
\label{subsec: direction}

In this section, we describe in more detail the treatment
allocation rules under which it is possible to construct a lower CI with uniform coverage over $\mathcal{F}(C)$, but with length that does not necessarily increase with $C$. Specifically, given a smaller Lipschitz
constant $C'$ such that $0 \leq C' < C$, we ask when it is possible to construct a lower CI whose worst-case expected length over $\mathcal{F}(C')$ does not grow
with $C$. This property ensures that more credibility does not necessarily lead to wider CIs.

To give an intuition, we first describe the argument in a simple setting, where
there is a single running variable with the cutoff point $x = 0$. Consider a
lower CI $[ \widehat{L} - \chi^L, \infty )$ constructed by subtracting a
constant $\chi^{L}$ from a linear estimator $\widehat{L}$. In order to
maintain uniform coverage over a function space $\mathcal{F}(C)$, we must have
\begin{align*}
  \chi^L = \sup_{f \in \mathcal{F}(C)} \text{bias}_f(\widehat{L}) + \text{sd}(\widehat{L})\cdot z_{1 - \alpha}.
\end{align*}
Note that the estimator $\widehat{L}$ we consider is given as the
difference between estimators for $f_t(0)$ and $f_c(0)$, say
$\hat{f}_t(0) - \hat{f}_c(0)$.

Now, suppose individuals receive some treatment if and only if $x_i < 0$. A key
property under this design is that $\hat{f}_t(0)$ always has a negative bias if
$f_t(x)$ is increasing, since $\hat{f}_t(0)$ is calculated only using
observations with $x_i < 0$. Similarly, $\hat{f}_c(0)$ always has positive
bias when $f_c(x)$ is increasing. Therefore, the bias of $\widehat{L}$ over
$f \in \mathcal{F}(C)$ is always negative, regardless of the
value of $C$ specified by the researcher. Hence, a one-sided lower CI that maintains uniform coverage over $\mathcal{F}(C)$ can be formed to be \emph{independent} of $C$. We can also easily see that this argument no longer holds when the individual $i$ is treated if and only if $x_i \geq 0$, in which case the maximum bias over $f \in \mathcal{F}(C)$ increases with $C$.

We now state this idea formally, including the case where $d > 1$. Given some $\widetilde{C} > 0$, let
$\mathcal{L}_\alpha(\widetilde{C})$ denote the set of lower CIs (as functions of
$\{y_i\}_{i = 1}^n$ and $\{x_i\}_{i = 1}^n$) with uniform coverage probability
$1 - \alpha$ over $\mathcal{F}(\widetilde{C})$, i.e,
\begin{align*}
  CI \in \mathcal{L}_\alpha(\widetilde{C})
  \Longleftrightarrow
  \inf_{f \in \mathcal{F}(\widetilde{C})}
  P(L_{RD}f \in CI) \geq 1 - \alpha.
\end{align*}
Now, consider the following quantity:
\begin{align}
  \label{eq: one-sided mm problem}
  \ell(C'; C) := 
  \inf_{[\hat{c}, \infty) \in \mathcal{L}_\alpha(C)}
  \sup_{f \in \mathcal{F}(C')}
  E[L_{RD}f - \hat{c}],
\end{align}
where $C' \leq C$.  This is the worst-case expected length of a CI 1) which has
correct uniform coverage over $\mathcal{F}(C)$ and 2) whose worst-case expected
length is the smallest over $\mathcal{F}(C')$. \cite{armstrong2018optimal}
calculate this quantity in terms of the modulus of continuity defined in
Appendix \ref{app: lem and proof}. The question we ask here is under what
conditions the quantity $\ell(C'; C)$ can be viewed as independent of $C$. In
other words, we characterize when it is possible to construct a one-sided lower
CI whose length does not grow with $C$ when the regression function belongs to a
smoother function space.
\begin{lemma}
  \label{lem: lower adpt cond}
  Suppose both $\mathcal{X}_{t}$ and $\mathcal{X}_{c}$ are non-empty, and
  Assumption \ref{assn: Gaussian temp} holds. Then, given a pair of Lipschitz
  constants $(C',C)$ such that $C' < C$, there exists some constant $A(C')$,
  independent of $C$, such that
  \begin{align*}
    \ell(C'; C) \leq A(C'),
  \end{align*}
  if and only if the followings hold: 1) there exists $x_i \in \mathcal{X}_t$
  such that $x_i \in (-\infty,0]^{d}$, 2) there exists $x_i \in \mathcal{X}_c$
  such that $x_i \in [0,\infty)^{d}$, and 3) the regression function is fully
  monotone, i.e., $\mathcal{V} = \{1,...,d \}$.
\end{lemma}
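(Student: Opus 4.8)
The plan is to prove this biconditional by analyzing the worst-case bias of the lower-CI estimator and showing it stays bounded in $C$ exactly when conditions (1)--(3) hold. The key object is the quantity $\ell(C'; C)$ in (\ref{eq: one-sided mm problem}). Following the single-running-variable intuition already sketched in the text, the boundedness of $\ell(C'; C)$ in $C$ hinges on whether the estimator $\widehat{L} = \hat f_t(0) - \hat f_c(0)$ can be constructed so that its \emph{worst-case bias} over $\mathcal{F}(C)$ does not grow with $C$. Indeed, since a valid lower CI of the form $[\hat c, \infty)$ requires $\chi^L = \sup_{f \in \mathcal{F}(C)} \text{bias}_f(\widehat{L}) + \text{sd}(\widehat{L}) \cdot z_{1-\alpha}$, and the excess length over $\mathcal{F}(C')$ is governed by $\sup_{f\in\mathcal{F}(C')}\text{bias}_f(\widehat L)+\chi^L$, the whole question reduces to controlling $\sup_{f \in \mathcal{F}(C)} \text{bias}_f(\widehat{L})$ uniformly in $C$.

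\medskip

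\noindent\textbf{Sufficiency.} Assuming (1)--(3), I would exhibit an explicit estimator whose worst-case \emph{upward} bias over $\mathcal{F}(C)$ is bounded independently of $C$, so that $\chi^L$ does not blow up. Under full monotonicity ($\mathcal{V} = \{1,\dots,d\}$), I would place all kernel weight for $\hat f_t(0)$ on observations $x_i \in \mathcal{X}_t \cap (-\infty, 0]^d$ and all weight for $\hat f_c(0)$ on $x_i \in \mathcal{X}_c \cap [0,\infty)^d$; conditions (1) and (2) guarantee such observations exist, so these estimators are well-defined. For such a configuration, monotonicity forces $f_t(x_i) \le f_t(0)$ and $f_c(x_i) \ge f_c(0)$, giving $\text{bias}_f(\hat f_t(0)) \le 0$ and $\text{bias}_f(\hat f_c(0)) \ge 0$, hence $\text{bias}_f(\widehat L) \le 0$ for every $f \in \mathcal{F}(C)$ and every $C$. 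The worst-case upward bias is therefore at most $0$, uniformly in $C$, so $\chi^L$ reduces to a standard-deviation term that does not depend on $C$, and $\ell(C'; C) \le A(C')$ with $A(C')$ depending only on the fixed $C'$ through the excess-length contribution over $\mathcal{F}(C')$. The bound $A(C')$ can be taken as the optimal one-sided length computed within $\mathcal{F}(C')$ alone.

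\medskip

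\noindent\textbf{Necessity.} For the converse I would show that if any one of (1)--(3) fails, then $\ell(C'; C) \to \infty$ as $C \to \infty$ for every estimator. The cleanest route is the contrapositive combined with the lower-bound characterization of $\ell(C'; C)$ via the one-sided modulus of continuity from \cite{armstrong2018optimal}: $\ell(C';C)$ is bounded below by (a multiple of) $\omega(\,\cdot\,; \mathcal{F}(C'), \mathcal{F}(C))$, the between-class modulus, so it suffices to show this modulus diverges in $C$. Concretely, I would construct a \emph{two-point perturbation}: a pair $f, g$ with $f \in \mathcal{F}(C')$, $g \in \mathcal{F}(C)$, statistically indistinguishable (bounded $\|\,\cdot\,\|_{L_2(P)}$ distance so the likelihood ratio stays controlled) yet with $L_{RD} g - L_{RD} f$ growing without bound in $C$. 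If (3) fails, some coordinate $s$ is unrestricted, and one can tilt $g$ steeply in that coordinate using the full slack $C$ while keeping it close to $f$ in the data norm. If (1) fails --- every $x_i \in \mathcal{X}_t$ has some positive coordinate --- then monotonicity can only push $f_t$ values \emph{upward} relative to $f_t(0)$, so a function increasing steeply toward the boundary creates an unbounded positive bias that no estimator can subtract away without losing coverage; the symmetric argument handles the failure of (2) on the control side.

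\medskip

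\noindent\textbf{The main obstacle} is the necessity direction, specifically making the two-point (or modulus) lower bound rigorous in the multidimensional setting with a general norm: I must verify that the perturbing function I construct genuinely lies in $\mathcal{F}(C)$ (respecting both the Lipschitz condition for the \emph{given} norm and the monotonicity restriction on $\mathcal{V}$), that it remains close enough to $f \in \mathcal{F}(C')$ in $L_2(P)$ to defeat any test, and that the RD functional gap scales with $C$. Handling the partial-monotonicity failure case (some coordinate outside $\mathcal{V}$) requires care because the unconstrained coordinate interacts with the geometry of $\mathcal{X}_t, \mathcal{X}_c$ near the boundary; I would isolate that coordinate and reduce to the known one-dimensional no-monotonicity result, where it is already established that the one-sided length must be infinite to cover all functions.
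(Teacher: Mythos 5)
Your sufficiency argument is correct, and it is essentially the paper's argument recast in estimator form: the paper bounds the inverse modulus $\omega_t$ by restricting its defining sum to observations in $(-\infty,0]^d \cap \mathcal{X}_t$, where under full monotonicity the term $C\left\lVert (x_i)_{\mathcal{V}+} \right\rVert$ vanishes identically, so the resulting bound depends only on the data and $C'$; this is the dual of your choice of putting all kernel weight on exactly those observations. One small correction there: your closing claim that $A(C')$ can be taken to be the optimal one-sided length computed within $\mathcal{F}(C')$ alone (i.e., $\ell(C';C')$) holds only when $d=1$; for $d>1$ your construction yields a larger, but still $C$-free, constant, which is all the lemma requires.

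The genuine gap is in the necessity direction, and it is not merely that you leave it as a sketch: your two-point perturbation is set up with the ordered modulus pointing the wrong way. For a \emph{lower} CI $[\hat{c},\infty)$ that must cover over $\mathcal{F}(C)$ and whose excess length is evaluated at $f \in \mathcal{F}(C')$, the binding lower bound (Theorem 3.1 of \cite{armstrong2018optimal}, with Definition \ref{def: mod RD} of the paper) is $\omega(z_{1-\alpha}; C, C') = \sup\{L_{RD}f_2 - L_{RD}f_1 : f_1 \in \mathcal{F}(C),\, f_2 \in \mathcal{F}(C')\}$: you need a \emph{small}-class function whose parameter lies far above that of an indistinguishable \emph{big}-class function, since coverage under $f_1$ forces $\hat{c} \leq L_{RD}f_1$ with high probability and indistinguishability transfers this to the law under $f_2$. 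You instead require $L_{RD}g - L_{RD}f$ to diverge with $g \in \mathcal{F}(C)$ and $f \in \mathcal{F}(C')$, which is the modulus relevant to \emph{upper} CIs; carried out consistently, that argument establishes the lemma with conditions 1) and 2) swapped (treated observations in $[0,\infty)^d$, control observations in $(-\infty,0]^d$) --- the wrong statement, and the entire content of this lemma is precisely which orthant conditions attach to which CI direction. (Your verbal account of the failure of condition 1 --- steep growth producing a positive bias no estimator can remove without losing coverage --- matches the correct orientation, so your formal setup contradicts your own narrative and must be rewritten to match it.) Once oriented correctly, no bespoke construction is needed: by Lemmas \ref{lem: Inv_mod} and \ref{lem:Mod_RD_form}, the treated-side modulus is the solution $b_t$ of $\sum_{x_i \in \mathcal{X}_t}\left[b_t - C\left\lVert(x_i)_{\mathcal{V}+}\right\rVert - C'\left\lVert(x_i)_{\mathcal{V}-}\right\rVert\right]_+^{2}/\sigma^2(x_i) = (\delta_t^*)^2$, and if no treated observation lies in $(-\infty,0]^d$, or if $\mathcal{V} \subsetneq \{1,\dots,d\}$ so that $\left\lVert(x_i)_{\mathcal{V}+}\right\rVert > 0$ generically, then every term decays in $C$ and $b_t$ must diverge. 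This route also surfaces the qualification the paper records in its footnote (no observations exactly on the axes, $\delta_q^* > 0$), which your proposed reduction of the partial-monotonicity case to a one-dimensional result would likewise have to address.
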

Under the conditions provided in Lemma \ref{lem: lower adpt cond}, the researcher can construct a one-sided CI whose worst-case expected
length is not too large if the true regression function belongs to a smoother
space $\mathcal{F}(C')$, while maintaining a stringent coverage requirement by specifying a large
$C$. When $d = 1$, it is possible to take $A(C') = \ell(C'; C')$ with the inequality in Lemma \ref{lem: lower adpt cond} holding with equality, as suggested by the intuition discussed above. That is, $C$ does not affect the size of $\ell(C'; C)$ at all when $d = 1$. On the other hand, when $d > 1$, the same property does not hold unless we have observations only over $[0,\infty)^{d}$ and $(-\infty,0]^{d}$, which is usually not the case in practice. Therefore, specifying a larger $C$ translates into a longer CI (by giving less weights to observations outside $[0,\infty)^{d} \cup (-\infty,0]^{d}$) when $d > 1$. However, there is an upper bound on how much this length can grow with $C$. This upper bound is independent of $C$, and thus the worst-case length of the CI is \textit{nearly independent} of $C$.

In words, the conditions in Lemma \ref{lem: lower adpt cond} imply that ``the more disadvantaged group should get
treated.'' Such RDD settings can be easily found in the education
literature where students or schools with lower academic performance receive
some kind of support (\citealp{chay2005central}; \citealp{chiang2009accountability}; \citealp{jacob2004remedial}; \citealp{leuven2007effect}; \citealp{matsudaira2008mandatory}), in the environmental economics literature where
counties with high pollution levels are exposed to some environmental
regulations (\citealp{chay2005does}; \citealp{greenstone2008does}), and in poverty programs which provide funds to those in need (\citealp{ludwig2007does}). Lastly, we note that when the mean potential
outcome functions are decreasing rather than increasing, we may simply switch
the conditions for $\mathcal{X}_t$ and $\mathcal{X}_c$ in Lemma \ref{lem: lower adpt cond}.
\subsection{Class of adaptive procedures}
\label{subsec: one-sided}

Under treatment allocation rules considered in the previous section, it is possible to construct a one-sided CI that is optimal for a single function space $\mathcal{F}(C')$ for some $C'$, with its length independent or nearly independent of the larger Lipschitz constant $C$. Ideally, we would use a one-sided CI that performs well over a range of function spaces, corresponding to a range of Lipschitz constants $C' \in [\underline{C}, \overline{C}]$ for some specified bounds $\underline{C}$ and $\overline{C}$. To this end, we define a class of adaptive procedures so that our procedure is
based on multiple CIs which solve the optimization problem (\ref{eq: one-sided
  mm problem}) for different values of Lipschitz constants $C' \in \{C_j\}_{j = 1}^J$, with
$\underline{C} \leq C_1 < \cdots < C_J \leq \overline{C}$. For now, we assume
$(C_j)_{j = 1}^J$ are given. We discuss how to choose this sequence of Lipschitz constants
in a way that is optimal later on.

To introduce our procedure, we first characterize the solution to (\ref{eq:
  one-sided mm problem}), applying the general result of
\cite{armstrong2018optimal} to our setting. Given a value of $C'$, it turns out
that the lower CI which solves (\ref{eq: one-sided mm problem}) is based on a
linear estimator $\widehat{L}(C') := a(C') + \sum_{i = 1}^n w_i(C') y_i$ for
non-negative weights $w_i(C')$ and a centering constant $a(C')$. Given some
$\tau \in (0, 1)$, the endpoint of the lower CI which maintains the uniform
coverage probability $1 - \tau$ over $\mathcal{F}(C')$, takes the form of
\begin{equation}\label{eq: one_sided_CI}
  \hat{c}^L_\tau (C') = \widehat{L}_\tau(C') - \underset{ f \in \mathcal{F}
    \left(C \right) } {\sup}\text{bias}_f ( \widehat{L}_\tau(C') ) -
  z_{1-\tau}\,\text{sd} (\widehat{L}_\tau(C') ).    
\end{equation}
Here, we make the dependence on $\tau$ explicit because we later choose a
suitable $\tau < \alpha$ when $J > 1$.

For the estimator, we take $\widehat{L}_\tau(C')$ to be the difference between
two kernel regression estimators, say
$\widehat{L}_{t, \tau}(C') - \widehat{L}_{c, \tau}(C')$, where
\begin{align*}
  \widehat{L}_{q, \tau}(C') := \frac{\sum_{x_i \in \mathcal{X}_q} K(x_i, h_{q, \tau}(C')) y_i / \sigma^2(x_i)}{\sum_{x_i \in \mathcal{X}_q} K(x_i, h_{q, \tau}(C')) / \sigma^2(x_i)}, \text{ for each } q \in \{t, c\}.
\end{align*}
The optimal bandwidths are given by
\begin{align}
  \label{eq: h_q,tau implementation}
  \begin{aligned}
    h_{t, \tau}(C') &= \omega_{t}^* \left( z_{1 - \tau}; C, C' \right) \cdot
    \left( {1}/{C}, {1}/{C'}
    \right) \text{ and} \\
    h_{c, \tau}(C') &= \omega_{c}^* \left( z_{1 - \tau}; C', C \right) \cdot
    \left( {1}/{C'}, {1}/{C} \right),
  \end{aligned}
\end{align}
which completes the definition of the estimator $\widehat{L}_\tau(C')$. The
worst-case bias and the standard deviation of $\widehat{L}_\tau(C')$ are given
by
\begin{align}      \label{eq: sd bias adpt}
  \begin{aligned}
    \text{sd} ( \widehat{L}_\tau(C') ) &= s(z_{1 - \tau}, h_{t, \tau}(C'); C,
    C', \sigma^2(\cdot))
    \\
    \sup_{f \in \mathcal{F}(C)} \text{bias}( \widehat{L}_\tau(C') ) &= a_{t}
    (h_{t, \tau}(C'); C, C') - a_{c} (h_{c, \tau}(C'), C', C) \\[-1.2ex]
    & + \frac{1}{2} [ \omega_{t}^* \left( z_{1 - \tau}; C, C' \right) +
      \omega_{c}^* \left( z_{1 - \tau}; C', C \right) - z_{1 - \tau} \,\text{sd} (
      \widehat{L}_\tau(C') ) ],
  \end{aligned}
\end{align}
which concludes the form of the one-sided CI which solves (\ref{eq: one-sided mm
  problem}).

The CI $\left[\hat{c}_\tau^L(C'), \infty \right)$ is in terms of the worst-case
excess length over $\mathcal{F}(C')$ for a single Lipschitz constant
$C'$. We construct a CI which performs well over a
collection of function spaces by
taking intersection of CIs in the form of
$\left[\hat{c}_\tau^L(C'), \infty \right)$ for different values of $C'$. Note
that taking the intersection ``picks out'' the shortest CI among multiple CIs
formed. This is roughly equivalent to inferring from the data the true function
space where the regression function belongs to, and using the CI which performs
well over that function space. 

The value of $\tau$ should be calibrated so that the resulting CI maintains
correct coverage probability $1 - \alpha$ after taking the intersection. Suppose
we have a collection of $J$ CIs
$\left\{ \left[\hat{c}_\tau^L(C_j), \infty \right) \right\}_{j = 1}^J$. A simple
procedure to take the intersection of such CIs is to use a Bonferroni procedure
and take $\tau = \alpha / J$. This procedure, however, is conservative since the
correlations among the estimators
$\widehat{L}_\tau(C_1),...,\widehat{L}_\tau(C_J)$ are positive, and highly so
when $C_{j}$ and $C_{k}$ are close. To calibrate the value of $\tau$ taking into
account such positive correlation, let $(V_{j,\tau})_{j = 1}^J$ denote a
$J$-dimensional multivariate normal random variable centered at zero, unit
variance, and with covariance terms given as
\begin{align}
  \label{eq: Cov V_j tau}
  \begin{aligned}
    \text{Cov}(V_{j,\tau}, V_{k,\tau}) = &
    \frac{\sum_{x_i \in \mathcal{X}_t} K(x_i, h_{t, \tau}(C_j)) K(x_i, h_{t, \tau}(C_k)) / \sigma^2(x_i)}{z_{1 - \tau}^2 / \omega_{t}^* \left( z_{1 - \tau}; C, C_j \right) \omega_{t}^* \left( z_{1 - \tau}; C, C_k \right)}  \\
    &+ \frac{\sum_{x_i \in \mathcal{X}_c} K(x_i, h_{c, \tau}(C_j)) K(x_i, h_{c,
        \tau}(C_k)) / \sigma^2(x_i)}{z_{1 - \tau}^2 / \omega_{c}^* \left( z_{1 -
          \tau}; C_j, C \right) \omega_{c}^* \left( z_{1 - \tau}; C_k, C
      \right)}.
  \end{aligned}
\end{align}

Then, we can show that if we take $\tau^*$ to be the value of $\tau$ that solves
\begin{align}
  \label{eq: tau* def}
  P \left( \max_{1 \leq j \leq J} V_{j, \tau} > z_{1 - \tau}, 
  \right) = \alpha,
\end{align}
then the resulting CI obtained by taking the intersection has correct
coverage. Regarding the solution to the above equation, we can actually show
that $\text{Cov}(V_{j,\tau}, V_{k,\tau})$ does not depend on $\tau$ as
$n \to \infty$, which implies that $ \max_{1 \leq j \leq J} V_{j, \tau^*}$
converges in distribution to a random variable $V_{\max}$ which does not
dependent on $\tau$.\footnote{In a simulation exercise not reported, we find
  this asymptotic approximation works well for a moderate sample size such as
  $n = 100$.}  
The following is the main theoretical result for our intersection CI.
\begin{theorem}
  \label{thm: adpt-CI}
  Given $\{C_1,...,C_J\}$, $C$ and $\alpha \in (0,1)$, let $\tau^*$ be defined
  as in (\ref{eq: tau* def}).  Then, under Assumption \ref{assn: Gaussian temp},
  if we let $\hat{c} := \max_{1 \leq j \leq J} \hat{c}_{\tau^*}^L(C_j)$, we have
  \begin{align*}
    \inf_{f \in \mathcal{F}(C)}
    P_f \left(L_{RD} f \in [\hat{c}, \infty) \right)
    \geq 1 - \alpha.
  \end{align*}
\end{theorem}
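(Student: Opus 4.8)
The plan is to reduce the coverage statement to a single inequality about the maximum of $J$ standardized, mean-zero Gaussian statistics, and then to recognize that this maximum has exactly the law that defines $\tau^*$. First I would observe that intersecting the half-lines $[\hat{c}^L_{\tau^*}(C_j), \infty)$ produces $[\hat{c}, \infty)$ with $\hat{c} = \max_j \hat{c}^L_{\tau^*}(C_j)$, so that a coverage failure occurs precisely when $L_{RD}f < \hat{c}^L_{\tau^*}(C_j)$ for some $j$. Hence it suffices to bound $P_f\!\left(\exists\, j : L_{RD}f < \hat{c}^L_{\tau^*}(C_j)\right)$ uniformly over $f \in \mathcal{F}(C)$.

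For the core step I would introduce the standardized statistics
\[
  Z_j := \frac{\widehat{L}_{\tau^*}(C_j) - E_f[\widehat{L}_{\tau^*}(C_j)]}{\text{sd}(\widehat{L}_{\tau^*}(C_j))}, \qquad j = 1, \dots, J.
\]
Under Assumption \ref{assn: Gaussian temp} each $\widehat{L}_{\tau^*}(C_j)$ is affine in the Gaussian vector $(y_i)_{i=1}^n$, so $(Z_j)_{j=1}^J$ is jointly mean-zero multivariate normal with unit variances; moreover the centered numerator equals $\sum_i w_i(C_j)\,u_i$, so the joint law of $(Z_j)$ does not depend on $f$. Using the endpoint formula (\ref{eq: one_sided_CI}), subtracting $E_f[\widehat{L}_{\tau^*}(C_j)]$ and dividing by the standard deviation, the event $\{L_{RD}f < \hat{c}^L_{\tau^*}(C_j)\}$ becomes
\[
  Z_j > \frac{\sup_{g \in \mathcal{F}(C)} \text{bias}_g(\widehat{L}_{\tau^*}(C_j)) - \text{bias}_f(\widehat{L}_{\tau^*}(C_j))}{\text{sd}(\widehat{L}_{\tau^*}(C_j))} + z_{1 - \tau^*}.
\]
Because $f \in \mathcal{F}(C)$, the first term on the right is non-negative, so $\{L_{RD}f < \hat{c}^L_{\tau^*}(C_j)\} \subseteq \{Z_j > z_{1-\tau^*}\}$ for every $j$. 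The failure event is therefore contained in $\{\max_j Z_j > z_{1-\tau^*}\}$, giving $P_f(L_{RD}f \notin [\hat{c},\infty)) \le P_f(\max_j Z_j > z_{1-\tau^*})$.

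It then remains to show $(Z_j)_{j=1}^J \overset{d}{=} (V_{j,\tau^*})_{j=1}^J$, after which the definition (\ref{eq: tau* def}) yields $P_f(\max_j Z_j > z_{1-\tau^*}) = P(\max_j V_{j,\tau^*} > z_{1-\tau^*}) = \alpha$ and hence coverage at least $1-\alpha$. Since both vectors are centered Gaussians with unit variances, this reduces to matching covariances, i.e. to verifying that $\text{Cov}(Z_j, Z_k)$ equals the expression in (\ref{eq: Cov V_j tau}). I would compute $\text{Cov}(Z_j, Z_k)$ directly: since $\mathcal{X}_t$ and $\mathcal{X}_c$ partition the sample, the covariance splits into a treatment piece and a control piece, each of the form $\sum_{x_i \in \mathcal{X}_q} K(x_i, h_{q,\tau^*}(C_j))K(x_i, h_{q,\tau^*}(C_k))/\sigma^2(x_i)$, normalized by the relevant kernel sums and by $\text{sd}(\widehat{L}_{\tau^*}(C_j))\,\text{sd}(\widehat{L}_{\tau^*}(C_k))$, and I would substitute the standard-deviation formula from (\ref{eq: sd bias adpt}) and simplify.

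The hard part will be this covariance identity, specifically the control-group piece: the standard-deviation expression in (\ref{eq: sd bias adpt}) is written only through the treatment-side kernel sum and $\omega_t^*$, so matching the control term of (\ref{eq: Cov V_j tau}) requires a balance relation of the form $\omega_t^*(z_{1-\tau^*}; C, C_j)\sum_{x_i \in \mathcal{X}_t} K(x_i, h_{t,\tau^*}(C_j))/\sigma^2(x_i) = \omega_c^*(z_{1-\tau^*}; C_j, C)\sum_{x_i \in \mathcal{X}_c} K(x_i, h_{c,\tau^*}(C_j))/\sigma^2(x_i)$ for each $j$. This is precisely the kind of first-order condition produced by the modulus-of-continuity characterization underlying (\ref{eq: h_q,tau implementation})--(\ref{eq: sd bias adpt}), and I would establish it by invoking the optimality relations for $\delta_t^*, \delta_c^*$ and the optimal bandwidths from the appendix (and the companion results of \cite{armstrong2018optimal} and \cite{kwonkwon2019regpoint}) rather than rederiving them. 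Two smaller points to confirm along the way are that the worst-case bias in (\ref{eq: one_sided_CI}) is taken over the coverage class $\mathcal{F}(C)$ (so the inequality $\text{bias}_f \le \sup_{\mathcal{F}(C)}\text{bias}$ is exactly the one used above), and that the whole argument is exact in finite samples under Gaussianity, so no asymptotic approximation enters the coverage guarantee.
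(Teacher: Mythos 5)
Your proposal is correct and takes essentially the same approach as the paper: both arguments reduce non-coverage to the event that the maximum of $J$ pivotal, zero-mean, unit-variance jointly Gaussian statistics exceeds $z_{1-\tau^*}$, dispose of the unknown $f$ via $\text{bias}_f \le \sup_{g\in\mathcal{F}(C)}\text{bias}_g$, and finish by matching the covariance to (\ref{eq: Cov V_j tau}) through the kernel/modulus identities --- in particular, your ``balance relation'' is precisely the pair of representations of $\omega'$ in Lemma \ref{lem: om_prime}, which the paper likewise invokes rather than rederives. The only presentational difference is that the paper standardizes $\hat{c}_{\tau^*}^L(C_j)-L_{RD}f$ by $\omega'(z_{1-\tau^*},C,C_j)$ and appeals to monotonicity of the Gaussian-max quantiles in the mean vector, whereas you center at $E_f[\widehat{L}_{\tau^*}(C_j)]$ and use a pathwise event inclusion, which is the same worst-case step in different clothing.
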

The next result, which is immediate from \cite{armstrong2018optimal}, shows that
the class of adaptive procedures we consider is a reasonable one. It basically
states that each of the CIs $[\hat{c}_{\tau^*}^L(C_j), \infty)$ is an optimal CI
when $f \in \mathcal{F}(C_j)$, except that it covers the true parameter with
probability $1 - \tau^*$ instead of $1 - \alpha$, which is the price we pay to
adapt to multiple Lipschitz classes.
\begin{cor}
  Under Assumption \ref{assn: Gaussian temp},
  $[\hat{c}_{\tau^*}^L(C_j), \infty)$ solves
  \begin{align*}
    \inf_{[\hat{c}, \infty) \in \mathcal{L}_{\tau^*}(C)} 
    \sup_{f \in \mathcal{F}(C_j)}
    E_f[L_{RD} f - \hat{c}].
  \end{align*}
\end{cor}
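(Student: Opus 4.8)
The plan is to recognize that the objective in the corollary is nothing other than the minimax one-sided excess-length problem (\ref{eq: one-sided mm problem}) evaluated at target class $\mathcal{F}(C_j)$ and coverage level $1 - \tau^*$ rather than $1 - \alpha$. That is, writing $\ell_{\tau^*}(C_j; C)$ for the value of (\ref{eq: one-sided mm problem}) with $\alpha$ replaced by $\tau^*$ and $C' = C_j$, the claim is exactly that $[\hat{c}_{\tau^*}^L(C_j), \infty)$ attains this infimum. Since $\tau^* \in (0,1)$ is a fixed level, the entire construction developed in Section \ref{subsec: one-sided} applies verbatim with $\tau^*$ in place of a generic $\tau$, so it suffices to verify that the general characterization of the solution to (\ref{eq: one-sided mm problem}) yields precisely $\hat{c}_{\tau^*}^L(C_j)$.

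First I would invoke the general result of \cite{armstrong2018optimal} (building on \cite{donoho1991geometrizing}, \cite{Donoho1994}, and \cite{cai2004adaptation}), which characterizes the minimizer of the one-sided excess-length problem over a convex parameter space. For a linear functional and coverage class $\mathcal{F}(C)$, the optimal one-sided lower CI is based on an affine estimator whose weights solve the ordered single-class modulus-of-continuity problem, and whose endpoint is obtained by subtracting from the estimator its worst-case bias over the coverage class $\mathcal{F}(C)$ together with $z_{1-\tau^*}$ times its standard deviation. This is exactly the structural form (\ref{eq: one_sided_CI}).

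Next I would confirm that the concrete objects in Section \ref{subsec: one-sided} are the realization of this abstract solution for our class $\mathcal{F}(C)$. The kernel $K(\cdot, \cdot)$ and the bandwidths (\ref{eq: h_q,tau implementation}) are derived from the modulus $\omega_q^*$ by solving the two-point least-favorable problem associated with the Lipschitz-and-monotone constraints (\ref{eq: Lipschitz})--(\ref{eq: monotone}); the worst-case bias and standard deviation in (\ref{eq: sd bias adpt}) are the corresponding closed forms. The identification of $\mathcal{F}(C)$ with the abstract modulus framework, together with the verification that the Nadaraya--Watson estimator with boundary-bias correction is the affine minimizer, is the content established in Appendix \ref{app: lem and proof}; once this identification is in hand, setting $\tau = \tau^*$ gives the corollary immediately.

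The only point that requires care---and the reason the result is ``immediate'' only after the groundwork---is the matching step: one must check that the least-favorable functions achieving $\omega_q^*$ have the truncated-linear form $[\,\cdot\,]_+$ that produces exactly the kernel weights $K(x_i, h_{q,\tau^*}(C_j))$ and the centering terms, so that the affine estimator singled out by \cite{armstrong2018optimal} coincides with $\widehat{L}_{\tau^*}(C_j)$. This is handled in the appendix and does not depend on the particular coverage level, so it transfers without change from the level-$\alpha$ construction to the level-$\tau^*$ statement of the corollary.
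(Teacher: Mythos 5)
Your proposal is correct and takes essentially the same route as the paper: the paper gives no separate proof, declaring the corollary ``immediate from \cite{armstrong2018optimal}'' precisely because $\hat{c}_{\tau}^L(C')$ was constructed in Section \ref{subsec: one-sided} as the solution to (\ref{eq: one-sided mm problem}) at a generic level $\tau$ (with the concrete kernel/bandwidth forms justified by the appendix lemmas), so the result follows by substituting $\tau = \tau^*$ and $C' = C_j$, exactly as you argue. One cosmetic slip: the relevant object is the ordered \emph{two-class} modulus $\omega(\delta; C, C_j)$ between the coverage class $\mathcal{F}(C)$ and the performance class $\mathcal{F}(C_j)$, not a ``single-class'' modulus, though your subsequent use of the bandwidths (\ref{eq: h_q,tau implementation}) and the worst-case bias over $\mathcal{F}(C)$ handles this correctly.
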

Again, considering the simple case where $d = 1$ and $\mathcal{V} = \{1\}$
provides an intuition behind our adaptive procedure. Under this simple case,
$\widehat{L}_{q,\tau}(C')$ is simply a kernel regression estimator with kernel
$K(z) = [1 - |z|]_+$ and bandwidths
\begin{align}
  \label{eq: h_q,tau d = 1}
  h_{t, \tau}(C') = \omega_{t}^* \left( 
  z_{1 - \tau};
  C, C' \right) / C',\quad
  h_{c, \tau}(C') = \omega_{c}^* \left( 
  z_{1 - \tau};
  C', C \right) / C'.
\end{align}
Applying results of \cite{kwonkwon2019regpoint}, we can show that these
quantities are approximately $c \times (C')^{-2/3}$ for some constant $c$ for
large $n$. This implies that we construct estimators with varying bandwidth
sizes, and compare the lengths of resulting one-sided CIs. The estimator with a
smaller bandwidth is the one constructed to perform well when the Lipschitz
constant of the regression function is large, and vice versa for the estimator
with a larger bandwidth. This is because when the Lipschitz constant of the
regression function is large, the excess length of a one-sided CI is reduced by
taking a smaller bandwidth to decrease the absolute size of the bias. On the
other hand, when the Lipschitz constant of the regression function is small,
reducing the standard deviation by taking a larger bandwidth matters more in
reducing the excess length of a one-sided CI. This idea is similar to the
bandwidth snooping procedure suggested by \cite{armstrong2018simple}. The results
therein focus on the case with a single running variable and adapting to the
H\"older exponent.
\begin{remark} [Specifying $C$]
  \label{rem: adpt C Inf}
  When $d = 1$, the worst-case bias is 0 under the treatment allocation rules in Section \ref{subsec: direction}. Therefore, a researcher can set $C = \infty$ and
  not worry about correct coverage. On the other hand, when $d > 1$, the size of $C$ governs how large weights should be on observations outside $[0, \infty)^d$ and $(-\infty, 0]^d$. A larger $C$ puts smaller weights on those observations, leading to a wider CI, with $C = \infty$ corresponding to using observations only in $[0, \infty)^d$ and $(-\infty, 0]^d$. However, the length of the adaptive CI shrinks when the true regression function has a smaller first derivative bound even if $C$ is set to be a large number, alleviating the concern that large $C$ might lead to a less informative CI. 
\end{remark}
\begin{remark} [Adaptation in multi-dimensional RDDs]
  Consider the setting of Remark \ref{rem: multi-dim Lip}, where the Lipschitz
  continuity is specified by the weighted $\ell_1$-norm with Lipschitz constants
  $C_1,...,C_d$. Then, our adaptive procedure allows one to consider different
  values of $C$ other than $C = 1$. Note that this takes $C_1 / C_s$ given for
  $s = 2,...,d$. Adapting to different values of $C_1 / C_s$ is an interesting
  extension that we did not pursue here.
\end{remark}
\subsection{Choice of an adaptive procedure}

In this section, we discuss the choice of Lipschitz constants $(C_j)_{j = 1}^J$
to conclude our definition of the adaptive one-sided CI. The choice of function
spaces to adapt to is especially relevant in our setting unlike the previous
literature on adaptive inference. The previous literature has mostly focused on
rate-adaptation, the problem of how to construct a CI which shrinks at an
optimal rate. For example, \cite{armstrong2015adaptive} and
\cite{armstrong2018simple} discuss adaptive testing and construction of adaptive
CIs for the RD parameter, which adapt to H\"older exponents $\beta \in (0,
1]$. In this case, since different H\"older exponents imply different
convergence rates, adapting to the entire continuum of H\"older exponents is in
some sense always optimal. On the other hand, when we fix $\beta = 1$ and try to
adapt to Lipschitz constants, the convergence rate is always $n^{1/(2+d)}$, and
what matters is the actual length of CIs, not their rate of convergence.

The optimal but infeasible adaptive CI is given as
$CI = [\hat{c}, \infty)$ such that
\begin{align*}
  \sup_{f \in \mathcal{F}(C')} E[L_{RD} f - \hat{c}] = \ell(C'; C),
\end{align*}
for all $C' \in [\underline{C}, \overline{C}]$, where $\ell(C'; C)$ is defined
in (\ref{eq: one-sided mm problem}). This is infeasible since the form of CI
which is optimal over $\mathcal{F}(C')$ is different from the one which is
optimal over $\mathcal{F}(C'')$ for $C' \neq C''$. Instead, our aim is to
construct a CI $[\hat{c}, \infty)$ such that
\begin{align*}
  \ell^{\text{adpt}}(C') := \sup_{f \in \mathcal{F}(C')} E[L_{RD} f - \hat{c}]
\end{align*}
is close to $\ell(C'; C)$ over
$C' \in \left[\underline{C}, \overline{C} \right]$ given some
$\left( \underline{C}, \overline{C} \right)$, when $\ell^{\text{adpt}}(C')$ and
$\ell(C'; C)$ are viewed as a function of $C'$.

\begin{table}[t!]
  \begin{tabularx}{\textwidth}{X}
    \hline
    \textbf{Procedure 2} \emph{Adaptive One-sided CI}. \\
    \hline
    \begin{minipage}[t]{\linewidth}
      \begin{enumerate}
      \item Choose $C$ so that the Lipschitz continuity in (\ref{eq:
          Lipschitz}) is satisfied, with a suitable choice of a norm $||\cdot||$
        on $\mathbb{R}^d$ when $d > 1$.\vspace{-5pt} 
        \begin{itemize}
            \item[-] When $d = 1$, we can set $C = \infty$; see Remark \ref{rem: adpt C
            Inf} in Section \ref{subsec: one-sided}.
        \end{itemize}
      \item Choose values of $\underline{C}$ and $\overline{C}$ such that
        $0 \leq \underline{C} < \overline{C} \leq C$, which is the region where the
        adaptive CI will be close to optimal.\vspace{-5pt} 
        \begin{itemize}
            \item[-] A reasonable value of $\underline{C}$ can be
          estimated; see Appendix \ref{sec: minC-est}.
        \end{itemize}
      \item Let $\mathcal{C}(J)$ be the
        equidistant grid of size $J$ over
        $\left[\underline{C}, \overline{C} \right]$. Starting from $J=2$, increase $J$ by $1$ until $\lvert\Delta(\mathcal{C}(J)) - \Delta(\mathcal{C}(J-1))\rvert \leq \varepsilon$, for a tolerance level $\varepsilon$.
      \item Let $J^*$ be the value of $J$ we stopped at in Step 3. We use $\mathcal{C}(J^*)= (C_j)_{j=1}^{J^\ast}$ as the sequence of Lipschitz
        constants to construct the adaptive CI.
        \item Using \eqref{eq: one_sided_CI}, \eqref{eq: h_q,tau implementation}, \eqref{eq: sd bias adpt}, and \eqref{eq: tau* def}, finally obtain the adaptive one-sided CI,
$$\Big[\max_{1 \leq j \leq J} \hat{c}_{\tau^*}^L(C_j), \infty \Big).$$
      \end{enumerate}
    \end{minipage} \\
    \\
    \hline
  \end{tabularx}
\end{table}

By restricting the class of one-sided CIs to those considered in Section
\ref{subsec: one-sided}, we can show that there is a measure of distance between $\ell^{\text{adpt}}(\cdot)$ and
$\ell(\cdot; C)$ which is both reasonable and easy to calculate. To be specific, let $\mathcal{C} = (C_j)_{j = 1}^J$ denote some sequence of Lipschitz
constants to be used to construct our adaptive CI, and denote the endpoint of
the CI with $\hat{c}(\mathcal{C})$. Then, write
\begin{align*}
  \ell^{\text{adpt}}(C'; \mathcal{C}) := \sup_{f \in \mathcal{F}(C')} E[L_{RD} f - \hat{c}(\mathcal{C})].
\end{align*}
Consider the following quantity measuring the ``distance'' between $\ell^{\text{adpt}}(\cdot; \mathcal{C})$ and $\ell(\cdot; C)$:
\begin{align}
  \label{eq: best C_j's}
  \Delta(\mathcal{C}) :=
  \sup_{C' \in \left[\underline{C}, \overline{C} \right]}
  \frac{\ell^{\text{adpt}}(C'; \mathcal{C})}{\ell(C'; C)}.
\end{align}
Note that this criterion is consistent with the previous literature which compares the performances of different confidence intervals using a ratio measure (\citealp{cai2004adaptation}; \citealp{armstrong2018optimal}). Specifically, $\Delta(\mathcal{C})$ satisfies 
\begin{align*}
    \ell^{\text{adpt}}(C'; \mathcal{C}) \leq \Delta(\mathcal{C}) \ell(C'; C),\ \forall C' \in \left[\underline{C}, \overline{C} \right].
\end{align*}
This is the precisely the notion of adaptive CIs introduced in \cite{cai2004adaptation}. Since the rates of $\ell^{\text{adpt}}(C'; \mathcal{C})$ and $\ell(C'; C)$ are the same under our setting, the choice of an adaptive procedure should be based on the size of the constant $\Delta(\mathcal{C})$. 

An advantage from focusing on the class of adaptive CIs proposed in Section \ref{subsec: one-sided} is that it is easy to evaluate $\Delta(\mathcal{C})$
as shown in the following proposition.\\
\begin{prop}
  \label{prop: ell^adpt calc}
  We have
  $\ell^{\text{adpt}}(C'; \mathcal{C}) = E \min_{j\leq J} U_{j}$, where
  $U := (U_1,\dots, U_J)'$ is a Gaussian random vector with known
  mean and variance.\footnote{The expressions for the mean and variance are given in the proof.} 
  Furthermore, we have
  \begin{align*}
    \ell(C';C) = \omega_t^*(z_{1 - \alpha}, C, C') + \omega_c^*(z_{1 - \alpha}, C', C).
  \end{align*}
\end{prop}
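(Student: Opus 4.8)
The plan is to handle the two assertions separately: the formula for $\ell^{\text{adpt}}(C';\mathcal{C})$ by a least-favorable-function argument, and the formula for $\ell(C';C)$ by specializing the modulus-of-continuity characterization of \cite{armstrong2018optimal}.

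For the first assertion I would start from the identity $L_{RD}f-\hat{c}(\mathcal{C})=\min_{j\le J}\bigl(L_{RD}f-\hat{c}^L_{\tau^*}(C_j)\bigr)$, which holds because the intersection endpoint is $\hat{c}(\mathcal{C})=\max_j\hat{c}^L_{\tau^*}(C_j)$ (Theorem \ref{thm: adpt-CI}). Substituting \eqref{eq: one_sided_CI}, each $\hat{c}^L_{\tau^*}(C_j)$ is the affine estimator $\widehat{L}_{\tau^*}(C_j)$ shifted by a \emph{nonrandom} amount (the worst-case bias over $\mathcal{F}(C)$ and the $z_{1-\tau^*}\,\mathrm{sd}$ term), so $U_j:=L_{RD}f-\hat{c}^L_{\tau^*}(C_j)$ is an affine functional of $(y_i)_{i=1}^n$. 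Under Assumption \ref{assn: Gaussian temp} the vector $(U_1,\dots,U_J)$ is jointly Gaussian; its covariance equals $\mathrm{Cov}(\widehat{L}_{\tau^*}(C_j),\widehat{L}_{\tau^*}(C_k))$ (the object in \eqref{eq: Cov V_j tau}), which does not depend on $f$ since the variances are known, while its mean is $\mu_j(f)=-\mathrm{bias}_f(\widehat{L}_{\tau^*}(C_j))+\sup_{g\in\mathcal{F}(C)}\mathrm{bias}_g(\widehat{L}_{\tau^*}(C_j))+z_{1-\tau^*}\,\mathrm{sd}(\widehat{L}_{\tau^*}(C_j))$, with the last two terms furnished by \eqref{eq: sd bias adpt}.

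Next I would take the supremum over $f\in\mathcal{F}(C')$. Writing $U=\mu(f)+Z$ with $Z\sim N(0,\Sigma)$ not depending on $f$, the map $\mu\mapsto E[\min_j(\mu_j+Z_j)]$ is nondecreasing in each coordinate (pointwise in $Z$), so it suffices to maximize every $\mu_j(f)$ at once, i.e.\ to \emph{minimize} every $\mathrm{bias}_f(\widehat{L}_{\tau^*}(C_j))$ simultaneously. Since each estimator is a difference of Nadaraya--Watson estimators with \emph{nonnegative} kernel weights, its bias is a nonnegatively-weighted average of $f_t(x_i)-f_t(0)$ minus one of $f_c(x_i)-f_c(0)$; hence the pointwise lower Lipschitz--monotone envelope on $\mathcal{X}_t$ and the upper one on $\mathcal{X}_c$ (whose deviations from the value at $0$ are governed by $\|(x)_{\mathcal{V}+}\|$ and $\|(x)_{\mathcal{V}-}\|$, matching the quantities in $a_q$ and $\omega_q$) minimize all $J$ biases simultaneously. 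Nonnegativity of the weights is exactly what makes one envelope least favorable for every $C_j$ even though the bandwidths differ across $j$. With this $f^*$ in hand, the supremum is attained and $\ell^{\text{adpt}}(C';\mathcal{C})=E[\min_j U_j]$ with $U\sim N(\mu(f^*),\Sigma)$, giving the advertised Gaussian vector with known mean and variance.

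For the second assertion I would invoke \cite{armstrong2018optimal}: for a convex parameter space the minimax one-sided expected excess length in \eqref{eq: one-sided mm problem} equals the between-class modulus of continuity evaluated at $z_{1-\alpha}$, drawing the lower function from the covering class $\mathcal{F}(C)$ and the upper one from $\mathcal{F}(C')$. Because $L_{RD}f=f_t(0)-f_c(0)$ and the restrictions on $f_t$ over $\mathcal{X}_t$ and on $f_c$ over $\mathcal{X}_c$ are separate, the squared $\sigma^{-2}$-weighted budget constraint in the modulus splits additively across the two sides, so the modulus factorizes as $\sup_{\delta_t^2+\delta_c^2=z_{1-\alpha}^2}\{\omega_t(\delta_t;C,C')+\omega_c(\delta_c;C',C)\}$; the asymmetric constants $(C,C')$ and $(C',C)$ appear because the upper and lower functions come from $\mathcal{F}(C')$ and $\mathcal{F}(C)$ and enter the treatment and control functionals with opposite signs. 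By the definitions of $\delta_q^*$ and $\omega_q^*$ in Section \ref{sec: setting}, this supremum is exactly $\omega_t^*(z_{1-\alpha};C,C')+\omega_c^*(z_{1-\alpha};C',C)$.

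The main obstacle is the least-favorable-function step. I expect the bookkeeping (Gaussianity, the $\max$/$\min$ identity, the covariance computation) to be routine, whereas carefully justifying that a single envelope is feasible in $\mathcal{F}(C')$ and simultaneously minimizes the biases of all $J$ estimators — so that the supremum over the infinite-dimensional class collapses to one Gaussian expectation — is the crux. For the second assertion the substantive content is imported from \cite{armstrong2018optimal}, and the only genuinely new work is the additive treatment/control decomposition of the modulus, which must be matched precisely to the definitions of $\omega_q^*$ and $\delta_q^*$ given in the setting.
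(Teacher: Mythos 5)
Your proposal is correct and follows essentially the same route as the paper's proof: the same $\max$/$\min$ identity, the same Gaussianity-plus-monotonicity-in-the-means reduction, the same least favorable function $f^*_{C'}(x)=-C'\lVert(x)_{\mathcal{V}-}\rVert\mathbbm{1}(x\in\mathcal{X}_t)+C'\lVert(x)_{\mathcal{V}+}\rVert\mathbbm{1}(x\in\mathcal{X}_c)$, and the same appeal to Theorem 3.1 of \cite{armstrong2018optimal} combined with the treatment/control decomposition of the modulus (Lemma \ref{lem:Mod_RD_form}). The only cosmetic difference is that where the paper normalizes $f_t(0)=f_c(0)=0$ and cites \cite{kwonkwon2019regpoint} for the extremal function of each single-$j$ bias problem, you justify that step directly via the pointwise Lipschitz--monotone envelope together with nonnegativity of the kernel weights, which is the same underlying reason the extremizer is bandwidth-free and hence least favorable for all $j$ simultaneously.
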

Based on the discussion so far, we recommend choosing $\{C_j\}_{j = 1}^J$ based on the value of $\Delta(\mathcal{C})$. When calculating this value, searching for all possible sequences $\mathcal{C}$ is infeasible in practice. Instead, we suggest taking
$\mathcal{C} = \mathcal{C}(J)$ to be the equidistant grid of size $J$ on
$\left[\underline{C}, \overline{C} \right]$ and calculating $\Delta(\mathcal{C}(J))$ for different values of $J$. Our simulation study (not reported) suggests that the gain from increasing $J$ becomes very small after some threshold. Hence, a computationally attractive procedure is to increase the value of $J$ until the additional gain from using $J + 1$ instead of $J$ is smaller than a tolerance parameter. In the simulation designs of Section \ref{sec: Monte Carlo}, the largest value of $\Delta(\mathcal{C}(J^*))$ across all scenarios is less than 1.07, where $J^*$ is the value of $J$ obtained by the method described above. That is, the worst-case length of the adaptive CI is at most $7$\% longer than the worst-case length of the CI that we would have used if we knew the true Lipschitz constant, in the data generating processes considered in Section \ref{sec: Monte Carlo}.

Note that our procedure
requires a researcher to specify the values of $\underline{C}$ and
$\overline{C}$. For $\underline{C}$, while it can be set to 0, a reasonable
value of $\underline{C}$ can be estimated from the data as discussed in Appendix
\ref{sec: minC-est}. The value of $\overline{C}$ defines a region for Lipschitz
constants where the adaptive CI is intended to perform well. Therefore, a
researcher may choose $\overline{C}$ to be some non-conservative potential value
of the Lipschitz constant of the regression function. As discussed above, the
coverage probability is not affected whatever the value of $\overline{C}$ is, so
it will not be a burdensome task for a researcher to choose
$\overline{C}$. Procedure 2 summarizes the steps for constructing the adaptive one-sided CI, including the choice of the Lipschitz spaces to adapt to.
\section{Monte Carlo Simulations}
\label{sec: Monte Carlo}

\begin{table}[b]
  \centering
  \begin{tabular}{c|ccc}
    \hline
    & $x_i \sim \text{unif}(-1,1)$ & Constant variance & $C$ is small \\ 
    \hline
    Design 1 & YES & YES & YES \\ 
    Design 2 & NO & YES & YES \\ 
    Design 3 & YES & NO & YES \\ 
    Design 4 & NO & NO & YES \\ 
    Design 5 & YES & YES & NO \\ 
    Design 6 & NO & YES & NO \\ 
    Design 7 & YES & NO & NO \\ 
    Design 8 & NO & NO & NO \\ 
    \hline
  \end{tabular}
  \caption{Simulation design specifications}
  \label{tab: designs}
\end{table}

We investigate the performance of our minimax and adaptive procedures via a
simulation study. We focus on the case where $d = 1$ with individuals being
treated if $x_i < 0$. We restrict the support of $x_i$ to be $[-1, 1]$. For a
given true sharp RDD parameter $\theta \equiv L_{RD} f$, we consider the
following designs. 
\begin{enumerate}
\item Linear design: $f_{1c}(x) = Cx$, $f_{1t}(x) = f_{1c}(x) + \theta$. We have
  $|f_{1q}'(x)| = C$, and $|f_{1q}''(x)| = 0$ for all $x \in [-1, 1]$ and
  $q \in \{t, c\}$.
\item Modified specification of \cite{armstrong2018optimal_supp}: given some
  ``knots'' $(b_1, b_2)$ such that $0 < b_1 < b_2$, define
  \begin{align*}
    f_{2c}(x) = \frac{3C}{2} (x^2 - 2(x - b_1)_+^2 + 2(x - b_2)_+^2), \hspace{15pt}
    f_{2t}(x) = -f_{2c}(-x) + \theta.
  \end{align*}
  If $b_1 \geq b_2/2$, both functions are increasing. Taking
  $(b_1, b_2) = (1/3, 2/3)$ gives $|f_{2q}'(x)| \leq C$, and
  $|f_{2q}''(x)| = 3C$ for all $x \in [-1, 1]$ and $q \in \{t, c\}$. We also
  have $|f_{2q}'(0)| = 0$.
\item Modified specification of \cite{babii2019isotonic}: define
  \begin{align*}
    f_{3c}(x) = \frac{C}{4}(x^3 + x),
    \hspace{15pt}
    f_{3t}(x) = f_{3c}(x) + \theta.
  \end{align*}
  We have $|f_{3q}'(x)| \leq C$ and $|f_{3q}''(x)| = 3C/2$ for all
  $x \in [-1, 1]$ and $q \in \{t, c\}$. We also have $|f_{3q}'(0)| = C/4$ and
  $|f_{3q}''(0)| = 0$.
\item Nonzero first and second derivatives at $0$: define
  \begin{align*}
    f_{4c}(x) = C((3x + 1)^{1/3} - 1),
    \hspace{15pt}
    f_{4t}(x) = -f_{4c}(-x) + \theta.
  \end{align*}
  We have $|f_{4q}'(x)| \leq C$, and $|f_{4q}''(x)| \leq 2C$ for all
  $x \in [-1, 1]$ and $q \in \{t, c\}$. We also have $|f_{4q}'(0)| = C$ and
  $|f_{4q}''(0)| = 2C$.
\end{enumerate}

For the running variables, we consider $x_i \sim \text{unif}(-1, 1)$ and
$x_i \sim 2 \times \text{Beta}(2,2) - 1$. The latter is used by
\cite{babii2019isotonic}, and gives more observations around the
cutoff. Finally, we consider both homoskedastic and heteroskedastic designs,
$\sigma_1(x) = 1$ and $\sigma_2(x) = \phi(x) / \phi(0)$, where $\phi(x)$ is the
standard normal pdf. The sample size is $n = 500$. Figure
\ref{fig:reg fcns 1dim} in Appendix \ref{sec: aux fig} provides plots for the four regression functions.

We estimate the conditional variance based on local constant kernel regression,
where the initial bandwidth is chosen based on Silverman's rule of thumb. This
is to avoid using a bandwidth selection method based on local linear regression
to ensure that our proposed method works even when the second derivative is very
large. After the estimators are calculated based on the estimated conditional
variance, we construct the CIs following \cite{armstrong2020simple}, who use a
simple way to estimate the variance by
\begin{align*}
  \hat{\sigma}(x_i)^2 =
  \frac{J}{J + 1}
  \left(
  y_i - \frac{1}{J} \sum_{i = 1}^J
  y_{j(i)}
  \right)^2,
\end{align*}
for some fixed $J$, where $j(i)$ denotes the index for the closest observation
to $i$ (with the same treatment status). The default value in their
implementation is $J = 3$, which we follow.
\begin{table}[t]
  \centering
  \begin{tabular}{l|ccc|ccc}
    \hline
    & \multicolumn{3}{c|}{$f = f_1$} & \multicolumn{3}{c}{$f = f_2$} \\
    \hline
    & Length:& RBC& Minimax& Length: & RBC& Minimax \\ 
    & RBC/MM & coverage & coverage & RBC/MM & coverage & coverage \\
    \hline
    Design 1 & 1.094 & 0.925 & 0.968 & 1.097 & 0.924 & 0.942 \\ 
    Design 2 & 1.112 & 0.938 & 0.979 & 1.115 & 0.941 & 0.949 \\ 
    Design 3 & 1.082 & 0.926 & 0.968 & 1.085 & 0.924 & 0.942 \\ 
    Design 4 & 1.099 & 0.936 & 0.979 & 1.103 & 0.940 & 0.950 \\ 
    Design 5 & 1.128 & 0.925 & 0.919 & 1.148 & 0.930 & 0.945 \\ 
    Design 6 & 1.137 & 0.938 & 0.934 & 1.162 & 0.941 & 0.951 \\ 
    Design 7 & 1.117 & 0.926 & 0.920 & 1.139 & 0.927 & 0.945 \\ 
    Design 8 & 1.125 & 0.936 & 0.935 & 1.153 & 0.941 & 0.952 \\ 
    \hline
  \end{tabular}
  \caption{Comparison between RBC and minimax; $f \in \{f_1, f_2\}$}
  \label{fig: rbc comp f1 f2}
\end{table}

\subsection{Minimax procedure}

First, we investigate the performance of the minimax procedure described in
Section \ref{sec: minimax}. We consider different combinations of specifications
on the running variable, the variance function, and the value of $C$: 1) $x_i$
follows either a uniform or a Beta distribution, 2) the variance function is
given by either $\sigma^2 \times \sigma_2^1(x)$ or
$\sigma^2 \times \sigma_2^2(x)$ (where $\sigma^2 = 1/4$), and 3) the true value
of $C$ is either small ($C = 1$) or large ($C = 3$). For the minimax two-sided
CI, we consider the case where we correctly specify the Lipschitz constant in
all cases, setting $C = 3$. The results were calculated with $1,000$
repetitions. The definition of each design is given in Table \ref{tab: designs}.

\begin{table}[t]
  \centering
  \begin{tabular}{l|ccc|ccc}
    \hline
    & \multicolumn{3}{c|}{$f = f_3$} & \multicolumn{3}{c}{$f = f_4$} \\
    \hline
    & Length:& RBC& Minimax& Length: & RBC& Minimax \\ 
    & RBC/MM & coverage & coverage & RBC/MM & coverage & coverage \\
    \hline
    Design 1 & 1.090 & 0.925 & 0.949 & 1.093 & 0.924 & 0.968 \\ 
    Design 2 & 1.109 & 0.937 & 0.951 & 1.111 & 0.938 & 0.977 \\ 
    Design 3 & 1.078 & 0.925 & 0.949 & 1.081 & 0.924 & 0.968 \\ 
    Design 4 & 1.096 & 0.936 & 0.953 & 1.098 & 0.936 & 0.977 \\ 
    Design 5 & 1.094 & 0.923 & 0.962 & 1.119 & 0.920 & 0.930 \\ 
    Design 6 & 1.112 & 0.937 & 0.971 & 1.132 & 0.935 & 0.947 \\ 
    Design 7 & 1.082 & 0.924 & 0.961 & 1.107 & 0.924 & 0.930 \\ 
    Design 8 & 1.099 & 0.936 & 0.971 & 1.119 & 0.933 & 0.947 \\ 
    \hline
  \end{tabular}
  \caption{Comparison between RBC and minimax; $f \in \{f_3, f_4\}$}
  \label{fig: rbc comp f3 f4}
\end{table}

We measure the performance of the minimax procedure by comparing it to one
of the robust bias correction (RBC) procedures by
\cite{calonico2015rdrobust}. We report ratios of the average length
of the CI between the RBC and the minimax procedure, as well as their respective
coverage probabilities. For the robust bias correction procedure, we use the
default implementation provided the \texttt{R} package \texttt{rdrobust}.

\begin{figure}[t]
  \centering \includegraphics[width = 0.8\textwidth]{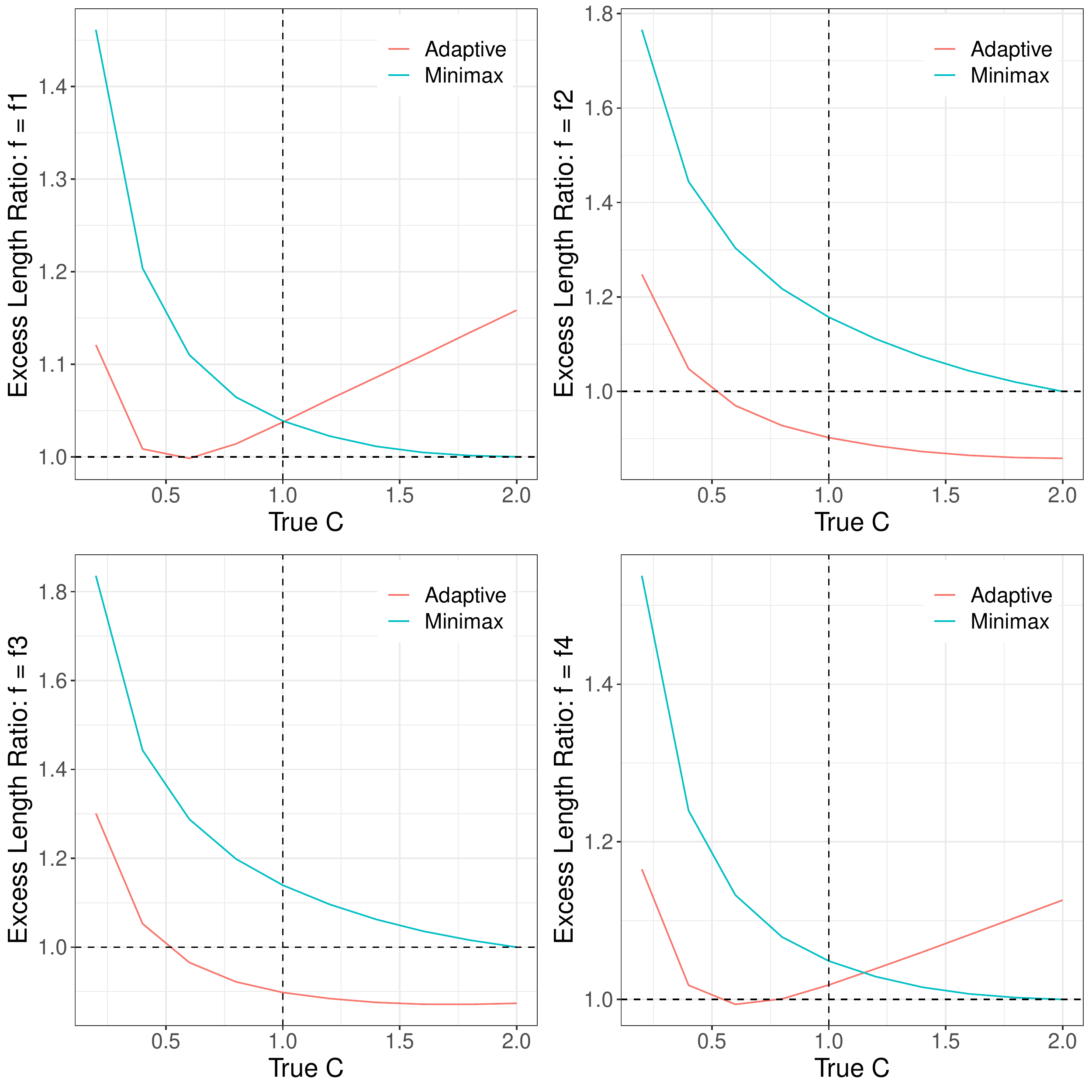}
  \caption{Performance comparison of one-sided CIs}
  \label{fig: adpt comp}
\end{figure}

Tables \ref{fig: rbc comp f1 f2} and \ref{fig: rbc comp f3 f4} display the
results. For each regression function, the first column shows the ratio of the
length of the CIs, and the second and the third columns show the coverage
probabilities of the RBC and the minimax procedures. Despite the fact that the
minimax CIs are shorter than the RBC CIs, the coverage probabilities of the
minimax CIs are closer to the nominal level than the RBC CIs. However, the
length comparison here should not be interpreted as the superiority of our
procedure over the RBC method, since the lengths are sensitive to the choice of
$C$ and to the form of the true regression function. Rather, as we have
discussed so far, the relative advantage of our procedure is the use of a more
transparent function space, and incorporating the monotonicity of the regression
function.

\subsection{Adaptive procedure}

Next, we compare the performance of the one-sided CI to two
benchmarks, the minimax and the oracle one-sided CIs. Specifically, we consider
regression functions $f_1,...,f_4$ and vary the value of $C \in [C_l, C_u]$
which determines their smoothness. For each $C \in [C_l, C_u]$, the oracle
procedure adapts to a single Lipschitz constant $C$, as if we know the true
Lipschitz constant, which is infeasible in practice. On the other hand, the
minimax procedure adapts to the largest Lipschitz constant $C_u$; this procedure
is nearly optimal (among feasible procedures) when we do not have monotonicity,
as shown in \cite{armstrong2018optimal}. Both the oracle and the minimax procedures take monotonicity into account.

We take $(C_l, C_u) = (1/5, 2)$ and consider only Design 1. For the adaptive
procedure, we take $(\underline{C}, \overline{C}) = (1/5, 1)$. We expect that
the performance of the adaptive procedure will be better over $C \in [1/5, 1]$
than over $C \in [1, 2]$.

Figure \ref{fig: adpt comp} shows the relative excess lengths of the minimax and
adaptive CIs compared to the oracle procedure. First, we note that the adaptive
CIs are at most 30\% longer than the oracle procedure, while the minimax CIs are
sometimes as much as 83\% longer than the oracle. Second, for the case of
$f = f_2$ and $f = f_3$, the adaptive CIs are sometimes even shorter than the
oracle. This is because the derivatives of such regression functions near
$x = 0$ are smaller than $C$, although their maximum derivatives are $C$
globally. This shows that the adaptive CI adapts to the \emph{local} smoothness
of the regression function, which is another advantage of using the adaptive
procedure.
\section{Empirical Illustration}
\label{sec: emp}
In this section, we revisit the analysis by \cite{lee2008randomized}. The running variable $x_i \in [-100, 100]$ is the
Democratic margin of victory in each election, and the outcome variable
$y_i \in [0, 100]$ is the Democratic vote share in the next election. The treatment is the incumbency of the Democratic party, with the cutoff given
by 0. Therefore, a positive treatment effect indicates that there is an electoral advantage
to incumbent candidates.

\begin{figure}[th!]
  \centering \includegraphics[width = 0.6\textwidth]{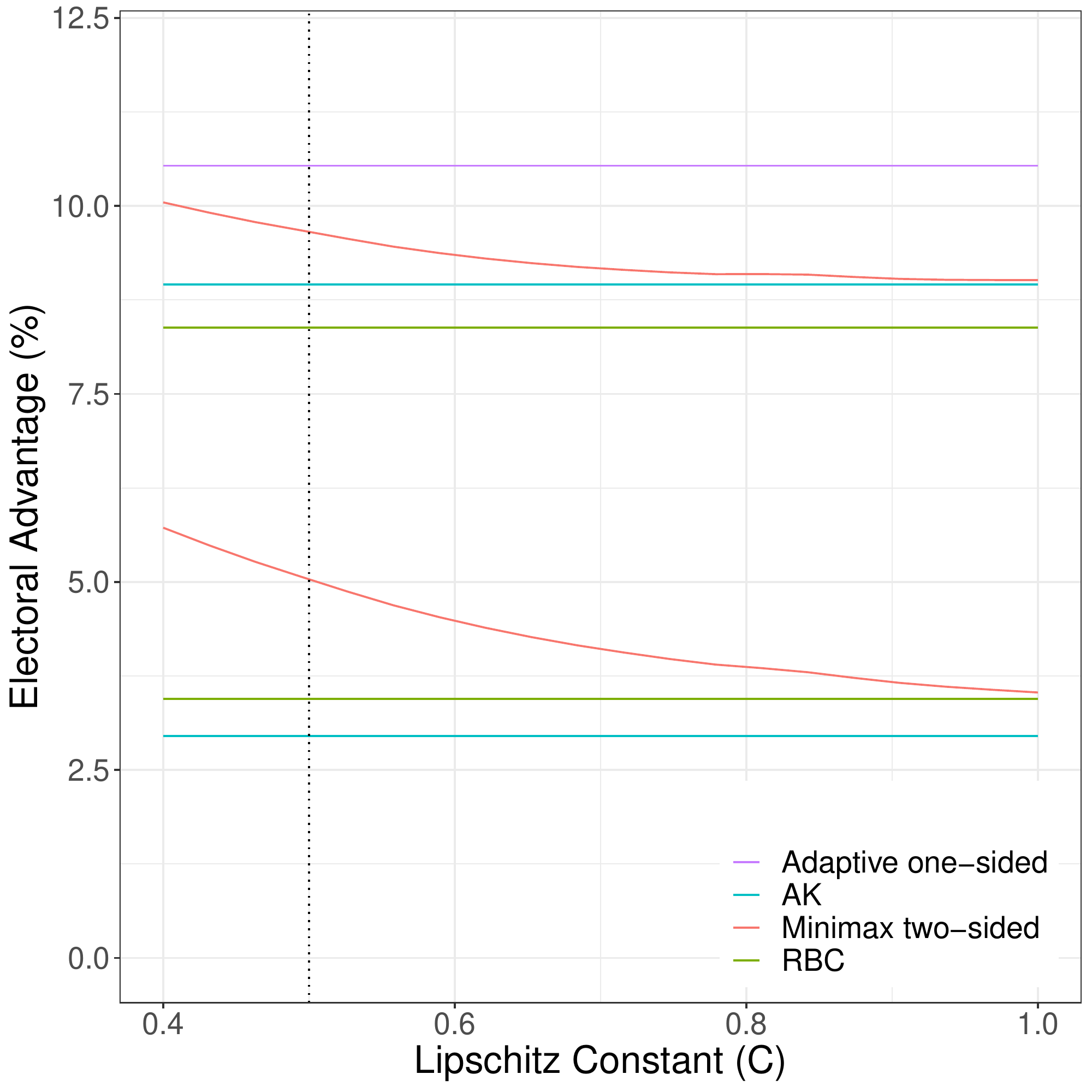}
  \caption{\cite{lee2008randomized} example. The red line (Minimax two-sided) plots our minimax
    optimal CI, while the blue line (RBC) and the green line (AK) plot the CIs
    constructed using the methods of \cite{calonico2015rdrobust} and
    \cite{armstrong2018optimal}, respectively. The purple line (Adaptive one-sided) plots the adaptive one-sided upper CI with $C = \infty$. The vertical dotted line indicates
    $C = 1/2$, our preferred specification.}
  \label{fig: lee 08}
\end{figure}

This empirical application nicely fits the setting we consider in this
paper. For monotonicity, it is plausible to assume that a party's vote share
increases on average in that party's previous vote share. Moreover, for
Lipschitz continuity, it is plausible that a unit increase in the previous
election vote share can predict the vote share increase in the next election by
no more than a single unit. This reflects the viewpoint that the previous
election vote share is a noisy measure of a party's popularity in the current
election. Since $x_i$ is the vote margin instead of the vote share, this
translates to setting $C = 1/2$.

Figure \ref{fig: lee 08} plots the CIs constructed from our minimax optimal
procedure for different values of $C \in [0.4, 1]$, which includes our preferred
specification $C = 1/2$. We also include CIs using the robust bias correction
method of \cite{calonico2015rdrobust} and the minimax CI using a second
derivative bound as in \cite{armstrong2018optimal}. For the latter, the bound on
second derivative is set as $M = 1/10$, which is the largest bound used in their
empirical analysis. The nominal coverage probability is $0.95$ for all CIs. The CIs obtained by the robust bias correction and the second derivative bound are given by $[3.44, 8.38]$ and $[2.95, 8.96]$,
respectively. Our CI has a larger lower bound and upper bound compared to other
methods. Especially, at our preferred specification $C = 1/2$, the CI is given
by $[5.03, 9.66]$, implying a potentially larger size of incumbents' electoral
advantage compared to other inference methods. At this specification, our CI is also relatively shorter than others. For the same dataset, \cite{babii2019isotonic} give a CI of $[6.6, 26.5]$, whose upper bound is larger than other procedures.

As discussed above, since the Lipschitz constant C  has a clear interpretation,
a sensitivity analysis varying the
values of the Lipschitz constant is especially useful in strengthening the
credibility of the inference results. Specifically, we can calculate the size of
the smoothness parameter under which the effect of incumbency becomes
insignificant. It turns out that our minimax CI contains $0$ when $C$ is
larger than 16. The first derivative bound $C = 16$ roughly implies that one
unit increase in the previous election vote share can predict as much as a 32 unit
increase in the next election vote share, which is quite a large
amount. Therefore, we conclude that the significance of the incumbency effect is
robust over a reasonable set of assumptions on the unknown regression function.

Given that various CIs considered here are obtained from different assumptions on the regression function, it would be an interesting exercise to see what we can infer about the RD parameter under a minimal assumption. To this end, we construct an adaptive one-sided CI which maintains coverage over all monotone functions. The discussion in Section \ref{subsec: direction} implies that we can construct an adaptive upper CI in this empirical setting. We take $(\underline{C}, \overline{C}) = (0.1, 0.5)$, reflecting the belief that the true Lipschitz constant is likely to be less than $0.5$. To make the one-sided CI comparable with the other two-sided CIs, the nominal coverage probability of the upper CI is set as $0.975$, and the resulting upper bound is 10.52\%. While this upper CI maintains coverage over all monotone functions, the resulting upper bound is not significantly greater than the upper bounds of the other CIs, which reflects the adaptive nature of the upper CI.

\section{Conclusion}
\label{sec: conc}
In this paper, we proposed a minimax two-sided CI and an adaptive
one-sided CI when the regression function is assumed to be monotone and has
bounded first derivative. We showed our procedure achieves uniform coverage
under easy-to-interpret conditions and can be used to construct either a
two-sided CI with a minimax optimal length or a one-sided one whose excess
length adapts to the smoothness of the unknown regression function. There are
two extensions that we find interesting.\\[-2ex]

\noindent \textbf{Fuzzy RDDs.} There are various RDD applications where
compliance to a treatment status is only partial. Therefore, it would be of
interest to extend our approach to the fuzzy RDD setting, for example, by making
monotonicity and Lipschitz continuity assumptions on the treatment propensity
$p(x) = P(t_i = 1|x_i = x)$, where $t_i$ is the treatment indicator for
individual $i$. This approach will complement the minimax optimal approaches to
fuzzy RDDs using second derivative bounds by \cite{armstrong2020simple} and
\cite{noack2019bias}.\\[-2ex]

\noindent \textbf{Weigthed CATE. } In multi-score RDDs, \cite{imbens2019optimized} suggest estimating the weighted average of conditional treatment effects over different boundary points to make inference more precise. Since the weighted average parameter is a linear functional of the regression function, we can also adjust our framework to conduct inference on the parameter. On the other hand, a closed-form solution might not exist, and how to computationally construct the confidence interval for the weighted average parameter under our setting seems to be an interesting research question.
\newpage
\bibliographystyle{ecta} \bibliography{KK_RDD}
\newpage{}
\appendix
\section{Lemmas and Proofs}
\label{app: lem and proof}
In this section, we collect auxiliary lemmas and omitted proofs. Before
presenting the results, we state the following definition.
\begin{definition}
  \label{def: mod RD}
  Given two Lipschitz constants $C_{1}$ and $C_{2}$, and some positive constant
  $\delta \geq 0$, we define
  \begin{eqnarray}
    \omega \left( \delta; C_{1}, C_{2} \right) 
    & := & 
           \underset{f_{1} \in  \mathcal{F} (C_1),\ f_{2} \in \mathcal{F} (C_2)} 
           {\sup} L_{RD} f_{2} - L_{RD} f_{1} \nonumber \\
    & \text{ } & \text{s.t.}
                 \sum_{i=1}^{n} \left( 
                 \frac{f_{2} (x_{i}) - f_{1} (x_{i})}{\sigma (x_{i})}
                 \right)^{2} \leq \delta^{2}. \label{eq:mod_of_cont}
  \end{eqnarray}
  The quantity $\omega \left( \delta; C_{1}, C_{2} \right)$ is called the
  \emph{ordered modulus of continuity} of $\mathcal{F} (C_1)$ and
  $\mathcal{F} (C_2)$ for the parameter $L_{RD}f$.
\end{definition}
\subsection{Lemmas}

%
%
\begin{lemma}
  \label{lem: Inv_mod}
  Given some pair of numbers $C, C' \geq 0$ and $\delta \geq 0$, define
  $\omega_q(\delta; C, C')$ for each $q \in \{t,c\}$ to be the solution to the
  following equation
  \begin{align*}
    \sum_{x_{i} \in \mathcal{X}_{q}} \left[ \omega_q(\delta; C, C') - C \left\lVert (x_{i})_{\mathcal{V}+} \right\rVert - C'  \left\lVert (x_{i})_{\mathcal{V}-} \right\rVert \right]_{+}^{2} / \sigma^{2}(x_{i}) 
    = \delta^{2}.
  \end{align*}
  Then, we have
  \begin{align}
    \omega_q(\delta; C, C') = \underset{
    f_{1} \in \mathcal{F} (C),
    f_{2} \in \mathcal{F} (C')}
    {\sup} f_{2} (0) - f_{1} (0)\nonumber
    \\ \text{s.t. }
    \sum_{x_{i} \in \mathcal{X}_{q}}
    \left( \frac{f_{2} (x_{i}) - f_{1} (x_{i})}{\sigma (x_{i})}
    \right)^{2} \leq \delta^{2}. \label{eq:Mod_def_f(0)}
  \end{align}
\end{lemma}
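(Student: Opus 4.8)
The plan is to prove the two inequalities that together give the claimed identity. Throughout write $g := f_2 - f_1$, $V := f_2(0) - f_1(0)$, and define $\Phi(\omega) := \sum_{x_i \in \mathcal{X}_q} [\omega - C \|(x_i)_{\mathcal{V}+}\| - C' \|(x_i)_{\mathcal{V}-}\|]_+^2 / \sigma^2(x_i)$. Since each summand is continuous and nondecreasing in $\omega$ and $\Phi$ ranges from $0$ to $+\infty$, the defining equation $\Phi(\omega_q(\delta; C, C')) = \delta^2$ has a well-defined solution; the monotonicity of $\Phi$ is what I will use to convert inequalities on $\Phi$ into inequalities on $\omega$. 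The objective and the constraint in (\ref{eq:Mod_def_f(0)}) depend on $(f_1, f_2)$ only through $V$ and the values $g(x_i)$ at the design points in $\mathcal{X}_q$, so the whole problem reduces to understanding the feasible configurations of these numbers.

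For the upper bound ($\omega_q \geq \sup$), the key is a single per-point inequality combining monotonicity and Lipschitz continuity. Fix $x_i \in \mathcal{X}_q$ and pass through the intermediate point $(x_i)_{\mathcal{V}+}$: moving from $(x_i)_{\mathcal{V}+}$ to $x_i$ only decreases the monotone coordinates and leaves the others fixed, so (\ref{eq: monotone}) gives $f_1(x_i) - f_1((x_i)_{\mathcal{V}+}) \leq 0$, while (\ref{eq: Lipschitz}) gives $f_1((x_i)_{\mathcal{V}+}) - f_1(0) \leq C \|(x_i)_{\mathcal{V}+}\|$; hence $f_1(x_i) - f_1(0) \leq C \|(x_i)_{\mathcal{V}+}\|$. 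The symmetric argument for $f_2$ through $(x_i)_{\mathcal{V}-}$ yields $f_2(0) - f_2(x_i) \leq C' \|(x_i)_{\mathcal{V}-}\|$. Adding these gives $g(x_i) \geq V - C \|(x_i)_{\mathcal{V}+}\| - C' \|(x_i)_{\mathcal{V}-}\|$, so $g(x_i)^2 \geq [V - C \|(x_i)_{\mathcal{V}+}\| - C' \|(x_i)_{\mathcal{V}-}\|]_+^2$. Summing over $\mathcal{X}_q$ with weights $1/\sigma^2(x_i)$ and invoking the constraint gives $\Phi(V) \leq \delta^2 = \Phi(\omega_q)$, and monotonicity of $\Phi$ forces $V \leq \omega_q$.

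For attainment ($\sup \geq \omega_q$) I would exhibit an extremal pair making every bound above tight. Writing $\omega := \omega_q(\delta; C, C')$, the extremal difference is the truncated tent $g^*(x) = [\omega - C \|(x)_{\mathcal{V}+}\| - C' \|(x)_{\mathcal{V}-}\|]_+$, which satisfies $g^*(0) = \omega$ and $\sum_{x_i \in \mathcal{X}_q} g^*(x_i)^2 / \sigma^2(x_i) = \Phi(\omega) = \delta^2$ by definition of $\omega$. It remains to split $g^*$ as $f_2^* - f_1^*$ with $f_1^* \in \mathcal{F}(C)$ and $f_2^* \in \mathcal{F}(C')$: the idea is to assign the decay of the tent in the $(\cdot)_{\mathcal{V}+}$ directions (at rate $C$) to $f_1^*$ and its decay in the $(\cdot)_{\mathcal{V}-}$ directions (at rate $C'$) to $f_2^*$, so that $f_1^*$ climbs at rate $C$ away from $0$ in the positive directions (capped at $\omega$) while $f_2^*$ falls at rate $C'$ in the negative directions (floored at $0$). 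Once such a pair is produced, $V = g^*(0) = \omega$ and the constraint binds, giving $\sup \geq \omega$.

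The hard part will be this attainment step, specifically producing globally valid members of $\mathcal{F}(C)$ and $\mathcal{F}(C')$ under an arbitrary monotone norm. A naive closed form for $(f_1^*, f_2^*)$ turns out to carry the wrong Lipschitz constant in one region when $C \neq C'$, and in dimension $d \geq 2$ the truncation $[\cdot]_+$ forces $f_1^* = f_2^*$ at design points lying off the support of $g^*$ in ``mixed'' orthants. I would handle this by noting that off the support one has $\omega \leq C \|(x_i)_{\mathcal{V}+}\| + C' \|(x_i)_{\mathcal{V}-}\|$, which is exactly the condition under which the set of admissible common values $f_1^*(x_i) = f_2^*(x_i)$ is nonempty; I would then prescribe consistent values on $\{0\} \cup \{x_i \in \mathcal{X}_q\}$ and invoke a monotone Lipschitz (McShane-type) extension built from the directed semimetric $(x, z) \mapsto C \|(z - x)_{\mathcal{V}+}\|$ to obtain global functions. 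Verifying the pairwise compatibility conditions this extension requires---and checking that the intermediate points used in the upper bound lie in $\mathcal{X}_q$---is where monotonicity, the shape of the norm, and the choice of $\omega$ all enter, and is the only genuinely delicate part of the argument.
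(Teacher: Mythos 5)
Your two-part strategy---show $\omega_q$ is an upper bound via per-point inequalities, then exhibit an extremal pair attaining it---is the right one; the paper itself offers no argument here (its ``proof'' is a citation to \cite{kwonkwon2019regpoint}, whose extremal pair is reproduced in Lemma \ref{lem: f2 - f1 _ f2 + f1}). But as written your argument has a genuine gap in each half. In the upper bound, the constraints \eqref{eq: Lipschitz} and \eqref{eq: monotone} defining $\mathcal{F}(C)$ are imposed only for pairs of points lying in the \emph{same} region $\mathcal{X}_q$, so your chain $0 \to (x_i)_{\mathcal{V}+} \to x_i$ for $f_1$ is licensed only if $(x_i)_{\mathcal{V}+}$ lies in $\mathcal{X}_q$ (or its closure). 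This fails in the paper's own leading multivariate designs: with $d=2$, $\mathcal{V}=\{1,2\}$ and $\mathcal{X}_t = \{x : x_1 + x_2 \leq 0\}$, the design point $x_i = (-1,1) \in \mathcal{X}_t$ has $(x_i)_{\mathcal{V}+} = (0,1) \in \mathcal{X}_c$, so neither monotonicity between $x_i$ and $(x_i)_{\mathcal{V}+}$ nor Lipschitz continuity between $(x_i)_{\mathcal{V}+}$ and $0$ is available to you. You do flag ``checking that the intermediate points lie in $\mathcal{X}_q$'' as delicate, but it is not a routine verification to be deferred: for such lower-set domains one must reroute both bounds through $(x_i)_{\mathcal{V}-}$ (for $f_1$: Lipschitz from $x_i$ to $(x_i)_{\mathcal{V}-}$, then monotonicity down to $0$; for $f_2$: Lipschitz from $0$ to $(x_i)_{\mathcal{V}-}$, then monotonicity up to $x_i$), and this rerouting recovers the constant $C\lVert (x_i)_{\mathcal{V}+}\rVert$ only under full monotonicity and norms with the additivity of a weighted $\ell_1$ norm. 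Worse, for an arbitrary connected partition of the hyperrectangle---which is all the lemma's statement assumes---one can build domains in which every monotone chain from $0$ to a design point must exit $\overline{\mathcal{X}_q}$, the directed-distance function $f_1 = C\,d(0 \to \cdot)$ violates your per-point inequality, and the supremum strictly exceeds $\omega_q$; so the lower/upper-set structure of $\mathcal{X}_t$ and $\mathcal{X}_c$ in the MRO/MRA/WAV designs is genuinely needed, and your proof neither isolates nor uses it.

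The attainment half is, by your own admission, a plan rather than a proof: everything rests on splitting the tent $g^*$ as $f_2^* - f_1^*$ with $f_1^* \in \mathcal{F}(C)$ and $f_2^* \in \mathcal{F}(C')$, and you defer exactly this to an unspecified McShane-type extension with unverified compatibility conditions. That splitting is the substantive content of the lemma. The explicit solution (recorded in Lemma \ref{lem: f2 - f1 _ f2 + f1}) is, for $C \leq C'$, $f_1^*(x) = C\lVert (x)_{\mathcal{V}+}\rVert$ and $f_2^*(x) = \max\{\omega_q - C'\lVert (x)_{\mathcal{V}-}\rVert,\, C\lVert (x)_{\mathcal{V}+}\rVert\}$ (with $\min$/$\max$ exchanged when $C > C'$); one then checks directly that each function is monotone and Lipschitz with its own constant, that their difference is $g^*$, and that the quadratic constraint binds. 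You correctly anticipate why the naive split carries the wrong constant when $C \neq C'$, but anticipating the obstruction is not resolving it. Two smaller repairs: your conversion of $\Phi(V) \leq \Phi(\omega_q)$ into $V \leq \omega_q$ needs strict monotonicity of $\Phi$ on $\{\Phi > 0\}$ (fine for $\delta > 0$, but $\delta = 0$, where the defining equation has multiple roots, must be treated separately), and since $0 \in \partial\mathcal{X}_q$ rather than $0 \in \mathcal{X}_q$, all pairwise constraints should first be extended to $\overline{\mathcal{X}_q}$ by continuity before any chain through $0$ is run.
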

\begin{proof}
  See \cite{kwonkwon2019regpoint}. 
\end{proof}
\begin{lemma}
  \label{lem:Mod_RD_form}
  Given some pair $C_1, C_2 \geq 0$ and $\delta \geq 0$, we have
  \[
    \omega \left( \delta; C_{1}, C_{2} \right) = \omega_{t} \left(
      \delta_{t}^{*} \left( \delta; C_{1}, C_{2} \right); C_{1}, C_{2} \right)+
    \omega_{c} \left( \delta_{c}^{*} \left( \delta; C_{2}, C_{1} \right); C_{2},
      C_{1} \right).
  \]
\end{lemma}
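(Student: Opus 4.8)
The plan is to exploit the fact that both the objective and the constraint in the modulus problem \eqref{eq:mod_of_cont} separate cleanly across the treatment and control groups, so that the joint problem reduces to two independent point-estimation moduli linked only through a shared $\ell_2$ budget $\delta$. First I would rewrite the objective. Since $f_t$ and $f_c$ are Lipschitz, hence uniformly continuous, on $\mathcal{X}_t$ and $\mathcal{X}_c$, and $0 \in \mathcal{B} = \partial\mathcal{X}_t \cap \partial\mathcal{X}_c$, the one-sided limits defining $L_{RD}$ coincide with the continuous extensions $f_t(0)$ and $f_c(0)$, so that
\[
  L_{RD} f_2 - L_{RD} f_1 = \big(f_{2,t}(0) - f_{1,t}(0)\big) - \big(f_{2,c}(0) - f_{1,c}(0)\big),
\]
where $f_{q,t}$ and $f_{q,c}$ denote the restrictions of $f_q$ to $\mathcal{X}_t$ and $\mathcal{X}_c$. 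Because $\mathcal{F}(C)$ imposes the Lipschitz and monotonicity restrictions \emph{separately} on $\mathcal{X}_t$ and $\mathcal{X}_c$, the four restrictions $f_{1,t}, f_{2,t}, f_{1,c}, f_{2,c}$ range freely and independently over the relevant classes.

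Next I would split the constraint. The constraint sum in \eqref{eq:mod_of_cont} decomposes as a sum over $x_i \in \mathcal{X}_t$ plus a sum over $x_i \in \mathcal{X}_c$; denote these two partial sums by $\delta_t^2$ and $\delta_c^2$, so that any feasible pair induces an allocation with $\delta_t^2 + \delta_c^2 \leq \delta^2$, and conversely any such allocation is realizable. Holding the allocation fixed, the problem decouples into a treatment subproblem (maximize $f_{2,t}(0) - f_{1,t}(0)$ subject to the $\mathcal{X}_t$-partial sum being at most $\delta_t^2$) and a control subproblem (maximize $f_{1,c}(0) - f_{2,c}(0)$ subject to the $\mathcal{X}_c$-partial sum being at most $\delta_c^2$). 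By Lemma \ref{lem: Inv_mod}, the treatment subproblem has value $\omega_t(\delta_t; C_1, C_2)$. For the control subproblem I would use that the constraint is symmetric in its two arguments and relabel $f_{1,c}$ as the ``$\mathcal{F}(C_1)$-function'' and $f_{2,c}$ as the ``$\mathcal{F}(C_2)$-function,'' so that Lemma \ref{lem: Inv_mod} identifies its value as $\omega_c(\delta_c; C_2, C_1)$.

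It then remains to optimize over the budget allocation:
\[
  \omega(\delta; C_1, C_2) = \sup_{\delta_t^2 + \delta_c^2 \leq \delta^2,\ \delta_t, \delta_c \geq 0} \ \omega_t(\delta_t; C_1, C_2) + \omega_c(\delta_c; C_2, C_1).
\]
Since each $\omega_q(\cdot\,; \cdot, \cdot)$ is non-decreasing in its first argument, the supremum is attained on the boundary $\delta_t^2 + \delta_c^2 = \delta^2$, and the resulting maximization is exactly the one defining $\delta_t^*(\delta; C_1, C_2)$ and $\delta_c^*(\delta; C_2, C_1)$ in Section \ref{sec: setting}. Substituting these maximizers yields the claimed identity.

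The step I expect to require the most care is the interchange that reduces the joint supremum over $(f_1, f_2)$ to ``allocate the budget, then solve each side independently.'' This requires verifying both directions: for ``$\leq$,'' that every feasible $(f_1, f_2)$ is dominated by some allocation together with the two subproblem suprema; and for ``$\geq$,'' that near-optimal configurations for the two decoupled subproblems can be glued into a single feasible pair, which uses precisely the independence of the restrictions on $\mathcal{X}_t$ and $\mathcal{X}_c$ noted above. Monotonicity of the modulus in its budget argument then justifies restricting attention to the full-budget boundary.
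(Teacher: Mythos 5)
Your proposal is correct and takes essentially the same route as the paper's own proof: decompose the RD functional into treatment and control boundary terms, split the quadratic constraint across the two groups, apply Lemma \ref{lem: Inv_mod} to each subproblem (with the $(C_1,C_2)$ arguments swapped on the control side), and optimize over the budget allocation $\delta_t^2 + \delta_c^2 = \delta^2$. Your extra care about gluing the decoupled subproblem solutions into a feasible pair and about restricting to the full-budget boundary via monotonicity of $\omega_q$ merely fills in steps the paper leaves implicit.
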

\begin{proof}[Proof of Lemma \ref{lem:Mod_RD_form}]
  Noting that
  \begin{eqnarray*}
    L_{RD} f_{2} - L_{RD} f_{1} & = & \left( f_{2, t}(0) - f_{2, c}(0) \right)-\left( f_{1, t}(0) - f_{1, c}(0) \right) \\
                                & = & \left( f_{2, t}(0) - f_{1, t}(0) \right) + \left( f_{1, c}(0) - f_{2, c}(0) \right),
  \end{eqnarray*}
  $\omega \left( \delta; C_{1}, C_{2} \right)$ is obtained by solving the
  following problem:
  \begin{align*}
    & \sup_{f_{1, t}, f_{1, c}, f_{2, t}, f_{2, c}}
      \left( f_{2, t}(0) - f_{1, t}(0) \right)
      +\left( f_{1, c}(0) - f_{2, c}(0) \right) \\
    \text{s.t. } & \sum_{i=1}^{n} 
                   \left( \mathbbm{1} \left\{ x_{i} \in \mathcal{X}_{t} \right\} \left( 
                   \frac{f_{2, t}(x_{i}) - f_{1, t}(x_{i})}{\sigma (x_{i})} \right)^{2}
                   + \mathbbm{1} \left\{ x_{i} \in \mathcal{X}_{c} \right\} 
                   \left( 
                   \frac{f_{1, c}(x_{i}) - f_{2, c}(x_{i})}{\sigma (x_{i})} \right)^{2} 
                   \right) \leq \delta^{2}, \\
    & f_{1, t}, f_{1, c} \in \Lambda_{+, \mathcal{V}} 
      \left( C_{1} \right), 
      f_{2, t}, f_{2, c} \in \Lambda_{+, \mathcal{V}} 
      \left( C_{2} \right).
  \end{align*}
  Using the definitions for $\omega_{t} \left( \delta_{t}; C_{1}, C_{2} \right)$
  and $\omega_{c} \left( \delta_{c}; C_{1}, C_{2} \right)$, and Lemma \ref{lem:
    Inv_mod}, we can write
  \begin{align*}
    \omega \left( \delta; C_{1}, C_{2} \right) 
    = \sup_{\delta_{t} \geq 0, \delta_{c} \geq 0, 
    \delta_{t}^{2} + \delta_{c}^{2} = \delta^{2}}
    \omega_{t} \left( \delta_{t}; C_{1}, C_{2} \right) 
    + \omega_{c} \left( \delta_{c}; C_{2}, C_{1} \right),
  \end{align*}
  which concludes the proof.
\end{proof}
\begin{lemma}
  \label{lem: f2 - f1 _ f2 + f1}
  Given some $\left(C_{1}, C_{2} \right) \in \mathbb{R}_{+}^{2}$ and
  $\delta \geq 0$, write $\delta_t^* := \delta_t^*(\delta; C_1, C_2)$,
  $\delta_c^* := \delta_c^*(\delta; C_2, C_1)$.  Then, we can find
  $\left( f_{\delta, 1}^{*}, f_{\delta, 2}^{*} \right) \in \mathcal{F} \left(
    C_{1} \right) \times \mathcal{F} \left( C_{2} \right)$, which satisfy the
  following three conditions: (i)
  $L_{RD} f_{\delta, 2}^{*} - L_{RD} f_{\delta, 1}^{*} = \omega \left( \delta;
    C_{1}, C_{2} \right)$, (ii)
  $\left\lVert\frac{f_{\delta, 2}^{*}-f_{\delta, 1}^{*}}{\sigma}
  \right\rVert_{2}^{2} = \delta^{2}$, and (iii) when we write
  \begin{eqnarray*}
    f_{\delta, 2}^{*} & = & f_{\delta, 2, t}^{*}\mathbbm{1} \left\{ x \in \mathcal{X}_{t} \right\} +f_{\delta, 2, c}^{*}\mathbbm{1} \left\{ x \in \mathcal{X}_{c} \right\} \\
    f_{\delta, 1}^{*} & = & f_{\delta, 1, t}^{*}\mathbbm{1} \left\{ x \in \mathcal{X}_{t} \right\} +f_{\delta, 1, c}^{*}\mathbbm{1} \left\{ x \in \mathcal{X}_{c} \right\} ,
  \end{eqnarray*}
  $\left( f_{\delta, 1, t}^{*} , f_{\delta, 2, t}^{*} \right) \in \mathcal{F}
  \left( C_{1} \right) \times \mathcal{F} \left( C_{2} \right)$ and
  $\left( f_{\delta, 1, c}^{*} , f_{\delta, 2, c}^{*} \right) \in \mathcal{F}
  \left( C_{1} \right) \times \mathcal{F} \left( C_{2} \right)$ satisfy
  \begin{eqnarray}
    f_{\delta, 2, t}^{*} - f_{\delta, 1, t}^{*} 
    & = & \left[ 
          \omega_{t} \left( \delta_{t}^{*}; C_{1}, C_{2} \right)
          - C_{1} \left\lVert (x)_{\mathcal{V}+} \right\rVert 
          - C_{2} \left\lVert (x)_{\mathcal{V}-} \right\rVert \right]_{+} \label{eq: f2t - f1t} \\
    f_{\delta, 1, c}^{*} - f_{\delta, 2, c}^{*} 
    & = & \left[
          \omega_{c} \left( \delta_{c}^{*}; C_{2}, C_{1} \right)
          - C_{1} \left\lVert (x)_{\mathcal{V}-} \right\rVert 
          - C_{2} \left\lVert (x)_{\mathcal{V}+} \right\rVert
          \right]_{+}, \label{eq :f1c - f2c}
  \end{eqnarray}
  \begin{eqnarray}
    f_{\delta, 1, t}^{*} (0) + f_{\delta, 2, t}^{*} (0) 
    & = & 
          \omega_{t} \left( \delta_{t}^{*}; C_{1}, C_{2} \right) \label{eq:f2t - f1t-0} \\
    f_{\delta, 1, c}^{*} (0) + f_{\delta, 2, c}^{*} (0) 
    & = & 
          \omega_{c} \left( \delta_{c}^{*}; C_{2}, C_{1} \right), \label{eq:f2c-f1c-0}
  \end{eqnarray}
  and
  \begin{align}
    & \left( f_{\delta, 2, t}^{*} - f_{\delta, 1, t}^{*} \right) 
      \left( f_{\delta, 1, t}^{*} + f_{\delta, 2, t}^{*} \right) \nonumber \\
    = & \left( f_{\delta, 2, t}^{*} - f_{\delta, 1, t}^{*} \right) 
        \left( \omega_{t} \left( \delta_{t}^{*}; C_{1}, C_{2} \right) 
        + C_{1} \left\lVert (x)_{\mathcal{V}+} \right\rVert 
        - C_{2} \left\lVert (x)_{\mathcal{V}-} \right\rVert \right) \label{eq:f2t+f1c} \\
    & \left( f_{\delta, 1, c}^{*} - f_{\delta, 2, c}^{*} \right) 
      \left( f_{\delta, 1, c}^{*} + f_{\delta, 2, c}^{*} \right) \nonumber \\
    = & \left( f_{\delta, 1, c }^{*} - f_{\delta, 2, c}^{*} \right)
        \left( \omega_{c} \left( \delta_{c}^{*}; C_{2}, C_{1} \right)
        + C_{2} \left\lVert (x)_{\mathcal{V}+} \right\rVert 
        - C_{1} \left\lVert (x)_{\mathcal{V}-} \right\rVert \right). \label{eq:f2c+f1c}
  \end{align}
\end{lemma}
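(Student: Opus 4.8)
The plan is to construct the least favorable pair group by group and then paste the pieces together, using the decomposition $\omega(\delta;C_1,C_2)=\omega_t(\delta_t^*;C_1,C_2)+\omega_c(\delta_c^*;C_2,C_1)$ from Lemma~\ref{lem:Mod_RD_form} together with the characterization of the single-group modulus in Lemma~\ref{lem: Inv_mod}. On $\mathcal{X}_t$ the contribution to $L_{RD}f_2-L_{RD}f_1$ is $f_{2,t}(0)-f_{1,t}(0)$, so I look for a pair whose difference equals the truncated function appearing in the equation that defines $\omega_t$; on $\mathcal{X}_c$ the contribution is $f_{1,c}(0)-f_{2,c}(0)$, so the roles of $C_1,C_2$ and of $(\cdot)_{\mathcal{V}+},(\cdot)_{\mathcal{V}-}$ are interchanged and the construction is entirely analogous. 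Throughout I abbreviate $\omega_t:=\omega_t(\delta_t^*;C_1,C_2)$ and $\omega_c:=\omega_c(\delta_c^*;C_2,C_1)$.

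First I would set
\[
  g_t(x):=\big[\,\omega_t-C_1\|(x)_{\mathcal{V}+}\|-C_2\|(x)_{\mathcal{V}-}\|\,\big]_+,
\]
the target for $f_{2,t}^*-f_{1,t}^*$ in \eqref{eq: f2t - f1t}. On the region where the bracket is positive, combining this difference with the sum prescribed by \eqref{eq:f2t - f1t-0} forces $f_{1,t}^*(x)=C_1\|(x)_{\mathcal{V}+}\|$ and $f_{2,t}^*(x)=\omega_t-C_2\|(x)_{\mathcal{V}-}\|$, and I would take these two explicit functions as the building blocks. To obtain globally valid functions I would glue them by a max/min: when $C_1\le C_2$, set $f_{1,t}^*=C_1\|(x)_{\mathcal{V}+}\|$ and $f_{2,t}^*=\max\{C_1\|(x)_{\mathcal{V}+}\|,\ \omega_t-C_2\|(x)_{\mathcal{V}-}\|\}$; when $C_1>C_2$, set $f_{2,t}^*=\omega_t-C_2\|(x)_{\mathcal{V}-}\|$ and $f_{1,t}^*=\min\{C_1\|(x)_{\mathcal{V}+}\|,\ \omega_t-C_2\|(x)_{\mathcal{V}-}\|\}$. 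In either case $f_{2,t}^*-f_{1,t}^*=g_t$, and the two functions coincide off the active region. The control pieces are built the same way after swapping $C_1\leftrightarrow C_2$ and $(\cdot)_{\mathcal{V}+}\leftrightarrow(\cdot)_{\mathcal{V}-}$.

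With the pair in hand the three conditions are checked directly. Condition (i) follows since the norm terms vanish at $0$, giving $g_t(0)=\omega_t$ and $g_c(0)=\omega_c$, hence $L_{RD}f_2^*-L_{RD}f_1^*=\omega_t+\omega_c=\omega(\delta;C_1,C_2)$ by Lemma~\ref{lem:Mod_RD_form}. Condition (ii) follows because $g_t(x_i)$ is exactly the summand in the equation defining $\omega_t$, so $\sum_{x_i\in\mathcal{X}_t}(g_t(x_i)/\sigma(x_i))^2=(\delta_t^*)^2$ and likewise $(\delta_c^*)^2$ on $\mathcal{X}_c$ (the sign of the difference there is irrelevant after squaring); these sum to $\delta^2$ because $(\delta_t^*)^2+(\delta_c^*)^2=\delta^2$ by the definition of $(\delta_t^*,\delta_c^*)$. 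For condition (iii), \eqref{eq: f2t - f1t} and \eqref{eq :f1c - f2c} hold by construction, \eqref{eq:f2t - f1t-0} and \eqref{eq:f2c-f1c-0} hold by evaluating the building blocks at $0$, and the product identities \eqref{eq:f2t+f1c}, \eqref{eq:f2c+f1c} hold because on the active region $f_{1,t}^*+f_{2,t}^*=\omega_t+C_1\|(x)_{\mathcal{V}+}\|-C_2\|(x)_{\mathcal{V}-}\|$, while off it the common factor $f_{2,t}^*-f_{1,t}^*$ vanishes so both sides are zero.

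The step I expect to be the main obstacle is verifying that the glued functions genuinely lie in $\mathcal{F}(C_1)$ and $\mathcal{F}(C_2)$, that is, that they are simultaneously monotone in the $\mathcal{V}$-coordinates and Lipschitz with the correct constant. This relies on the standing assumption that $\|\cdot\|$ is increasing in the absolute value of each coordinate, from which $x\mapsto\|(x)_{\mathcal{V}+}\|$ is nondecreasing and $x\mapsto\|(x)_{\mathcal{V}-}\|$ is nonincreasing in each monotone coordinate, and on the fact that $x\mapsto(x)_{\mathcal{V}+}$ is coordinatewise $1$-Lipschitz; combined with the elementary observation that a pointwise max or min of functions that are each monotone and $L$-Lipschitz is again monotone and $L$-Lipschitz, this gives membership. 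The case split on $C_1\lessgtr C_2$ is precisely what prevents the larger Lipschitz constant from leaking into the class carrying the smaller one. The remaining verifications are routine algebra.
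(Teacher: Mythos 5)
Your proposal is correct and takes essentially the same approach as the paper: you construct exactly the extremal pair the paper uses (the same max/min gluing with the same case split on $C_1 \lessgtr C_2$, which the paper imports from \cite{kwonkwon2019regpoint}), and you obtain condition (i) from Lemma \ref{lem:Mod_RD_form}, condition (ii) from the defining equation of $\omega_q$ in Lemma \ref{lem: Inv_mod}, and condition (iii) by direct algebra, just as the paper does. The only difference is that you verify class membership (Lipschitz plus monotonicity of the glued functions) by hand, a step the paper delegates to the companion reference.
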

\begin{proof}
  \cite{kwonkwon2019regpoint} show that if we let
  \begin{eqnarray*}
    f_{\delta, 1, t}^{*}(x) & = & 
                                  \begin{cases}
                                    C_{1} \left\lVert (x)_{\mathcal{V}+}
                                    \right\rVert
                                    & \text{if } C_{1} \leq C_{2} \\
                                    \text{min} \left\{ \omega_{t} \left(
                                        \delta_{t}, C_{1}, C_{2} \right) - C_{2}
                                      \left\lVert (x)_{\mathcal{V}-}
                                      \right\rVert,\ C_{1} \left\lVert
                                        (x)_{\mathcal{V}+} \right\rVert \right\}
                                    & \text{otherwise},
                                  \end{cases} \\
    f_{\delta, 2, t}^{*}(x) & = &
                                  \begin{cases}
                                    \text{max} \left\{ \omega_{t} \left(
                                        \delta_{t}, C_{1}, C_{2} \right) - C_{2}
                                      \left\lVert (x)_{\mathcal{V}-}
                                      \right\rVert,\ C_{1} \left\lVert
                                        (x)_{\mathcal{V}+} \right\rVert
                                    \right\}  & \text{if } C_{1} \leq C_{2} \\
                                    \omega_{t} \left( \delta_{t}, C_{1}, C_{2}
                                    \right) - C_{2} \left\lVert
                                      (x)_{\mathcal{V}-} \right\rVert &
                                    \text{otherwise},
                                  \end{cases}
  \end{eqnarray*}
  we have
  $\left( f_{\delta, 1, t}^{*} , f_{\delta, 2, t}^{*} \right) \in \mathcal{F}
  \left( C_{1} \right) \times \mathcal{F} \left( C_{2} \right)$ and these
  functions solve (\ref{eq:Mod_def_f(0)}) for $q = t$, $C = C_1$, and
  $C' = C_2$. Likewise, if we let
  \begin{eqnarray*}
    f_{\delta, 1, c}^{*}(x) & = & 
                                  \begin{cases}
                                    \text{max} \left\{ \omega_{c} \left(
                                        \delta_{c}, C_{2}, C_{1} \right) - C_{1}
                                      \left\lVert (x)_{\mathcal{V}-}
                                      \right\rVert,\ C_{2} \left\lVert
                                        (x)_{\mathcal{V}+} \right\rVert
                                    \right\}  & \text{if } C_{2} \leq C_{1} \\
                                    \omega_{c} \left( \delta_{c}, C_{2}, C_{1}
                                    \right) - C_{1} \left\lVert
                                      (x)_{\mathcal{V}-} \right\rVert &
                                    \text{otherwise},
                                  \end{cases}\\
    f_{\delta, 2, c}^{*}(x) & = &
                                  \begin{cases}
                                    C_{2} \left\lVert (x)_{\mathcal{V}+}
                                    \right\rVert
                                    & \text{if } C_{2} \leq C_{1} \\
                                    \text{min} \left\{ \omega_{c} \left(
                                        \delta_{c}, C_{2}, C_{1} \right) - C_{1}
                                      \left\lVert (x)_{\mathcal{V}-}
                                      \right\rVert,\ C_{2} \left\lVert
                                        (x)_{\mathcal{V}+} \right\rVert \right\}
                                    & \text{otherwise,}
                                  \end{cases}
  \end{eqnarray*}
  we have
  $\left( f_{\delta, 1, c}^{*} , f_{\delta, 2, c}^{*} \right) \in \mathcal{F}
  \left( C_{1} \right) \times \mathcal{F} \left( C_{2} \right)$ and these
  functions solve (\ref{eq:Mod_def_f(0)}) for $q = c$, $C = C_2$, and
  $C' = C_1$. Then, the equations (\ref{eq: f2t - f1t}) -- (\ref{eq:f2c+f1c}) in
  the statement of this lemma follow from the above formulas and Lemma
  \ref{lem:Mod_RD_form}.
\end{proof}
\begin{lemma}
  \label{lem: om_prime}Given some
  $\left(C_{1}, C_{2} \right) \in \mathbb{R}^{2}$ and $\delta \geq 0$, let
  $\omega' \left( \delta; C_{1}, C_{2} \right) = \frac{\partial}{\partial\delta}
  \omega \left( \delta; C_{1}, C_{2} \right)$ and write
  $\delta_t^* := \delta_t^*(\delta; C_1, C_2)$,
  $\delta_c^* := \delta_c^*(\delta; C_2, C_1)$.  Then, we have
  \begin{eqnarray*}
    \omega' \left( \delta; C_{1}, C_{2} \right) & = &
                                                      \frac{\delta}{
                                                      \sum_{x_{i} \in \mathcal{X}_{t}} 
                                                      \left[ \omega_{t} \left( \delta_{t}^{*}; C_{1}, C_{2} \right) 
                                                      - C_{1} \left\lVert (x)_{\mathcal{V}+} \right\rVert 
                                                      - C_{2} \left\lVert (x)_{\mathcal{V}-} \right\rVert 
                                                      \right]_{+} / \sigma^{2}(x_{i})} \\
                                                & = & 
                                                      \frac{\delta}{
                                                      \sum_{x_{i} \in \mathcal{X}_{c}} 
                                                      \left[ \omega_{c} \left( \delta_{c}^{*}; C_{2}, C_{1} \right) 
                                                      - C_{1} \left\lVert (x)_{\mathcal{V}-} \right\rVert 
                                                      - C_{2} \left\lVert (x)_{\mathcal{V}+} \right\rVert 
                                                      \right]_{+} / \sigma^{2}(x_{i})}.
  \end{eqnarray*}
\end{lemma}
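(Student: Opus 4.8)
The plan is to combine the additive decomposition from Lemma~\ref{lem:Mod_RD_form} with implicit differentiation of the defining equations for $\omega_t$ and $\omega_c$, and then to dispose of the allocation of $\delta$ across the two groups via the envelope theorem. Write $\delta_t^* := \delta_t^*(\delta; C_1, C_2)$ and $\delta_c^* := \delta_c^*(\delta; C_2, C_1)$, so that Lemma~\ref{lem:Mod_RD_form} gives $\omega(\delta; C_1, C_2) = \omega_t(\delta_t^*; C_1, C_2) + \omega_c(\delta_c^*; C_2, C_1)$, where $(\delta_t^*, \delta_c^*)$ maximize $\omega_t(\cdot; C_1, C_2) + \omega_c(\cdot; C_2, C_1)$ subject to $(\delta_t)^2 + (\delta_c)^2 = \delta^2$. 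Viewing $\omega$ as the value function of this program is what lets me read off $\omega'$ from the multiplier.

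First I would differentiate the implicit definition of $\omega_q$. Take $q = t$ with $(C, C') = (C_1, C_2)$ and abbreviate $g_t(x_i) := [\omega_t(\delta_t; C_1, C_2) - C_1 \lVert (x_i)_{\mathcal{V}+}\rVert - C_2 \lVert (x_i)_{\mathcal{V}-}\rVert]_+$. The key observation is that $a \mapsto [a]_+^2$ is continuously differentiable with derivative $2[a]_+$, so the left-hand side of the defining equation is $C^1$ in $\omega_t$ despite the kink in each summand; the implicit function theorem then applies whenever $\sum_{x_i \in \mathcal{X}_t} g_t(x_i)/\sigma^2(x_i) > 0$, which holds for $\delta > 0$. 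Differentiating $\sum_{x_i \in \mathcal{X}_t} g_t(x_i)^2/\sigma^2(x_i) = \delta_t^2$ in $\delta_t$ and solving gives
\[
  \frac{\partial}{\partial \delta_t}\omega_t(\delta_t; C_1, C_2) = \frac{\delta_t}{\sum_{x_i \in \mathcal{X}_t} g_t(x_i)/\sigma^2(x_i)},
\]
and symmetrically for $\omega_c$ with the roles of $C_1, C_2$ and of $(\cdot)_{\mathcal{V}+}, (\cdot)_{\mathcal{V}-}$ interchanged. Next, writing the Lagrangian $\mathcal{L} = \omega_t(\delta_t; C_1, C_2) + \omega_c(\delta_c; C_2, C_1) - \mu[(\delta_t)^2 + (\delta_c)^2 - \delta^2]$ for the allocation problem, the first-order conditions at the optimum read $\partial_{\delta_t}\omega_t(\delta_t^*; C_1, C_2) = 2\mu\delta_t^*$ and $\partial_{\delta_c}\omega_c(\delta_c^*; C_2, C_1) = 2\mu\delta_c^*$.

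Finally I would invoke the envelope theorem: since $\delta$ enters the program only through the constraint, $\omega'(\delta; C_1, C_2) = \partial \mathcal{L}/\partial \delta = 2\mu\delta$ evaluated at the optimum. Eliminating $\mu$ through the $t$-side first-order condition, namely $2\mu = \partial_{\delta_t}\omega_t(\delta_t^*; C_1, C_2)/\delta_t^* = 1/\sum_{x_i \in \mathcal{X}_t} g_t(x_i)/\sigma^2(x_i)$, yields $\omega'(\delta; C_1, C_2) = \delta/\sum_{x_i \in \mathcal{X}_t} g_t(x_i)/\sigma^2(x_i)$, the first displayed expression; using the $c$-side first-order condition instead produces the second, and the two necessarily agree because both equal $2\mu\delta$. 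The main obstacle is justifying the envelope step: I must ensure that $(\delta_t^*, \delta_c^*)$ is a unique maximizer with a well-defined multiplier. I would do this by replacing the equality constraint with $(\delta_t)^2 + (\delta_c)^2 \leq \delta^2$, which is equivalent since $\omega_t, \omega_c$ are strictly increasing in their first argument (clear from the derivative above), turning the problem into maximization of a concave objective over a convex set; concavity of each $\omega_q$ in $\delta_q$ follows from its modulus-of-continuity characterization in Lemma~\ref{lem: Inv_mod}. Boundary cases where $\delta_t^*$ or $\delta_c^*$ vanishes, and the degenerate point $\delta = 0$, would be handled separately by a direct limiting argument.
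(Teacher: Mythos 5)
Your proof is correct, but it follows a genuinely different route from the paper's. The paper's proof is essentially a direct application of Lemma B.3 of \cite{armstrong2016optimal}: because $\mathcal{F}(C)$ is translation invariant and the functions $\iota_t(x) = \mathbbm{1}\left\{ x \in \mathcal{X}_t \right\}$ and $\iota_c(x) = -\mathbbm{1}\left\{ x \in \mathcal{X}_c \right\}$ satisfy $L_{RD}\iota_t = L_{RD}\iota_c = 1$, that lemma yields $\omega'(\delta; C_1, C_2)/\delta = 1\big/\sum_{x_i \in \mathcal{X}_t}\left( f^*_{\delta,2,t}(x_i) - f^*_{\delta,1,t}(x_i)\right)/\sigma^2(x_i)$ (and the analogous $c$-side expression), after which the displayed formulas follow by substituting the closed-form extremal differences (\ref{eq: f2t - f1t}) and (\ref{eq :f1c - f2c}) from Lemma \ref{lem: f2 - f1 _ f2 + f1}. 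You never touch the extremal functions: you implicitly differentiate the defining equations of $\omega_t$ and $\omega_c$, combine them with the decomposition in Lemma \ref{lem:Mod_RD_form}, and recover $\omega'$ as the constraint multiplier via the envelope theorem, with the equality of the two representations coming from the shared multiplier rather than from applying the external lemma twice. The trade-off is clear. The paper's route gets differentiability of $\omega$ and the two-sided agreement for free from the cited result, with no discussion of the allocation problem's maximizers or multipliers, but it leans on an external lemma and on the explicit construction of Lemma \ref{lem: f2 - f1 _ f2 + f1}. Your route is self-contained and elementary (implicit function theorem, first-order conditions, envelope theorem) and uses only the defining equations of $\omega_q$; the cost is exactly the regularity you flag: concavity of $\omega_q$ (which indeed follows from the modulus characterization in Lemma \ref{lem: Inv_mod}, since the classes are convex, the objective is linear, and the $L^2$ constraint is convex), interiority $\delta_t^*, \delta_c^* > 0$ so that $\mu$ can be eliminated on both sides, and $\delta > 0$. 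Deferring those corner cases to a limiting argument is consistent with the paper's own level of rigor---it sets aside the same cases in the footnote to its proof of Lemma \ref{lem: lower adpt cond}.
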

\begin{proof}
  Note that $f \in \mathcal{F} (C)$ implies $f + z \in \mathcal{F} (C)$ for any
  $z \in \mathbb{R}$ and $C \geq 0$. Moreover, letting
  $\iota_{t}(x) := \mathbbm{1} \left\{ x \in \mathcal{X}_{t} \right\} $, we have
  $L_{RD} \left( \iota_{t} \right) = 1$. Then, Lemma B.3 in
  \cite{armstrong2016optimal} implies that
  \begin{eqnarray*}
    \frac{\omega' \left( \delta; C_{1}, C_{2} \right)}{\delta} 
    & = & 
          \frac{1}{
          \sum_{i=1}^{n} \left( f_{\delta, 2}^{*}(x_{i}) 
          - f_{\delta, 1}^{*}(x_{i}) \right) 
          \mathbbm{1} \left\{ x \in \mathcal{X}_{t} \right\} / \sigma^{2}(x_{i})} \\
    & = & 
          \frac{1}{
          \sum_{x_{i} \in \mathcal{X}_{t}} 
          \left( f_{\delta, 2, t}^{*}(x_{i}) 
          - f_{\delta, 1, t}^{*}(x_{i}) \right) / \sigma^{2}(x_{i})},
  \end{eqnarray*}
  where $f_{\delta, j}^{*}(x_{i})$ and $f_{\delta, j,t}^{*}(x_{i})$ for
  $j = 1, 2$ are as defined in Lemma \ref{lem: f2 - f1 _ f2 + f1}.  Then, using
  (\ref{eq: f2t - f1t}) in Lemma \ref{lem: f2 - f1 _ f2 + f1}, we get the first
  equality in the statement of this lemma.  Likewise, if we define
  $\iota_{c}(x) = - \mathbbm{1} \left\{ x \in \mathcal{X}_{c} \right\} $, we
  have $L_{RD} \left( \iota_{c} \right) = 1$, and Lemma B.3 in
  \citet{armstrong2016optimal} implies that
  \begin{eqnarray*}
    \frac{\omega' \left( \delta; C_{1}, C_{2} \right)}{\delta} 
    & = & \frac{1}{
          \sum_{x_{i} \in \mathcal{X}_{c}} 
          \left( f_{\delta, 1, c}^{*}(x_{i}) - f_{\delta, 2, c}^{*}(x_{i}) \right) / \sigma^{2}(x_{i})},
  \end{eqnarray*}
  where $f_{\delta, j,c}^{*}(x_{i})$ for $j = 1, 2$ are as defined in Lemma
  \ref{lem: f2 - f1 _ f2 + f1}. Then, using (\ref{eq :f1c - f2c}) in Lemma
  \ref{lem: f2 - f1 _ f2 + f1}, we get the second equality in the statement of
  this lemma.
\end{proof}
\begin{lemma}
  \label{lem: Kernel_sq_sum}
  Given some pair $(C_2, C_1)$ such that $C_2 \geq C_1 \geq 0$, and for some
  $\delta > 0$, write $\delta_t^* := \delta_t^*(\delta; C_2, C_1)$,
  $\delta_c^* := \delta_c^*(\delta; C_1, C_2)$, and
  \begin{align*}
    h_{t, \delta} = \omega_t^*(\delta, C_2, C_1) \cdot \left(
    \frac{1}{C_2}, \frac{1}{C_1}
    \right),
    \hspace{15pt}
    h_{c, \delta} = \omega_c^*(\delta, C_1, C_2) \cdot \left(
    \frac{1}{C_1}, \frac{1}{C_2}
    \right).
  \end{align*}
  Then, we have
  \begin{align*}
    \sum_{x_{i} \in \mathcal{X}_{t}}
    K(x_i, h_{t,\delta})^2 /  \sigma^{2}(x_{i}) 
    &= (\delta_{t}^{*} / \omega_t^*(\delta, C_2, C_1))^2, \\
    \sum_{x_{i} \in \mathcal{X}_{c}}
    K(x_i, h_{c, \delta})^2 /  \sigma^{2}(x_{i}) 
    &= (\delta_{c}^{*} / \omega_c^*(\delta, C_1, C_2))^2.
  \end{align*}
\end{lemma}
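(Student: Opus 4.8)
The plan is to reduce each identity to the defining equation of $\omega_t$ (respectively $\omega_c$) by a direct scaling computation, treating the two claims symmetrically. I would establish the treatment-group identity first and then obtain the control-group identity by interchanging the roles of $(\cdot)_{\mathcal{V}+}$ and $(\cdot)_{\mathcal{V}-}$ together with the two Lipschitz constants.

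First I would abbreviate $W := \omega_t^*(\delta; C_2, C_1) = \omega_t(\delta_t^*; C_2, C_1)$, so that the bandwidth pair in the statement reads $h_{t,\delta} = (W/C_2,\, W/C_1)$. The key observation is that the operations $(\cdot)_{\mathcal{V}+}$ and $(\cdot)_{\mathcal{V}-}$ commute with division by a positive scalar: since $\max\{z_{(s)}/h,\,0\} = \max\{z_{(s)},\,0\}/h$ for $h > 0$, we have $(x_i/h)_{\mathcal{V}+} = (x_i)_{\mathcal{V}+}/h$ and likewise for $(\cdot)_{\mathcal{V}-}$. Combining this with the positive homogeneity of the norm, $\lVert z/h \rVert = \lVert z \rVert / h$, the two terms inside the kernel become $\lVert (x_i/h_{t,1})_{\mathcal{V}+} \rVert = C_2 \lVert (x_i)_{\mathcal{V}+} \rVert / W$ and $\lVert (x_i/h_{t,2})_{\mathcal{V}-} \rVert = C_1 \lVert (x_i)_{\mathcal{V}-} \rVert / W$, so that
\begin{align*}
  K(x_i, h_{t,\delta}) = \frac{1}{W}\left[\, W - C_2 \left\lVert (x_i)_{\mathcal{V}+} \right\rVert - C_1 \left\lVert (x_i)_{\mathcal{V}-} \right\rVert \,\right]_{+}.
\end{align*}

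Squaring, summing over $x_i \in \mathcal{X}_t$, and pulling out the factor $W^{-2}$, I would then invoke the defining equation for $\omega_t(\,\cdot\,; C_2, C_1)$ evaluated at $\delta_t^*$, namely $\sum_{x_i \in \mathcal{X}_t}[\, W - C_2\lVert (x_i)_{\mathcal{V}+} \rVert - C_1\lVert (x_i)_{\mathcal{V}-} \rVert\,]_+^2/\sigma^2(x_i) = (\delta_t^*)^2$, which yields $\sum_{x_i \in \mathcal{X}_t} K(x_i,h_{t,\delta})^2/\sigma^2(x_i) = (\delta_t^*/W)^2$, as required. The control-group identity follows identically, now with $W_c := \omega_c^*(\delta; C_1, C_2)$, bandwidth $h_{c,\delta} = (W_c/C_1,\, W_c/C_2)$, and the defining equation of $\omega_c(\,\cdot\,; C_1, C_2)$, which pairs $C_1$ with the $\mathcal{V}+$ term and $C_2$ with the $\mathcal{V}-$ term.

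The computation is essentially routine; the only point demanding care is bookkeeping of the argument order. In the definition of $\omega_q(\delta; C_1, C_2)$ the first constant multiplies $\lVert (\cdot)_{\mathcal{V}+} \rVert$ and the second multiplies $\lVert (\cdot)_{\mathcal{V}-} \rVert$, so one must verify that the bandwidth assignment $h_{t,1} = W/C_2$, $h_{t,2} = W/C_1$ pairs $C_2$ with the $\mathcal{V}+$ term and $C_1$ with the $\mathcal{V}-$ term, matching the $\omega_t(\,\cdot\,; C_2, C_1)$ convention exactly, with the opposite pairing in the control case. The hypothesis $C_2 \geq C_1 \geq 0$ plays no role in this algebraic identity and is presumably inherited from the context in which the lemma is applied.
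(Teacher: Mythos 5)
Your proof is correct and follows essentially the same route as the paper: both arguments reduce to writing $K(x_i, h_{t,\delta}) = W^{-1}\left[W - C_2 \lVert (x_i)_{\mathcal{V}+} \rVert - C_1 \lVert (x_i)_{\mathcal{V}-} \rVert\right]_+$ with $W = \omega_t^*(\delta; C_2, C_1)$, and then invoking the equation defining $\omega_t$, which holds with equality at $\delta_t^*$. The only cosmetic difference is that the paper phrases this through the extremal functions of Lemma \ref{lem: f2 - f1 _ f2 + f1}, whose difference $f_{\delta,2,t}^* - f_{\delta,1,t}^*$ is exactly that hinge function, whereas you unpack the definition of $\omega_q$ directly; your side remark that the hypothesis $C_2 \geq C_1$ plays no role in the identity itself is also accurate.
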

\begin{proof}
  Using notations in Lemma \ref{lem: f2 - f1 _ f2 + f1}, we can write
  \begin{align*}
    \sum_{x_{i} \in \mathcal{X}_{t}}
    K \left(x_{i}, h_{t, \delta} \right)^{2} / \sigma^{2}(x_{i})
    = \frac{1}{(\omega_t^*(\delta, C_2, C_1))^{2}} 
    \sum_{x_{i} \in \mathcal{X}_{t}}
    {\left( f_{\delta, 2, q}^{*}(x_{i}) - f_{\delta, 1, q}^{*}(x_{i}) \right)^{2}} / {\sigma^{2}(x_{i})}.
  \end{align*}
  Now, by definitions of $f_{\delta, 1, t}^{*}$ and $f_{\delta, 2, t}^{*}$, we
  have
  $\sum_{x_{i} \in \mathcal{X}_{t}} \left( f_{\delta, 2, t}^{*}(x_{i}) -
    f_{\delta, 1, t}^{*}(x_{i}) \right)^{2} / \sigma^{2}(x_{i}) =
  \delta_{t}^{*2}$, which gives the desired result for the first equation of
  this lemma. The second equation follows from the analogous reasoning.
\end{proof}
\subsection{Proofs of main results}
\begin{proof} [Proof of Lemma \ref{lem: lower adpt cond}]
  Consider a lower confidence interval
  $[\hat{c}, \infty) \in \mathcal{L}_\alpha(C)$. Then, Theorem 3.1 of
  \cite{armstrong2018optimal} implies that
  \begin{align*}
    \sup_{f \in \mathcal{F}(C')}E[L_{RD}f - \hat{c}] \geq 
    \omega(z_{1 - \alpha}, C, C'),
  \end{align*}
  when Assumption \ref{assn: Gaussian temp} holds. The same theorem also implies
  that there exists $[\hat{c}, \infty) \in \mathcal{L}_\alpha(C)$ such that it
  exactly achieves the lower bound above. Therefore, it remains to analyze the
  conditions where $\omega(z_{1 - \alpha}, C, C')$ is bounded by some $A(C')$.

  Now, Lemma \ref{lem:Mod_RD_form} implies that we can write
  \begin{align*}
    \omega(z_{1 - \alpha}, C, C') = \omega_{t} \left( \delta_{t}^{*}, C, C' \right) + \omega_{c} \left( \delta_{c}^{*}, C', C \right),
  \end{align*}
  where $\left( \delta_{t}^{*},\delta_{c}^{*} \right)$ solve
  \begin{align*}
    \sup_{\delta_{t} \geq 0, \delta_{c} \geq 0, \delta_{t}^{2} + \delta_{c}^{2}=z_{1-\alpha}^{2}} \omega_{t} \left( \delta_{t}, C, C' \right) + \omega_{c} \left( \delta_{c}, C, C' \right).
  \end{align*}
  Due to Lemma \ref{lem: Inv_mod},
  $b_{t} = \omega_{t} \left( \delta_{t}^*, C, C' \right)$ solves
  \begin{eqnarray}
    \label{eq: b_t proof of adpt cond}
    \sum_{x_{i} \in \mathcal{X}_{t}} \left[b_{t}- C \left\lVert(x_{i})_{\mathcal{V}+} \right\rVert - C' \left\lVert(x_{i})_{\mathcal{V}-} \right\rVert \right]_{+}^{2} / \sigma^{2}(x_{i}) = (\delta_{t}^*)^{2}.
  \end{eqnarray}
  We first consider sufficiency. Define
  $\underline{\mathcal{X}}_t := (-\infty,0]^{d} \cap \mathcal{X}_t$. Notice that
  for any given $b\in \mathbb{R}$, we have
  \begin{align*}
    & \sum_{x_{i} \in \mathcal{X}_{t}} \left[b- C \left\lVert(x_{i})_{\mathcal{V}+} \right\rVert - C' \left\lVert(x_{i})_{\mathcal{V}-} \right\rVert \right]_{+}^{2} / \sigma^{2}(x_{i})\\
    \geq & \sum_{x_{i}\in\underline{\mathcal{X}}_t} \left[b- C \left\lVert(x_{i})_{\mathcal{V}+} \right\rVert - C' \left\lVert(x_{i})_{\mathcal{V}-} \right\rVert \right]_{+}^{2} / \sigma^{2}(x_{i}).
  \end{align*}
  Therefore, given some $\delta_t \geq 0$, if $b_{t}' = b_{t}'(\delta_t)$ solves
  \[
    \sum_{x_{i}\in\underline{\mathcal{X}_t}} \left[b_{t}'- C
      \left\lVert(x_{i})_{\mathcal{V}+} \right\rVert - C'
      \left\lVert(x_{i})_{\mathcal{V}-} \right\rVert \right]_{+}^{2} /
    \sigma^{2}(x_{i}) = \delta_{t}^{2},
  \]
  $b_{t}' \geq \omega_{t} \left( \delta_{t}, C, C' \right)$ will hold. This
  implies that
  \begin{align*}
    \omega_{t} \left( \delta_{t}^*, C, C' \right)
    \leq \omega_{t} \left( z_{1 - \alpha}, C, C' \right)
    \leq b_{t}'(z_{1 - \alpha}),
  \end{align*}
  where the first inequality is due to the constraint that
  $\delta_{t}^* \geq 0, \delta_{c}^* \geq 0$ and
  $(\delta_{t}^*)^{2} + (\delta_{c}^*)^{2}=z_{1-\alpha}^{2}$.  Note that
  $x_{i} \in \underline{\mathcal{X}}_t$ implies
  \begin{align*}
    \sum_{x_{i}\in\underline{\mathcal{X}}_t} \left[b_{t}'- C \left\lVert(x_{i})_{\mathcal{V}+} \right\rVert - C' \left\lVert(x_{i})_{\mathcal{V}-} \right\rVert \right]_{+}^{2} / \sigma^{2}(x_{i})
    = \sum_{x_{i}\in\underline{\mathcal{X}}_t} \left[b_{t}'- C' \left\lVert(x_{i})_{\mathcal{V}-} \right\rVert \right]_{+}^{2} / \sigma^{2}(x_{i}),
  \end{align*}
  using the assumption that $\mathcal{V} = \{1,...,d\}$. This implies that
  $b_{t}'(z_{1 - \alpha})$ is a function of the data and $C'$. Likewise, if we
  define $\overline{\mathcal{X}}_c := [0, \infty)^d \cap \mathcal{X}_c$, and
  $b_c' = b_c'(\delta_c)$ to be the solution to
  \begin{align*}
    \sum_{x_{i} \in \overline{\mathcal{X}}_c} 
    \left[b_{c}'- C' \left\lVert(x_{i})_{\mathcal{V}+} \right\rVert \right]_{+}^{2} 
    / \sigma^{2}(x_{i}) = \delta_c^2,
  \end{align*}
  $b_{c}'(z_{1 - \alpha})$ is a function of the data and $C'$. Therefore, we can
  set $A(C') = b_{t}'(z_{1 - \alpha}) + b_{c}'(z_{1 - \alpha})$.

  Next, let us consider necessity. First, suppose
  $(-\infty, 0]^d \cap \mathcal{X}_t = \emptyset$. Then, since every term in the
  summation in (\ref{eq: b_t proof of adpt cond}) depends on $C$, increasing $C$
  should necessarily increase the size of $b_t$. Therefore,
  $\omega(z_{1 - \alpha}, C, C')$ cannot be bounded by a term independent of
  $C$. The same reasoning applies to the case with
  $[0, \infty)^d \cap \mathcal{X}_c = \emptyset$, and to the case where
  $\mathcal{V} \subsetneq \{1,...,d\}$, which concludes the
  proof.\footnote{\label{foot: adpt dir proof} To be more precise, the reasoning
    above holds only if $\delta_t^*, \delta_c^* > 0$, and only if there is no
    observation exactly located on the axis. These cases can be ignored when $n$
    is sufficiently large, and when the running variables have a continuous
    distribution.}
\end{proof}
\begin{proof}[Proof of Theorem \ref{thm: minimax}]
  First, given some $\delta \geq 0$, consider the optimization problem
  \begin{eqnarray}
    \omega \left( \delta; C \right) & := & \underset{f_{1} \in \mathcal{F}(C), f_{2} \in \mathcal{F}(C)}{\sup}L_{RD} f_{2} - L_{RD} f_{1} \nonumber \\
                                    & \text{ } & \text{s.t.} \left\lVert\frac{f_{2}-f_{1}}{\sigma} \right\rVert_{2}^{2} \leq \delta^{2}. \label{eq:mod_of_cont-mm}
  \end{eqnarray}
  Denote the solutions to (\ref{eq:mod_of_cont-mm}) by
  $\left( f_{\delta, 1}^{*} , f_{\delta, 2}^{*} \right)$.  Define
  \begin{eqnarray*}
    \widetilde{L} (\delta) & := & \frac{\omega' \left( \delta; C, C \right)}{\delta}\sum_{i=1}^{n}\frac{\left( f_{\delta, 2}^{*}(x_{i}) - f_{\delta, 1}^{*}(x_{i}) \right)y_{i}}{\sigma^{2}(x_{i})} \\
                           &  & -\frac{\omega' \left( \delta; C, C \right)}{2\delta}\sum_{i=1}^{n} \left[\frac{\left( f_{\delta, 2}^{*}(x_{i}) - f_{\delta, 1}^{*}(x_{i}) \right)\left( f_{\delta, 1}^{*}(x_{i})+f_{\delta, 2}^{*}(x_{i}) \right)}{\sigma^{2}(x_{i})} \right]\\
                           &  & +L_{RD} \left( \frac{f_{\delta, 1}^{*}+f_{\delta, 2}^{*}}{2} \right).
  \end{eqnarray*}
  Under these definitions, We can show
  \begin{align*}
    \text{sd} \left(\widetilde{L} (\delta) \right) & = \omega' \left( \delta; C, C \right) \\
    \sup_{f \in \mathcal{F}(C)} {\text{bias}} \left(\widetilde{L} (\delta) \right) & = \frac{1}{2} \left(\omega \left( \delta; C, C \right)-\delta\omega' \left( \delta; C, C \right) \right).
  \end{align*}
  Define
  \[
    \widetilde{\chi} (\delta) = \text{cv}_{\alpha} \left( \frac{1}{2} \left(
        \frac{\sup_{f \in \mathcal{F}(C)} {\text{bias}} \left(\widetilde{L}
            (\delta) \right) }{\text{sd} \left(\widetilde{L} (\delta) \right)}
      \right) \right)\cdot\text{sd} \left(\widetilde{L} (\delta) \right).
  \]
  Then, the result of \citet{Donoho1994} implies that the minimax affine optimal
  CI is given by
  $\left[\widetilde{L} (\delta)-\widetilde{\chi} (\delta),\widetilde{L} (\delta)
    + \widetilde{\chi} (\delta)\right]$, when $\delta$ is chosen to minimize
  $\widetilde{\chi} (\delta)$. Thus.  the proof is done if we can show 1)
  $\widetilde{L} (\delta)$ has the same form as $\widehat{L}^{\text{mm}}$, and
  2) $\text{sd} \left(\widetilde{L} (\delta) \right)$ and
  $\sup_{f \in \mathcal{F}(C)} {\text{bias}} \left(\widetilde{L} (\delta)
  \right)$ and are given as \eqref{eq: sd and sup bias (implementation)}.

  For the first claim, recall that we can write for $j= 1, 2$
  \[
    f_{\delta, j}^{*}(x) =f_{\delta, j,t}^{*}(x) \mathbbm{1} \left\{ x \in
      \mathcal{X}_{t} \right\} +f_{\delta, j,c}^{*}(x) \mathbbm{1} \left\{ x \in
      \mathcal{X}_{c} \right\} .
  \]
  Therefore, we have
  \begin{eqnarray*}
    \widetilde{L} (\delta) & = & \frac{\omega' \left( \delta; C, C \right)}{\delta}\sum_{x_{i} \in \mathcal{X}_{t}}\frac{\left( f_{\delta, 2, t}^{*}(x_{i}) - f_{\delta, 1, t}^{*}(x_{i}) \right)y_{i}}{\sigma^{2}(x_{i})} \\
                           & = & -\frac{\omega' \left( \delta; C, C \right)}{\delta}\sum_{x_{i} \in \mathcal{X}_{c}}\frac{\left( f_{\delta, 1, c}^{*}(x_{i}) - f_{\delta, 2, c}^{*}(x_{i}) \right)y_{i}}{\sigma^{2}(x_{i})} \\
                           &  & -\frac{\omega' \left( \delta; C, C \right)}{2\delta}\sum_{x_{i} \in \mathcal{X}_{t}} \left[\frac{\left( f_{\delta, 2, t}^{*}(x_{i}) - f_{\delta, 1, t}^{*}(x_{i}) \right)\left( f_{\delta, 2, t}^{*}(x_{i})+f_{\delta, 1, t}^{*}(x_{i}) \right)}{\sigma^{2}(x_{i})} \right]\\
                           &  & +\frac{\omega' \left( \delta; C, C \right)}{2\delta}\sum_{x_{i} \in \mathcal{X}_{c}} \left[\frac{\left( f_{\delta, 1, c}^{*}(x_{i}) - f_{\delta, 2, c}^{*}(x_{i}) \right)\left( f_{\delta, 1, c}^{*}(x_{i})+f_{\delta, 2, c}^{*}(x_{i}) \right)}{\sigma^{2}(x_{i})} \right]\\
                           &  & +\frac{f_{\delta, 1, t}^{*}(0)+f_{\delta, 2, t}^{*}(0)}{2}-\frac{f_{\delta, 1, c}^{*}(0)+f_{\delta, 2, c}^{*}(0)}{2}.
  \end{eqnarray*}
  Defining
  \begin{eqnarray*}
    \widetilde{L}_{t} (\delta) & = & \frac{\omega' \left( \delta; C, C \right)}{\delta}\sum_{x_{i} \in \mathcal{X}_{t}}\frac{\left( f_{\delta, 2, t}^{*}(x_{i}) - f_{\delta, 1, t}^{*}(x_{i}) \right)y_{i}}{\sigma^{2}(x_{i})} \\
                               &  & -\frac{\omega' \left( \delta; C, C \right)}{2\delta}\sum_{x_{i} \in \mathcal{X}_{t}} \left[\frac{\left( f_{\delta, 2, t}^{*}(x_{i}) - f_{\delta, 1, t}^{*}(x_{i}) \right)\left( f_{\delta, 2, t}^{*}(x_{i})+f_{\delta, 1, t}^{*}(x_{i}) \right)}{\sigma^{2}(x_{i})} \right]\\
                               &  & +\frac{f_{\delta, 1, t}^{*}(0)+f_{\delta, 2, t}^{*}(0)}{2},
  \end{eqnarray*}
  and
  \begin{eqnarray*}
    \widetilde{L}_{c} (\delta) & = & \frac{\omega' \left( \delta; C, C \right)}{\delta}\sum_{x_{i} \in \mathcal{X}_{c}}\frac{\left( f_{\delta, 1, c}^{*}(x_{i}) - f_{\delta, 2, c}^{*}(x_{i}) \right)y_{i}}{\sigma^{2}(x_{i})} \\
                               &  & -\frac{\omega' \left( \delta; C, C \right)}{2\delta}\sum_{x_{i} \in \mathcal{X}_{c}} \left[\frac{\left( f_{\delta, 1, c}^{*}(x_{i}) - f_{\delta, 2, c}^{*}(x_{i}) \right)\left( f_{\delta, 1, c}^{*}(x_{i})+f_{\delta, 2, c}^{*}(x_{i}) \right)}{\sigma^{2}(x_{i})} \right]\\
                               &  & +\frac{f_{\delta, 1, c}^{*}(0)+f_{\delta, 2, c}^{*}(0)}{2},
  \end{eqnarray*}
  we can see
  $\widetilde{L} (\delta) = \widetilde{L}_{t} (\delta)-\widetilde{L}_{c}
  (\delta)$ holds.

  Now, using the equations (\ref{eq: f2t - f1t}), (\ref{eq:f2t - f1t-0}) and
  (\ref{eq:f2t+f1c}) in Lemma \ref{lem: f2 - f1 _ f2 + f1}, and the first
  representation of $\frac{\omega' \left( \delta; C, C \right)}{\delta}$ in
  Lemma \ref{lem: om_prime}, we get
  \begin{eqnarray*}
    \widetilde{L}_{t} (\delta) & = & \frac{\sum_{x_{i} \in \mathcal{X}_{t}} \left[ \omega_{t} \left( \delta_{t}^{*}; C, C \right)-C \left\lVert (x)_{\mathcal{V}+} \right\rVert -C \left\lVert (x)_{\mathcal{V}-} \right\rVert \right]_{+}y_{i} / \sigma^{2}(x_{i})}{\sum_{x_{i} \in \mathcal{X}_{t}} \left[ \omega_{t} \left( \delta_{t}^{*}; C, C \right)-C \left\lVert (x)_{\mathcal{V}+} \right\rVert -C \left\lVert (x)_{\mathcal{V}-} \right\rVert \right]_{+} / \sigma^{2}(x_{i})} \\
                               &  & -\frac{1}{2}\sum_{x_{i} \in \mathcal{X}_{t}} \left[\frac{\left[ \omega_{t} \left( \delta_{t}^{*}; C, C \right)-C \left\lVert (x)_{\mathcal{V}+} \right\rVert -C \left\lVert (x)_{\mathcal{V}-} \right\rVert \right]_{+}}{\sum_{x_{i} \in \mathcal{X}_{t}} \left[ \omega_{t} \left( \delta_{t}^{*}; C, C \right)-C \left\lVert (x)_{\mathcal{V}+} \right\rVert -C \left\lVert (x)_{\mathcal{V}-} \right\rVert \right]_{+} / \sigma^{2}(x_{i})} \right.\\
                               &  & \left.\times\left(\omega_{t} \left( \delta_{t}^{*}; C, C \right) + C \left\lVert (x)_{\mathcal{V}+} \right\rVert -C \left\lVert (x)_{\mathcal{V}-} \right\rVert\right) / \sigma^{2}(x_{i})\right]\\
                               &  & +\frac{\omega_{t} \left( \delta_{t}^{*}; C, C \right)}{2}, \\
                               &=&
                                   \frac{\sum_{x_i \in \mathcal{X}_t} K(x_i/h_t) y_i / \sigma^2(x_i)}{\sum_{x_i \in \mathcal{X}_t} K(x_i/h_t) /  \sigma^2(x_i)} + a_t(\delta).
  \end{eqnarray*}
  Similarly, for $\widetilde{L}_{c} (\delta)$, using the equations (\ref{eq :f1c
    - f2c}), (\ref{eq:f2c-f1c-0}), and (\ref{eq:f2c+f1c}) in Lemma \ref{lem: f2
    - f1 _ f2 + f1}, and the second representation of
  $\frac{\omega' \left( \delta; C, C \right)}{\delta}$ in Lemma \ref{lem:
    om_prime}, we can get
  \begin{align*}
    \widetilde{L}_{c} (\delta) = \frac{\sum_{x_i \in \mathcal{X}_c} K(x_i/h_c) y_i / \sigma^2(x_i)}{\sum_{x_i \in \mathcal{X}_c} K(x_i/h_c) / \sigma^2(x_i)} + a_c(\delta),
  \end{align*}
  which establishes the claim that $\widetilde{L} (\delta)$ has the same form as
  $\widehat{L}^{\text{mm}}$.

  Next, for the second claim about the standard deviation and the worst-case
  bias, it is sufficient to show
  \begin{eqnarray*}
    \omega' \left( \delta; C, C \right) & = & \frac{\delta / (C             h_{t}(\delta))}
                                              {\sum_{x_{i} \in \mathcal{X}_{t}} K\left(x_{i} / h_{t}(\delta) \right) / \sigma^2(x_i)}, \\
    \omega \left( \delta; C, C \right) & = & C(h_{t}(\delta) + h_{c}(\delta)).
  \end{eqnarray*}
  First, the first equation follows from Lemma \ref{lem: om_prime}.  Next,
  $\omega \left( \delta; C, C \right) = \omega_{t} \left( \delta_{t}^{*}(\delta;
    C); C \right) + \omega_{c} \left( \delta_{c}^{*}(\delta; C); C \right)$
  follows from Lemma \ref{lem:Mod_RD_form}, which implies the second equation,
  which concludes the proof.
\end{proof}
\begin{proof}[Proof of Theorem \ref{thm: adpt-CI}]

  Given some $f\in \mathcal{F}(C)$ and $j \in \{1,...,J\}$, we can write
  \begin{eqnarray*}
    P_{f} \left(L_{RD} f<\hat{c}_{\tau}^L(C_j) \right) & = & P_{f} \left(\hat{c}_{\tau}^L(C_j)-L_{RD} f>0\right) \\
                                                       & = & P_{f} \left( \frac{\hat{c}_{\tau}^L(C_j)-L_{RD} f}{\omega' \left(z_{1-\tau}, C, C_{j} \right)}>0\right) \\
                                                       & = & P_{f} \left( \frac{\hat{c}_{\tau}^L(C_j)-L_{RD} f}{\omega' \left(z_{1-\tau}, C, C_{j} \right)}+z_{1-\tau}>z_{1-\tau} \right) \\
                                                       & \equiv & P_{f} \left(\widetilde{V}_{j, \tau}>z_{1-\tau} \right),
  \end{eqnarray*}
  where $\omega' \left(z_{1-\tau}, C, C_{j} \right)$ is as defined in Therefore,
  defining
  $CI^{L}(\tau):=\left[\underset{1\leq j\leq
      J}{\max}\hat{c}_{\tau}^L(C_j),\infty\right)$, we have
  \begin{eqnarray*}
    P_{f} \left(L_{RD} f\notin CI^{L}(\tau) \right) & = & P_{f} \left(\exists j\text{ s.t. }L_{RD} f<\hat{c}_{\tau}^L(C_j) \right) \\
                                                    & = & P_{f} \left(\underset{1\leq j\leq J}{\text{max}}\widetilde{V}_{j, \tau}>z_{1-\tau} \right).
  \end{eqnarray*}
  Now, we want to find the largest value of $\tau$ such that
  \begin{equation}
    \label{eq: v tilde max cov}
    \underset{f\in \mathcal{F}(C)}{\text{sup}}P_{f} \left(\underset{1\leq j\leq J}{\text{max}}\widetilde{V}_{j, \tau}>z_{1-\tau} \right)\leq\alpha,
  \end{equation}
  therefore satisfying the coverage requirement. First, it easy to see that
  $\left(\widetilde{V}_{1, \tau},...,\widetilde{V}_{J, \tau} \right)$ has a
  multivariate normal distribution. Next, note that the quantiles of
  $\underset{1\leq j\leq J}{\text{max}}\widetilde{V}_{j, \tau}$ is increasing in
  each of $E_{f}\widetilde{V}_{1, \tau},...,E_{f}\widetilde{V}_{j, \tau}$.
  Moreover, the variances and covariances of
  $\left(\widetilde{V}_{1, \tau},...,\widetilde{V}_{J, \tau} \right)$ do not
  depend on the true regression function $f$, by the construction of
  $\hat{c}_{\tau}^L(C_j)$. This means that the quantiles of
  $\underset{1\leq j\leq J}{\text{max}}\widetilde{V}_{j, \tau}$ are smaller than
  those of $\underset{1\leq j\leq J}{\text{max}}V_{j, \tau}$, where
  $\left(V_{1, \tau},...,V_{J, \tau} \right)$ has a multivariate normal
  distribution with a mean given by
  $EV_{j, \tau} = \underset{f\in
    \mathcal{F}(C)}{\text{sup}}E_{f}\widetilde{V}_{j, \tau}$ for $j = 1,...,J$,
  and with the same covariance matrix as
  $\left(\widetilde{V}_{1, \tau},...,\widetilde{V}_{J, \tau} \right)$.
  Therefore, if we take $\tau$ so that $z_{1-\tau}$ is the $1-\alpha$th quantile
  of $\underset{1\leq j\leq J}{\text{max}}V_{j, \tau}$, (\ref{eq: v tilde max
    cov}) is satisfied. Therefore, it remains to show
  $\left( V_{j, \tau} \right)_{j = 1}^J$ is a multivariate normal distribution
  with zero means, unit variances, and covariances as given in (\ref{eq: Cov V_j
    tau}).

  For the mean, by definition of $\hat{c}_{\tau}^L(C_j)$, we have
  \begin{align*}
    \underset{f\in \mathcal{F}(C)}{\text{sup}}E_{f} \left[\hat{c}_{\tau}^L(C_j)-L_{RD} f\right] &=  -z_{1-\tau}\text{sd}(\widehat{L}_\tau(C_j)) \\
                                                                                                &= -z_{1-\tau}\omega'(z_{1-\tau}, C, C_j),
  \end{align*}
  where the last line follows from the discussion in
  \cite{armstrong2018optimal}. Hence, we get $EV_{j, \tau} = 0$. Moreover, this
  also implies $\text{Var}(\widehat{L}_\tau(C_j)) = 1$.
  For the covariance, we have
  \begin{align*}
    & \text{Cov} \left(\widehat{L}_{\tau}(C_j),\widehat{L}_{\tau}(C_k) \right) \\
    = & \text{Cov} \left(\widehat{L}_{t, \tau}(C_j)-\widehat{L}_{c, \tau}(C_j),\widehat{L}_{t, \tau}(C_k)-\widehat{L}_{c, \tau}(C_k) \right) \\
    = & \text{Cov} \left(\widehat{L}_{t, \tau}(C_j),\widehat{L}_{t, \tau}(C_k) \right) + \text{Cov} \left(\widehat{L}_{c, \tau}(C_j),\widehat{L}_{c, \tau}(C_k) \right),
  \end{align*}
  where the last line follows from the independence assumption, and we can show
  \begin{align*}
    & \text{Cov} \left( \frac{\widehat{L}_{t, \tau}(C_j)}{\omega' \left(z_{1-\tau}, C, C_{j} \right)},\frac{\widehat{L}_{t, \tau}(C_k)}{\omega' \left(z_{1-\tau}, C, C_{k} \right)} \right) \\
    = & \frac{\sum_{x_{i} \in \mathcal{X}_{t}}
        K \left(x_{i}, h_{t, \tau}(C_j) \right)
        K \left(x_{i}, h_{t, \tau}(C_k) \right) /
        \sigma^{2}(x_{i})}
        {\sum_{x_{i} \in \mathcal{X}_{t}}
        K \left(x_{i}, h_{t, \tau}(C_j) \right) /
        \sigma^{2}(x_{i})
        \sum_{x_{i} \in \mathcal{X}_{t}}
        K \left(x_{i}, h_{t, \tau}(C_k) \right) /
        \sigma^{2}(x_{i})} \\
    & \times\frac{1}{\omega' \left(z_{1-\tau}, C, C_{j} \right)\omega' \left(z_{1-\tau}, C, C_{k} \right)} \\
    = & \frac{\omega_{t}^*(z_{1-\tau}; C, C_j) \omega_{t}^*(z_{1-\tau}; C, C_k)}{z_{1-\tau}^{2}}\sum_{x_{i} \in \mathcal{X}_{t}}K \left(x_{i}, h_{t, \tau}(C_j) \right)K \left(x_{i}, h_{t, \tau}(C_k) \right) / \sigma^{2}(x_{i}),
  \end{align*}
  where the last equality follows from Lemma \ref{lem: om_prime} and the
  definition of $K \left(x_{i}, h_{t, \tau}(C_j) \right)$.  Likewise, we can
  show
  \begin{align*}
    & \text{Cov} \left(\widehat{L}_{c, \tau}(C_j),\widehat{L}_{c, \tau}(C_k) \right) \\
    = & \frac{\omega_{c}^*(z_{1-\tau}; C_j, C) \omega_{c}^*(z_{1-\tau}; C_k, C)}{z_{1-\tau}^{2}}\sum_{x_{i} \in \mathcal{X}_{t}}K \left(x_{i}, h_{c, \tau}(C_j) \right)K \left(x_{i}, h_{c, \tau}(C_k) \right) / \sigma^{2}(x_{i}),
  \end{align*}
  which proves that the covariance term has the same form as in (\ref{eq: Cov
    V_j tau}).
\end{proof}
\begin{proof}[Proof of Proposition \ref{prop: ell^adpt calc}]
  The last statement is immediate from Lemma \ref{lem:Mod_RD_form} of our paper
  and Theorem 3.1 of \cite{armstrong2018optimal}, so we focus on the former
  statement.
  
  First, we state the means and the covariance matrix of $U_j$'s. Given $\mathcal{C} = \left\{C_1,...,C_J\right\}$, let
  $\tau^* = \tau^*(\mathcal{C})$ be the value of $\tau$ that solves \eqref{eq:
    tau* def}. Then, we have
  \begin{align*}
    EU_j =
    &\frac{\sum_{x_{i}\in\mathcal{X}_{t}}
      K(x_i, h_{t, \tau^*}(C_j))C'\left|\left|(x_i)_{\mathcal{V}-}\right|\right|/
      \sigma^2(x_i)}
      {\sum_{x_{i}\in\mathcal{X}_{t}}
      K(x_i, h_{t, \tau^*}(C_j))/
      \sigma^2(x_i)} \\
    &+
      \frac{\sum_{x_{i}\in\mathcal{X}_{c}}
      K(x_i, h_{c, \tau^*}(C_j))C'\left|\left|(x_i)_{\mathcal{V}+}\right|\right|/
      \sigma^2(x_i)}
      {\sum_{x_{i}\in\mathcal{X}_{c}}
      K(x_i, h_{c, \tau^*}(C_j))/
      \sigma^2(x_i)}\\
    &+
      z_{1 - \tau^*} \text{\emph{sd}} ( \widehat{L}_{\tau^*}(C_j) )
      +
      \sup_{f \in \mathcal{F}(C)} \text{\emph{bias}} (
      \widehat{L}_{\tau^*}(C_j)
      ),
  \end{align*}
  and
  \begin{align*}
    \text{\emph{Cov}}(U_j,U_k) =
    &\frac{\sum_{x_{i}\in\mathcal{X}_{t}}
      K(x_i, h_{t, \tau^*}(C_j))
      K(x_i, h_{t, \tau^*}(C_k))
      /
      \sigma^2(x_i)}
      {\sum_{x_{i}\in\mathcal{X}_{t}}
      K(x_i, h_{t, \tau^*}(C_j))/
      \sigma^2(x_i)
      \sum_{x_{i}\in\mathcal{X}_{t}}
      K(x_i, h_{t, \tau^*}(C_k))/
      \sigma^2(x_i)
      } \\
    &+
      \frac{\sum_{x_{i}\in\mathcal{X}_{c}}
      K(x_i, h_{c, \tau^*}(C_j))
      K(x_i, h_{c, \tau^*}(C_k))/
      \sigma^2(x_i)}
      {\sum_{x_{i}\in\mathcal{X}_{c}}
      K(x_i, h_{c, \tau^*}(C_j))/
      \sigma^2(x_i)
      \sum_{x_{i}\in\mathcal{X}_{c}}
      K(x_i, h_{c, \tau^*}(C_k))/
      \sigma^2(x_i)}.
  \end{align*}

  To prove the statement in the proposition, define
  \[
    f^*_{C'}(x):=-C'\left|\left|(x)_{\mathcal{V}-}\right|\right|1(x\in\mathcal{X}_{t})+C'\left|\left|(x)_{\mathcal{V}+}\right|\right|1(x\in\mathcal{X}_{c}).
  \]
  Then, we show
  \begin{eqnarray*}
    \ell^{\text{adpt}}(C'; \mathcal{C})
    & = & 
          \underset{f\in\mathcal{F}(C')}
          {\sup}
          E_{f}\underset{1 \leq j \leq J}{\min}\left(L_{RD}f-\hat{c}_{\tau^*}^L(C_j)\right)\\
    & =: & \underset{f\in\mathcal{F}(C')}
           {\sup}
           E_{f}\underset{1 \leq j \leq J}{\min}\left(Z_{j}(f)\right)\\
    & = & 
          E_{f^*_{C'}}
          \underset{1 \leq j \leq J}{\min}\left(Z_{j}\left(f^*_{C'}\right)\right) \\
    & = & 
          E_{f^*_{C'}}
          \underset{1 \leq j \leq J}{\min} -\hat{c}_{\tau^*}^L(C_j).
  \end{eqnarray*}
  The first and the last equalities hold by definition (since
  $L_{RD}f^*_{C'} = 0$), so what we have to show is the second to last equality.

  It is sufficient to show that for each $j\in \{1,...,J\}$,
  \begin{eqnarray*}
    \underset{f\in\mathcal{F}(C')}{\sup}E_{f}\left(L_{RD}f-\hat{c}_{\tau^*}^L(C_j)\right)
    &=&
        E_{f^*_{C'}}\left(L_{RD}f^*_{C'}-\hat{c}_{\tau^*}^L(C_j)\right)\\
    &=&
        E_{f^*_{C'}}\left(\hat{c}_{\tau^*}^L(C_j)\right).
  \end{eqnarray*}
  Note that $\left( Z_j (f) \right)_{j = 1}^J$ has a multivariate normal
  distribution with some mean vector $\left( \mu_{j}(f) \right)_{j = 1}^J$ and a
  variance matrix which does not depend on $f$. This implies if $f^*_{C'}$
  maximizes $\mu_{j}(f)$ for all $j = 1,...,J$, it also maximizes
  $E_{f} \; \underset{1 \leq j \leq J}{\min}\left(Z_{j}\left(f\right)\right)$.

  For a shorthand notation, define
  \begin{align*}
    \chi^L_\tau(C') := 
    \sup_{f \in \mathcal{F}(C)} \text{bias}_f \left( \widehat{L}_\tau(C') \right)
    +
    z_{1-\tau}\text{sd} \left(\widehat{L}_\tau(C') \right).
  \end{align*}
  Now, note that we can write
  \begin{align*}
    \underset{f\in\mathcal{F}(C')}{\sup}E_{f}\left(L_{RD}f-\hat{c}_{\tau^*}^L(C_j)\right)
    &=
      \underset{f\in\mathcal{F}(C')}{\sup}E_{f}\left(L_{RD}f-\widehat{L}_{\tau^*}(C_j)+\chi^L_{\tau^*}(C_j)\right) \\
    &=
      \underset{f\in\mathcal{F}(C')}{\sup}E_{f}\left(L_{RD}f-\widehat{L}_{\tau^*}(C_j)\right)
      + \chi^L_{\tau^*}(C_j),
  \end{align*}
  since $\chi^L_{\tau^*}(C_j)$ is a fixed quantity.  The first term in the last
  line can be written as
  \begin{align*}
    & \underset{f\in\mathcal{F}(C')}{\sup}E_{f}\left(L_{RD}f-\widehat{L}_{\tau^*}(C_j)\right) \\
    =\ &  \underset{f_{t,},f_{c}\in\mathcal{F}(C')}{\sup}E_{f}\left[\left(f_{t}(0)-\frac{\sum_{x_{i}\in\mathcal{X}_{t}}K(x_i, h_{t, \tau^*}(C_j))y_{i} / \sigma^2(x_i)}{\sum_{x_{i}\in\mathcal{X}_{t}}K(x_i, h_{t, \tau^*}(C_j)) / \sigma^2(x_i)}\right)\right. \\
    &-
      \left.\left(f_{c}(0)-\frac{\sum_{x_{i}\in\mathcal{X}_{c}}K(x_i, h_{c, \tau^*}(C_j))y_{i} / \sigma^2(x_i)}{\sum_{x_{i}\in\mathcal{X}_{c}}K(x_i, h_{c, \tau^*}(C_j)) / \sigma^2(x_i)}\right)\right]\\
    =\ &   \underset{f_{t}\in\mathcal{F}(C')}{\sup}E_{f_{t}}\left(f_{t}(0)-\frac{\sum_{x_{i}\in\mathcal{X}_{t}}K(x_i, h_{t, \tau^*}(C_j))f_{t}(x_{i}) / \sigma^2(x_i)}{\sum_{x_{i}\in\mathcal{X}_{t}}K(x_i, h_{t, \tau^*}(C_j)) / \sigma^2(x_i)}\right)\\ 
    &-
      \underset{f_{c}\in\mathcal{F}(C')}{\text{inf}}E_{f_{c}}\left(f_{c}(0)-\frac{\sum_{x_{i}\in\mathcal{X}_{c}}K(x_i, h_{c, \tau^*}(C_j))f_{c}(x_{i}) / \sigma^2(x_i)}{\sum_{x_{i}\in\mathcal{X}_{c}}K(x_i, h_{c, \tau^*}(C_j)) / \sigma^2(x_i)}\right).
  \end{align*}
  Note that $f_{t}(0)$ and $f_{c}(0)$ can be normalized to 0 since if we
  consider $\widetilde{f}_{t}=f_{t}+v$ for some $v\in\mathbb{R}$,
  \begin{align*}
    &\widetilde{f}_{t}(0)-\frac{\sum_{x_{i}\in\mathcal{X}_{t}}K(x_i, h_{t, \tau^*}(C_j))\widetilde{f}_{t}(x_{i}) / \sigma^2(x_i)}{\sum_{x_{i}\in\mathcal{X}_{t}}K(x_i, h_{t, \tau^*}(C_j)) / \sigma^2(x_i)} \\
    =\ & f_{t}(0)+v-\frac{\sum_{x_{i}\in\mathcal{X}_{t}}K(x_i, h_{t, \tau^*}(C_j) / \sigma^2(x_i))\left(f_{t}(x_{i})+v\right) / \sigma^2(x_i)}{\sum_{x_{i}\in\mathcal{X}_{t}}K(x_i, h_{t, \tau^*}(C_j)) / \sigma^2(x_i)}\\
    =\ & f_{t}(0)-\frac{\sum_{x_{i}\in\mathcal{X}_{t}}K(x_i, h_{t, \tau^*}(C_j))f_{t}(x_{i}) / \sigma^2(x_i)}{\sum_{x_{i}\in\mathcal{X}_{t}}K(x_i, h_{t, \tau^*}(C_j)) / \sigma^2(x_i)},
  \end{align*}
  and likewise for $f_{c}$.  Therefore, we have
  \begin{align*}
    &\underset{f_{t}\in\mathcal{F}(C')}{\sup}E_{f_{t}}\left(f_{t}(0)-\frac{\sum_{x_{i}\in\mathcal{X}_{t}}K(x_i, h_{t, \tau^*}(C_j))f_{t}(x_{i}) / \sigma^2(x_i)}{\sum_{x_{i}\in\mathcal{X}_{t}}K(x_i, h_{t, \tau^*}(C_j)) / \sigma^2(x_i)}\right) \\
    =\ &
         - \underset{f_{t}\in\mathcal{F}(C'),f_{t}(0)=0}{\inf}\frac{\sum_{x_{i}\in\mathcal{X}_{t}}K(x_i, h_{t, \tau^*}(C_j))f_{t}(x_{i}) / \sigma^2(x_i)}{\sum_{x_{i}\in\mathcal{X}_{t}}K(x_i, h_{t, \tau^*}(C_j)) / \sigma^2(x_i)}.
  \end{align*}
  \cite{kwonkwon2019regpoint} show that the minimizer to this problem is given
  by
  \[
    f_{t}^*(x)=-C'\left|\left|(x)_{\mathcal{V}-}\right|\right|.
  \]
  Likewise, we have
  \begin{align*}
    & - \underset{f_{c}\in\mathcal{F}(C')}{\text{inf}}E_{f_{c}}\left(f_{c}(0)-\frac{\sum_{x_{i}\in\mathcal{X}_{c}}K(x_i, h_{c, \tau^*}(C_j))f_{c}(x_{i}) / \sigma^2(x_i)}{\sum_{x_{i}\in\mathcal{X}_{c}}K(x_i, h_{c, \tau^*}(C_j)) / \sigma^2(x_i)}\right)\\
    =\ &
         \underset{f_{c}\in\mathcal{F}(C'),f_{c}(0)=0}{\text{sup}}\frac{\sum_{x_{i}\in\mathcal{X}_{c}}K(x_i, h_{c, \tau^*}(C_j))f_{c}(x_{i}) / \sigma^2(x_i)}{\sum_{x_{i}\in\mathcal{X}_{c}}K(x_i, h_{c, \tau^*}(C_j)) / \sigma^2(x_i)}.
  \end{align*}
  Again by \cite{kwonkwon2019regpoint}, the maximizer is given by
  \[
    f_{c}^*(x)=C'\left|\left|(x)_{\mathcal{V}+}\right|\right|.
  \]
  Therefore, the worst case expected length is achieved when $f =
  f^*_{C'}$. Lastly, it is easy to see that $- \hat{c}_j^L$'s are jointly
  normally distributed with the mean and the covariance given in the lemma when
  $f = f^*_{C'}$, which establishes our claim.
\end{proof}

\section{Unbiased estimator for $\underline{C}$}
\label{sec: minC-est}

As in \cite{armstrong2018optimal_supp}, we can estimate the lower bound of $C$
using the data. We first explain the possibility for the case with
$d=1$. Suppose $\mathcal{X}_t = \left[ x_{\min},0 \right]$ and
$\mathcal{X}_c = \left(0, x_{\max} \right]$.

First, note that for any $x_2\geq x_1\geq 0$, we have
\begin{equation}
  f_c\left(x_2\right) - f_c\left(x_1 \right) \leq C \left(x_2 - x_1\right).
\end{equation}Let $n_c := \sum_{i=1}^n \mathbbm{1} \left\{x_i \in \mathcal{X}_c \right\}$, and set $a_m^c>0$ such that $\sum_{i=1}^n \mathbbm{1} \left\{x_i \in \mathcal{X}_c,\hspace{2pt} x_i \leq a_m^c \right\} = n_c/2$. Then, we have 
\begin{align}
  &\frac{1}{n_c/2} \left[\sum_{i=1}^n f_c\left(x_i\right) \mathbbm{1} \left\{x_i \in \mathcal{X}_c,\hspace{2pt} x_i > a_m^c \right\} - 
    \sum_{i=1}^n f_c\left(x_i\right) \mathbbm{1} \left\{x_i \in \mathcal{X}_c,\hspace{2pt} x_i \leq  a_m^c \right\} \right] \\
  \leq & \frac{C}{n_c/2}  \sum_{i=1}^n x_i \mathbbm{1} \left\{x_i \in \mathcal{X}_c,\hspace{2pt} x_i > a_m^c \right\},
\end{align}so that we have 
\begin{equation}
  \label{C-lbd-formula}
  C \geq 
  \frac{\frac{1}{n_c/2} \left[\sum_{i=1}^n f_c\left(x_i\right) \mathbbm{1} \left\{x_i \in \mathcal{X}_c,\hspace{2pt} x_i > a_m^c \right\} - 
      \sum_{i=1}^n f_c\left(x_i\right) \mathbbm{1} \left\{x_i \in \mathcal{X}_c,\hspace{2pt} x_i \leq  a_m^c \right\} \right] }
  {\frac{1}{n_c/2} \left[
      \sum_{i=1}^n x_i \mathbbm{1} \left\{x_i \in \mathcal{X}_c,\hspace{2pt} x_i > a_m^c \right\} - \sum_{i=1}^n x_i \mathbbm{1} \left\{x_i \in \mathcal{X}_c,\hspace{2pt} x_i \leq a_m^c \right\} \right]
  }.
\end{equation}
We estimate the RHS of (\ref{C-lbd-formula}), denoted by $\mu_c$, by
\begin{equation}
  \hat{\mu}_c := \frac{\frac{1}{n_c/2} \left[\sum_{i=1}^n y_i \mathbbm{1} \left\{x_i \in \mathcal{X}_c,\hspace{2pt} x_i > a_m^c \right\} - 
      \sum_{i=1}^n y_i \mathbbm{1} \left\{x_i \in \mathcal{X}_c,\hspace{2pt} x_i \leq  a_m^c \right\} \right] }
  {\frac{1}{n_c/2} \left[
      \sum_{i=1}^n x_i \mathbbm{1} \left\{x_i \in \mathcal{X}_c,\hspace{2pt} x_i > a_m^c \right\} - 
      \sum_{i=1}^n x_i \mathbbm{1} \left\{x_i \in \mathcal{X}_c,\hspace{2pt} x_i \leq a_m^c \right\}
    \right]}.
\end{equation}We can easily see that $\hat{\mu}_c$  is an unbiased estimator of $\mu_c$. 

Likewise, we can form a lower bound estimator using the data in $\mathcal{X}_t$
by
\begin{equation}
  \hat{\mu}_t := \frac{\frac{1}{n_t/2} \left[\sum_{i=1}^n y_i \mathbbm{1} \left\{x_i \in \mathcal{X}_t,\hspace{2pt} x_i > a_m^t \right\} - 
      \sum_{i=1}^n y_i \mathbbm{1} \left\{x_i \in \mathcal{X}_t,\hspace{2pt} x_i \leq  a_m^t \right\} \right] }
  {\frac{1}{n_t/2} \left[
      \sum_{i=1}^n x_i \mathbbm{1} \left\{x_i \in \mathcal{X}_t,\hspace{2pt} x_i > a_m^t \right\}-
      \sum_{i=1}^n x_i \mathbbm{1} \left\{x_i \in \mathcal{X}_t,\hspace{2pt} x_i \leq a_m^t \right\}
    \right]},
\end{equation}where  $n_t := \sum_{i=1}^n \mathbbm{1} \left\{x_i \in \mathcal{X}_t \right\}$ and $a_m^t<0$ was set so that $\sum_{i=1}^n \mathbbm{1} \left\{x_i \in \mathcal{X}_t,\hspace{2pt} x_i \leq a_m^t \right\} = n_t/2$. In the  dataset of \cite{lee2008randomized}, it turns out that $\hat{\mu}_c = 0.353$ and $\hat{\mu}_t=0.355$.  

We can use similar reasoning for the case with $d>1$. For example, when $d=d_+$,
we use inequality
\begin{equation}
  f_c\left(x_2\right) - f_c\left(x_1\right) \leq C \left\lVert x_2 - x_1 \right \rVert, 
\end{equation}for any $x_1,x_2$ such that $x_{2r} \geq x_{1r}$ for all $r=1,...,d$. Find two subsets of $\mathcal{X}_c$ , say $\mathcal{X}_{c1}$  and $\mathcal{X}_{c2}$  so that 1) $x_2 \in \mathcal{X}_{c2}$  and  $x_1 \in \mathcal{X}_{c1}$ implies  $x_{2r} \geq x_{1r}$ for all $r=1,...,d$, and 2) the number of observations in each set is equal to $\tilde{n}_c/2$. Index $x_i$'s so that $x_1,...,x_{\tilde{n}_c/2}\in \mathcal{X}_{c1}$ and $x_{\tilde{n}_c/2+1},...,x_{\tilde{n}_c}\in \mathcal{X}_{c2}$. Define some one-to-one mapping $j$ from $\left\{1,2,...,\frac{\tilde{n}_c}{2} \right\}$ to $\left\{\frac{\tilde{n}_c}{2}+1,\frac{\tilde{n}_c}{2}+2,...,\tilde{n}_c \right\}.$ Then, we have
\begin{equation}
  C \geq \frac{\sum_{i=1}^{\tilde{n}_c/2} \left[
      f_c\left(x_{j(i)} \right) - f_c\left(x_i\right)\right]}
  {\sum_{i=1}^{\tilde{n}_c/2} \left\lVert
      x_{j(i)} -x_i \right\rVert}.
\end{equation}Again, the lower bound can be estimated by replacing $f_c\left(x_i\right)$  by $y_i$. 
\section{Examples of Multi-score RDD}
\label{sec: multi RD monotone}

In this section, we discuss RDD applications with monotone multiple running
variables.\footnote{Refer to Appendix A.3 of \cite{babii2019isotonic} for RDD applications with univariate monotone running variables} When presenting these empirical applications, we categorize the class
of RD designs we consider into the following three cases, depending on how the
running variables relate to the assignment of the treatment. While our framework
can potentially cover a much larger class of models, these three settings seem
to be the ones that appear most frequently in empirical studies.\footnote{For
  example, our categorization excludes geographic RD designs, such as those
  analyzed in \citet{dell2010persistent}, \citet{keele2015geographic}, and
  \cite{imbens2019optimized}. While our general framework can incorporate such
  models, we do not consider them since the monotonicity assumption is unlikely
  to hold under such contexts.} We let $x_i \in \mathbb{R}^d$ be the value of
running variable for an individual $i$, and denote the $s$th element of $x_i$ by
$x_{i(s)}$ for $s = 1,...,d$.

\begin{enumerate}
\item \textbf{MRO (Multiple Running variables with ``OR'' conditions):} An
  individual $i$ is treated if there exists some
  $s \in \left\{ 1,...,d\right\} $ such that $x_{i(s)}>0$ (or $\leq0$).
\item \textbf{MRA (Multiple Running variables with ``AND'' conditions):} An
  individual $i$ is treated if $x_{i(s)}>0$ (or $\leq0$) for all $s = 1,...,d$.
\item \textbf{WAV (Weighted AVerage of multiple running variables):} There are
  multiple running variables, and the treatment status is determined by some
  weighted average of those running variables.  Hence, an individual $i$ is
  treated if $\sum_{s=1}^{d}w_{s}x_{i(s)}>0$ (or $\leq0$) for some positive
  weights $w_{1},...,w_{d}$. While we can view this design as an RD design with a
  single running variable $\widetilde{x}_{i} = \sum_{s=1}^{d}w_{s}x_{i(s)}$, we
  may obtain richer information by considering this design as a
  multi-dimensional RD design. For example, consider an RD design where an
  individual $i$ is treated if the average of the normalized math and reading
  scores is greater than 0. Then, we can consider treatment effect parameters at
  different cutoff points, e.g. (math $=0$, reading $=0$), (math $=-1$, reading
  $=1$), or (math $=1$, reading $=-1$).
\end{enumerate}
Below we list empirical examples which correspond to the MRO, MRA or WAV
settings when $d > 1$.

\citet{jacob2004remedial} and \citet{matsudaira2008mandatory} consider the
effect of mandatory summer school on academic achievements in later years. In
their empirical context, summer school attendance is required if a student's
math score or reading score is below a certain threshold, and thus this example
corresponds to the MRO case. The outcome variable is the math or the reading
score in the next year. It seems plausible to assume any test score in the next
year is increasing on average in the test scores of former years, so we can
impose monotonicity in this example. Alternatively, the monotone relationship
between different test subjects (e.g., the reading score in the previous exam as
a running variable and the math score in next exam as an outcome variable) might
be questionable, in which case we can impose a partial monotonicity restriction.
 
\cite{papay2010consequences} consider the effect of a student's failing a high
school exit exam (Massachusetts Comprehensive Assessment System) on the
probability of high school graduation. Since there are two portions of the exam,
mathematics and English, this setting can be also viewed as the MRO case. It
seems reasonable to assume that the graduation probability is increasing in the
exam scores, so we may impose monotonicity in this example as well.
 
\citet{kane2003quasi} investigates the impact of the college price subsidy on
college enrollment decision. To be eligible for the subsidy in one such program
run by California called ``Cal Grant A'', the applicant's high school GPA had to
be above a certain cutoff while her income and assets had be below certain
cutoffs. Hence, this example falls into the MRA case. It seems plausible that
the college enrollment probability is increasing in the high school GPA, while
it can be debatable whether the income and the asset levels have monotone
relationships with the outcome. So in this example we may assume either full or
partial monotonicity depending on empirical researchers' belief.
 
\citet{van2002estimating} analyzes the effect of financial aid offers on
students' college enrollment decisions. In that paper, an East Coast college
offers a financial aid if a student's weighted average of SAT score and GPA
exceeds some threshold level, so this example corresponds to the WAV setting. It
seems also plausible to assume the enrollment probability is increasing on
average with respect to the SAT and GPA scores.

\citet{chay2005central} investigate the effect of a government program that
allocated specific resources on the academic performance of schools. The
resource was allocated to schools whose average of mathematics and language
scores of their students falls below some cutoff point, corresponding to the WAV
setting. It seems reasonable to assume that academic scores of a school are
increasing in the previous test scores of its students on average.

\cite{leuven2007effect} evaluate the effect of government subsidies to schools
on academic achievements. In Netherlands, the subsidy was provided to schools
with more than 70\% of disadvantaged pupils, where the proportion is calculated
as the sum of ethnic minority students and students with parents whose education
level is no more than secondary school. So this example corresponds to the WAV
setting. If we can argue that the academic achievement metrics of a school are
decreasing in its proportion of disadvantaged pupils, we may impose the
monotonicity assumption.

For more examples of RDD with multiple running variables, the reader is referred
to \cite{papay2011extending} and \citet{wong2013analyzing}. We also note that
all of the three settings discussed above become the standard single-dimensional
RD design when $d = 1$.
\newpage

\section{Auxiliary Figure for Section \ref{sec: Monte Carlo}}
\label{sec: aux fig}
\begin{figure}[th!]
  \centering \includegraphics[width =
  0.8\textwidth]{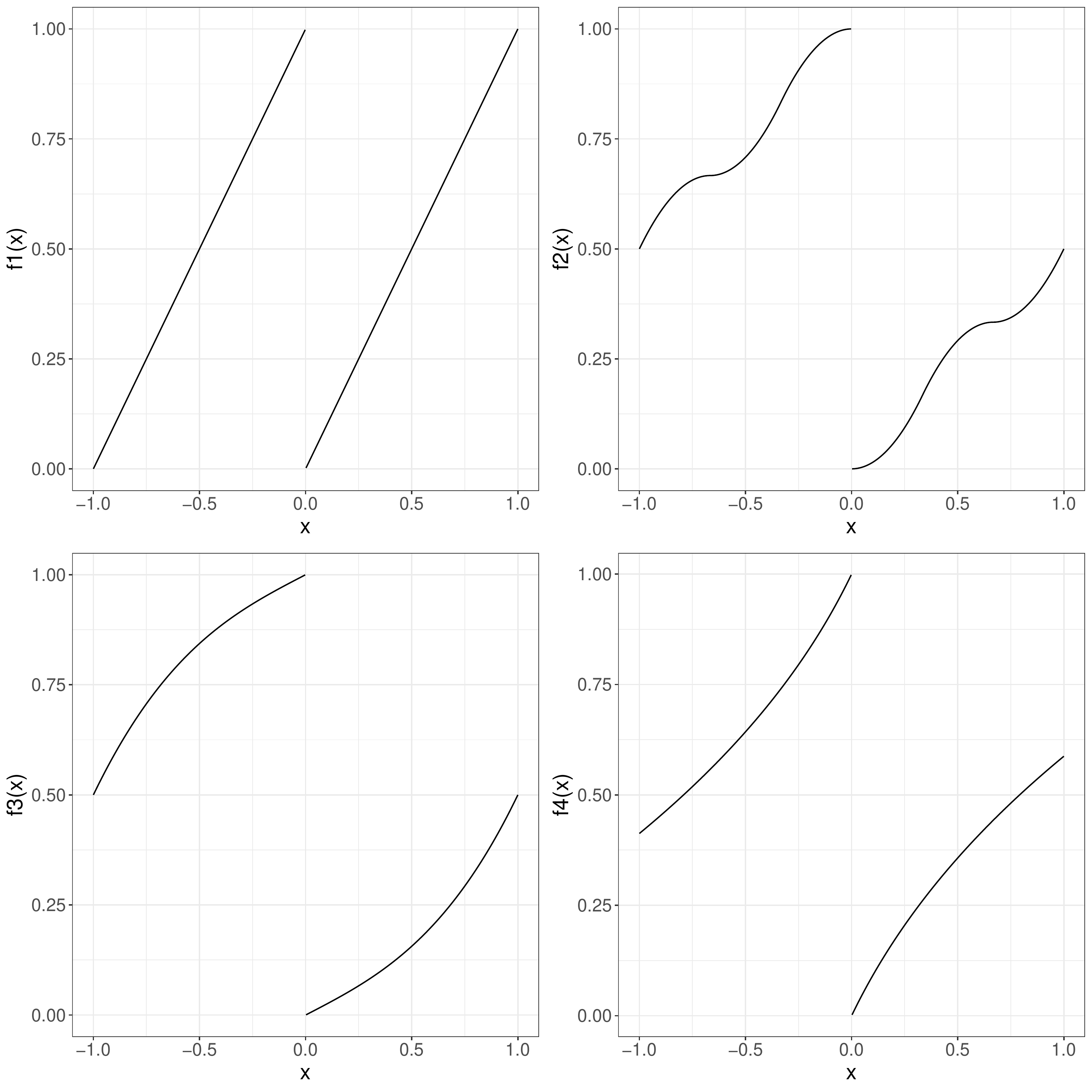}
  \caption{Regression function values on $x \in [-1, 1]$.}
  \label{fig:reg fcns 1dim}
\end{figure}

\end{document}